\newtheorem{assumption}{Assumption}
\newtheorem{proposition}{Proposition}
\newtheorem{theorem}{Theorem}
\newtheorem{lemma}{Lemma}
\newtheorem{corollary}{Corollary}
\newtheorem{definition}{Definition}
\newtheorem{remark}{Remark}
\newcommand{\set}[1]{\mathcal{#1}}
\newcommand{\expect}[1]{\mathbb E\left[#1\right]}
\renewcommand{\Pr}{\mathbb{P}}
\newcommand{\thmnamerestated}{placeholder}
\begin{document}

\title{Optimal Load Balancing in Bipartite Graphs}


\author{
  Wentao Weng\\
  Institute for Interdisciplinary Information Sciences \\
  Tsinghua University\\
  \texttt{wwt17@mails.tsinghua.edu.cn} \\
  \And
  Xingyu Zhou\\
  ECE \\
  Ohio State University\\
  \texttt{zhou.2055@osu.edu} \\
  \And
  R. Srikant \\
  C3.ai DTI, CSL and ECE \\
  University of Illinois at Urbana-Champaign \\
  \texttt{rsrikant@illinois.edu}
}
\date{}

\maketitle

\begin{abstract}
Applications in cloud platforms motivate the study of efficient load balancing under job-server constraints and server heterogeneity. In this paper, we study load balancing on a bipartite graph where left nodes correspond to job types and right nodes correspond to servers, with each edge indicating that a job type can be served by a server. Thus edges represent locality constraints, i.e., each job can only be served at servers which contained certain data and/or machine learning (ML) models. Servers in this system can have heterogeneous service rates. In this setting, we investigate the performance of two policies named Join-the-Fastest-of-the-Shortest-Queue (JFSQ) and Join-the-Fastest-of-the-Idle-Queue (JFIQ), which are simple variants of Join-the-Shortest-Queue and Join-the-Idle-Queue, where ties are broken in favor of the fastest servers. Under a ``well-connected'' graph condition, we show that JFSQ and JFIQ are asymptotically optimal in the mean response time when the number of servers goes to infinity. In addition to asymptotic optimality, we also obtain upper bounds on the mean response time for finite-size systems. We further show that the well-connectedness condition can be satisfied by a random bipartite graph construction with relatively sparse connectivity.
\end{abstract}

\section{Introduction}
Many applications that use data centers, cloud computing systems and other data analytic platforms, including Web search engines \cite{Google-search}, cloud computing service \cite{Amazon-aws}, large-scale data processing \cite{dean2008mapreduce}, and cloud storage have extremely stringent latency requirements. Ultra low latency guarantees in these applications not only provide smooth user experience, but help improve company profits \cite{dean2013}. 

A key component for achieving a fast response in the aforementioned systems are load balancing algorithms, which are responsible for dispatching jobs to parallel servers. Motivated by the demanding requirement of a low latency, there has been a line of recent research that aims to design smart load balancing algorithms with delay performance guarantees. They often focus on the classical load balancing model, where there are $N$ identical servers with exponential service times and a dispatcher that assigns Poisson arrivals to one of the servers. It has been shown that in this setting that a class of load balancing policies including Join-the-Shortest-Queue (JSQ), Join-the-Idle-Queue (JIQ) \cite{lu2011join} and variants of the Power-of-d-Choices (Pod) \cite{mitzenmacher2001power,vvedenskaya1996queueing} which sample a sufficiently large number of queues or exploit the parallelism of tasks within a job are able to achieve asymptotically zero waiting time for a sufficiently large $N$. 

However, the above classical load balancing model may not be appropriate for certain modern cloud computing and data analytic applications due to the presence of job-server constraints. Under such constraints, a job can only be dispatched to a subset of the $N$ servers. These constraints, often called locality constraints, are quite common in large-scale Machine Learning as a Service (MLaaS) and serverless computing services supported by cloud computing platforms (e.g., Microsoft Azure \cite{Microsoft-azure}, Amazon Web Services \cite{Amazon-aws}, Google Cloud \cite{Google-cloud}). To give a concrete example, let us consider MLaaS. In this setting, various well-trained machine learning models are deployed on cloud platforms, say deep convolutional neural network (CNN) models for image classification and natural language processing (NLP) models. A user’s image classification request can only be sent to the servers on which the CNN models have been loaded. As a result, it is not appropriate to assume that every request can be served by any server in the system. Other examples in which there are inherent job-server constraints include online video services, such as TikTok, Netflix and Youtube. In these applications, user requests can only be sent to servers with the required data (e.g., movies, music). The ultimate goal in all these modern applications is to achieve a fast response time and efficient resource (e.g., number of servers) usage  while satisfying job-server constraints.

Inspired by these applications, in this paper, we take into account job-server constraints by considering a bipartite load balancing model. In this model, job-server constraints are abstracted by the edges in a bipartite graph, where the left nodes are called ports and the right nodes are called servers. In the model, each port represents a job of a particular type which requires a specific chunk of data or a specific machine learning model to execute, and thus can only be routed to specific servers. Each port $\ell$ corresponds to Poisson job arrivals with rate $\lambda_{\ell}$. A job from a port $\ell$ can only be sent to server $r$ such that $(\ell,r)$ is an edge of the graph. Jobs routed to a server $r$ are queued in a buffer, and get service in a first-come first-server manner. The service time of each job at server $r$ is exponentially distributed with rate $\mu_r$ (possibly different). 

To the best of our knowledge, this bipartite graph model was only introduced recently in \cite{cruise2020stability}, where JSQ is shown to be throughput optimal while no delay performance guarantee is provided. The bipartite graph model generalizes the load balancing model on graphs introduced in \cite{mukherjee2018asymptotically,budhiraja2019supermarket}. In their model, jobs arrive at each node with a homogeneous rate, and each job can be served by the node it arrives and its neighbors. It has been shown that in this setting JSQ achieves zero delays under certain assumptions on graph connectivity \cite{mukherjee2018asymptotically}. 

Inspired by the discussions above,  we are particularly interested in the following question: 

\textit{
Are there simple policies that can achieve optimal response time in modern load balancing systems with both job-server constraints and service-rate heterogeneity?
}  

\subsection{Main Contribution} 
This paper affirmatively answers the above question by presenting optimal policies as well as performance bounds on the mean response time. The detailed contributions can be summarized as follows.

First, we consider two policies: Join-the-Fastest-of-the-Shortest-Queues (JFSQ), and Join-the-Fastest-of-the-Idle-Queues (JFIQ). 
We show that, under a ‘well-connected’ graph condition,  they can asymptotically achieve the minimum response time in both the many-server regime (the system load $\lambda < 1$ is a constant while the number of servers $N \to \infty$) and sub Halfin-Whitt (HW) regime ($\lambda = 1 - N^{-\alpha}$ with $\alpha < 0.5$). The minimum response time metric is more stringent than the common "zero queueing delays" discussed before, and is especially important in systems with heterogeneous servers. JFSQ and JFIQ are simple variants of JSQ and JIQ adapted to job-server constraints, but they break ties in JSQ and JIQ by choosing the fastest servers. Consequently, our results imply that JSQ and JIQ have asymptotic zero waiting time for homogeneous servers. They are practical since they only need comparisons between service speed rather than the exact service rates of servers. In addition to the asymptotic result, we also obtained finite-system bounds on the mean response time. Roughly speaking, we show that the difference between the mean response time in an $N$-server system and that in the limit is bounded by $O\left(\epsilon+((1-\lambda)\epsilon N)^{-1/2}\right)$, where $\epsilon$ is a parameter related to the well-connectedness of the underlying bipartite graph, and $\lambda$ reflects the load of the system.

Second, our theoretical results provide practical guidance in designing modern load balancing systems. Besides the two simple but efficient algorithms, the underlying  ‘well-connected’ condition sheds light on the efficient deployment of various ML models or the required data among the servers. In particular, the key message is that each movie on Netflix or each ML model deployed on Microsoft Azure only needs to be loaded in $\omega(1)$ servers. To give a concrete example, we show that if edges in the bipartite graph are randomly generated according to some given probabilities, then the graph is "well-connected" with high probability. Let $L$ be the number of kinds of jobs, and $N$ be the number of servers. Our result indicates that on average, the graph only needs $\omega\left(\frac{L+N}{(1-\lambda)^2}\right)$ connections to be "well-connected". And if the arrival rates of jobs are uniform, then this number can be reduced to $\omega\left(\frac{L+N}{1-\lambda}\ln\frac{1}{1-\lambda}\right)$.

A key theoretical contribution of the paper is showing that a recently-developed Lyapunov drift method for studying parallel-server queueing systems can be generalized to bipartite graphs using two key ideas: (i) we demonstrate something akin to state-space collapse and resource pooling by exploiting the connectivity structure of the graph, and (ii) apply this idea iteratively twice, once to bound the number of jobs in fast servers that are busy in the large-system limit and a second time to bound the number of jobs in slow servers that are idle in the limit using a conditional geometric tail bound.

\subsection{Related Work}
There is a vast literature on efficient load balancing policies, mostly in the classical load balancing setting where there are $N$ identical servers and the service rate is exponentially distributed. Upon arrival, each job can be sent to any of the $N$ servers. It is now well-known that in this setting JSQ is optimal \cite{Weber78} in a stochastic ordering sense. However, obtaining the exact steady state performance of JSQ is difficult. The problem is partly solved in \cite{eschenfeldt2018} which establishes that the scaled queue length process of JSQ converges to a two-dimensional Ornstein-Uhlenbeck process, and the fraction of waiting jobs vanishes in the Halfin-Whitt heavy traffic regime. Although this result is on the process level, it is later confirmed for the steady state distribution by \cite{braverman2020}. The tail of the distribution is further studied in \cite{banerjee2019join}. 

Since JSQ has significant communication overhead in large-scale systems, alternative policies have been proposed and analyzed. One prominent policy is Power-of-$d$-Choices (Pod). In Pod, each arrival of jobs probes $d$ random servers, and joins the one with the shortest queue. \cite{mukherjee2018} first shows that if $d \to \infty$, then both the fluid limit and the state occupancy distribution of Pod coincides with that of JSQ in many-server limit. It implies that Pod has zero waiting time in many-server limit. \cite{mukherjee2018} also prove that the diffusion limit of Pod is the same as JSQ if $d = \omega(\sqrt{N}\log N)$ in the Halfin-Whitt heavy traffic regime, but it does not induce steady-state performance. For the many-server regime, a line of works \cite{gamarnik_2018,gamarnik2020lower} study the minimum required resources (such as memory, and communication overhead) to achieve zero waiting time. 

When the system load $\lambda$ can also approach $1$ as $N$ increases (i.e. many-server heavy-traffic regime), \cite{liu2018achieving} shows that Pod can achieve asymptotic zero waiting time if $d = \omega\left(\frac{1}{1-\lambda}\right)$ when $1-\lambda = \omega(N^{-1/6})$. For a heavier-traffic regime, a recent breakthrough is the work \cite{LiuYing2020}. In the sub Halfin-Whitt regime ($1-\lambda=\omega(N^{-0.5})$), this work establishes asymptotic zero waiting property for a large class of policies including JSQ, JIQ and Pod with $d =O(\frac{\log N}{1-\lambda})$. The result is later extended to the Beyond-Halfin-Whitt regime ( $1-\lambda=\omega(N^{-1})$) \cite{LiuYing2019-universal}, and to Coxian-2 service time distribution \cite{LiuYing2020-Coxian}. When $1-\lambda=O(N^{-1})$, it is known that the waiting time must be positive for all load balancing policies \cite{atar2012diffusion,gupta2019load}. When jobs are divisible, \cite{WengWang20, mukherjee2018} shows similar result for Batch Sampling \cite{ousterhout2013sparrow} and Batch-Filling \cite{Ying2017}, which are batch variants of Pod.  

Proving optimality of load balancing algorithms is more complicated when servers are heterogeneous. Simple heuristics, nevertheless, are proposed in decades. We note that a policy called \textit{Never Queue} policy which is very similar to JFIQ was proposed in \cite{shenker1989optimal}. The Never Queue policy is analyzed in the case of a centralized queue, but not for load balancing systems. Many studies have focused on the heavy traffic regime where the system load converges to $1$ while the number of servers is fixed. In this regime, JSQ was shown to be delay optimal by the drift method \cite{EryilmazSrikant2012}. Later, \cite{zhou2018heavy} proves that a threshold policy is heavy-traffic optimal. The stability and optimality in heavy traffic of Pod  for heterogeneous servers studied recently by \cite{hurtado2020throughput}. Moreover, \cite{zhou2018flexible} provides a simple criteria for load balancing algorithms to be heavy-traffic optimal. The assumption of heavy traffic can be relaxed to many-server heavy traffic regime when $1-\lambda = o(N^{-4})$ \cite{hurtado2020load,zhou2020note}. Nevertheless, the results mentioned above do not imply fast mean response time in the many-server regime, which is more practical for cloud platforms. For the many-server regime, work in \cite{stolyar_2015} shows that JIQ has asymptotic zero waiting time as $N \to \infty$. However, this does not imply optimal mean response time since the service time of jobs varies in different servers. A recent work \cite{gardner2020scalable} takes heterogeneity into accounts by studying a system with fast and slow servers. Although \cite{gardner2020scalable} obtains mean-field limit for a variant policy of Pod, the result does not imply optimal mean response time. 

Load balancing with job-server constraints are not considered in the literature until recent years. To the best of our knowledge, \cite{moharir2015online} is the first paper that considers load balancing with job-server constraints and proposes an online load balancing algorithm with the optimal competitive ratio. However, their model is not stochastic, and is thus quite different from the model we are considering in this paper. \citet{cruise2020stability} considers the stability of JSQ on the same model as ours while no delay guarantee is provided. In \citet{cardinaels2020redundancy}, redundancy policies are explored in bipartite load balancing. They obtain a product-form steady state distribution which however does not imply an optimal mean response time. Besides these papers, there are also studies for load balancing on graphs. In \cite{turner1998effect,gast2015power,budhiraja2019supermarket}, the impact of the graph structure on the performance of Pod is studied. \citet{mukherjee2018asymptotically} utilizes a stochastic coupling method to prove that JSQ on graph can have the same performance as JSQ in the classical load balancing model in both the many-server regime and the Halfin-Whitt regime under certain graph constraints. Therefore, it implies that JSQ can also achieve zero waiting time in the many-server regime for a graph-based model. However, the model in \cite{mukherjee2018asymptotically} only considers identical servers and homogeneous arrival rates of jobs, which is a special case of this paper. 

We note that if servers share a central queue, then the bipartite graph model turns into the skill-based model studied in the call center literature \cite{gardner2020product, cardinaels2020redundancy}. It is shown in \cite{gardner2020product} (and the references within) that the stationary distributions under several redundancy policies have product forms. One related result to us is that our model becomes the same as a skill-based model, and thus enjoys a product-form stationary distribution, if we send a job to a connected server with least amount of work in its buffer \cite{gardner2020product,cardinaels2020redundancy}. Such policy is, however, impractical since workloads of jobs in cloud platforms suffer from volatility. Also, as \cite{gardner2020product} has pointed out, it is non-trivial to obtain bounds on mean response time just from the product-form results. 

Our bipartite graph model also resembles other problems in the literature. One particular model is the job-server affinity model for data locality problems studied in \cite{cardinaels2019job,xie2015priority,xie2016scheduling,wang2014maptask}. In the job-server affinity model, if one job is served by a server with its data, it has a fast constant service rate. Otherwise, it has a slow service rate, meaning that this sever has to fetch data from somewhere. However, the setting is not suitable in the context of MLaaS we discussed above. Here ML models are usually reconfigured on machines periodically, and a new request will only be routed to those servers with needed model \cite{gujarati2017}. Also, previous studies on job-server affinity models can only guarantee heavy-traffic delay optimality \cite{xie2015priority,xie2016scheduling,wang2014maptask}, which does not induce extremely fast mean response time required in cloud platforms. 

From a methodological perspective, our paper builds on the drift method to obtain performance bounds. In this method, one exploits the fact that the steady-state expectation of suitable functions of the state of a Markov process does not change with time. This idea was developed in~\cite{EryilmazSrikant2012,maguluri2016heavy,WangMaguluri18} for the heavy-traffic regime where the idea of using the tail bounds of \cite{hajek1982hitting,BertsimasGarmarnik01} to prove state-state collapse or resource pooling was introduced. The recent work in \cite{LiuYing2020} developed a parallel approach for the many-server regime where they introduced the notion of generator coupling inspired by Stein's method in \cite{ying2017stein,braverman2017stein,gurvich2014diffusion,stolyar2015tightness} and designed a clever Lyapunov coupling to show that, for JSQ-type policies, the number of homogeneous servers utilized is large when the backlog is large. We will call this latter idea \emph{state-space collapse} since it is similar to the notion of state-space collapse in the heavy-traffic regime. In this paper, we introduce new ideas to expand the applicability of the techniques \cite{LiuYing2020} to networks of heterogeneous servers. 

Contemporaneous to our work, in \cite{Rutten2020}, the authors study the waiting time of JSQ(d) policies in bipartite graphs in the limit as the size of the graph goes to infinity. While the papers are motivated by related problems, the models and routing policies studied, and the results in the two papers are different. The authors in \cite{Rutten2020} consider the case of homogeneous servers with infinite buffers, and show that the performance of JSQ(d) in a bipartite graph with limited connectivity converges to the performance of the fully flexible system in terms of queue length (or waiting time) under appropriate connectivity conditions. In addition, they prove that the occupancy in steady state of the limited-connectivity system converges to the steady state of the fully flexible system. Our paper considers the case of heterogenous arrival and service rates with finite buffers, and shows that the waiting time in the queue and blocking probability both go to zero in the large-system limit under the JFIQ and JFSQ routing policies. Additionally, the techniques used in the two papers are different. We use the drift method to obtain performance bounds for finite-sized systems while \cite{Rutten2020} uses process-level convergence techniques.

\begin{figure}
    \centering
    \includegraphics[scale=0.25]{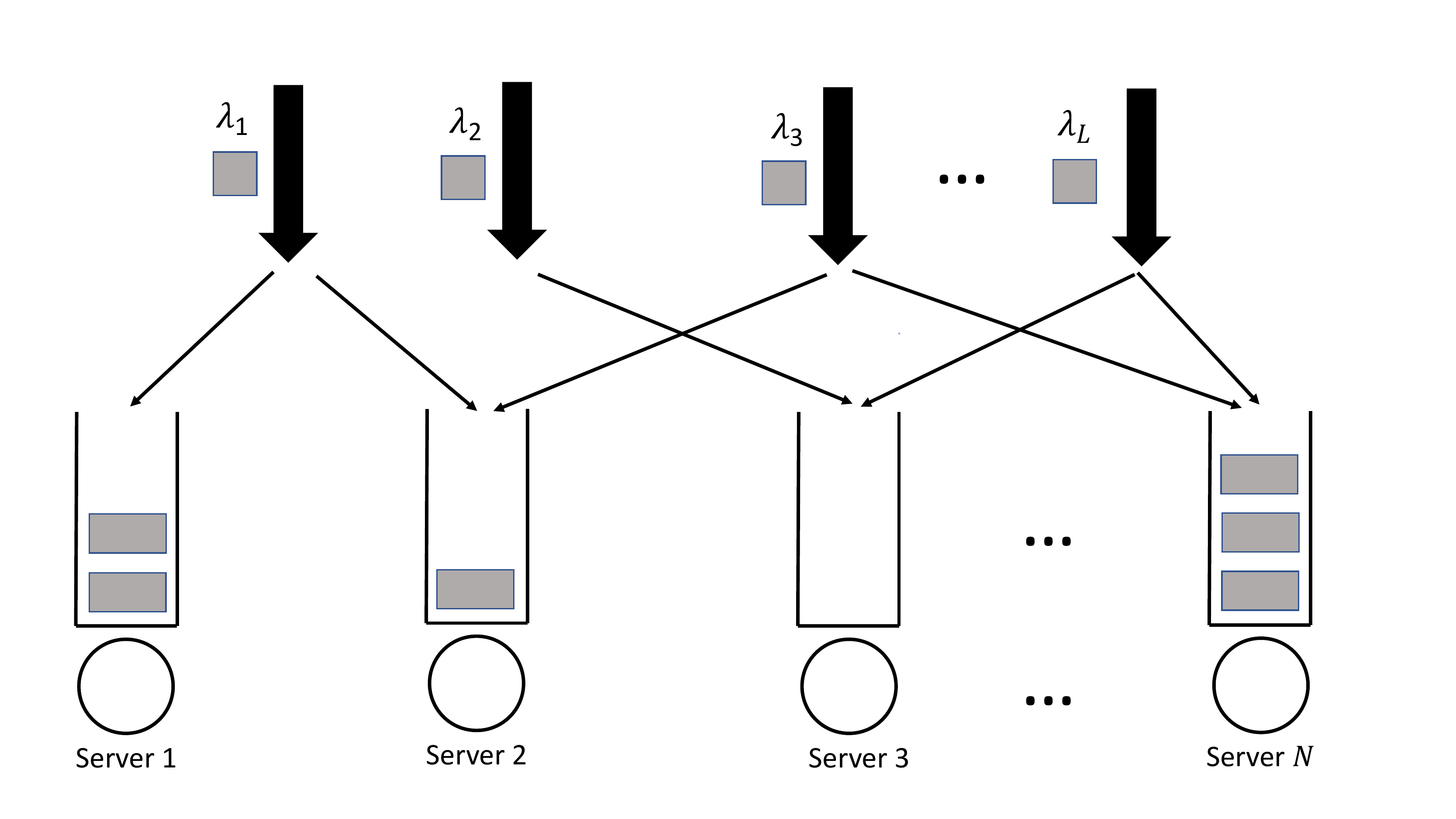}
    \caption{An example of the bipartite graph model. In this instance, jobs from port $1$ can only be routed to server $1$ and server $2$.}
    \label{fig:model}
\end{figure}
\section{Model}
We consider load balancing in a bipartite graph $G=(\set{L},\set{R},E)$ where $\set{L}$ and $\set{R}$ are the set of left nodes and right nodes, respectively, and $E$ is the set of edges between these two sets of nodes. Nodes in $\set{L}$ are indexed as $\{1,2,\cdots,L\}$ with $L = |\set{L}|$, and nodes in $\set{R}$ are indexed as $\{1,2,\cdots,N\}$ with $N = |\set{R}|$. For a node $\ell \in \set{L}$ (or $r \in \set{R}$), define $\set{N}_{L}(\ell)$ (or $\set{N}_{R}(r)$) to be the set of right (or left) nodes it connects with. W.L.O.G., every $\set{N}_{L}(\ell),\set{N}_{R}(r)$ is assumed to be non-empty. To distinguish between left and right nodes, we may refer to a node $\ell \in \set{L}$ as port $\ell$, and a node $r \in \set{R}$ as server $r$. See Fig. \ref{fig:model} for an illustration. 

Jobs arrive at port $\ell$ according to a Poisson process with rate $\lambda_{\ell}$, and the goal is to route them to one of the servers connected to $\ell$ so as to minimize a certain performance metric of interest. It is assumed that every server has a finite buffer of size $b$. When a job is routed to a server that is currently processing another job, this new arrival will be placed in the buffer. But if there are already $b$ jobs (including the one being served), the new arrival is blocked and lost forever. We assume that jobs in the buffer are served in a first-come-first-serve manner. The queue length $Q_r$ of a server $r$ is the number of jobs in the buffer plus one if there is a job running on the server.

To reflect the nature of server heterogeneity in a practical load balancing system, we assume that there are $M$ types of servers. For a type $m$ server, the service time of a job running on it is assumed to be exponentially distributed with mean $\frac{1}{\mu_m}$. The arrival processes to the ports and the service times of jobs are assumed to be independent. Denote the number of type $m$ servers by $N_m$, and the type of a server $r$ by $t_r$. Equivalently, we can write $N_m = N\alpha_m$ with $\alpha_m \in (0,1), \sum_{m=1}^M \alpha_m = 1$. We assume that there is sufficient service capacity, i.e.,  $\lambda_{\Sigma} = \sum_{\ell=1}^L \lambda_{\ell} < N\sum_{m=1}^M \mu_m\alpha_m.$ W.L.O.G., we assume $\mu_1 > \mu_2 > \cdots > \mu_M > 0$ since we can always reorder the types of servers.  

We study two routing policies, Join-the-Fastest-of-the-Shortest-Queues (JFSQ) and Join-the-Fastest-of-the-Idle-Queues (JFIQ) in bipartite load balancing systems. For JFSQ, upon the arrival of a job at port $\ell$, we select a server $r$ connected to port $\ell$ with the shortest queue length, that is, $r \in \arg\min_{r \in \set{N}_L(\ell)} Q_r$. If there are multiple such servers, we select the one with the fastest service rate, i.e. largest $\mu_{t_r}$, and break ties (if any) by randomly choosing one server. Alternatively, if we use JFIQ, we find an idle server $r \in \set{N}_L(\ell)$ with the fastest service rate. If there is no idle servers, we select one server from $\set{N}_L(\ell)$ randomly. The question of interest in this paper is whether these two policies can achieve optimal job delays (at least for a large system) under appropriate conditions on the underlying bipartite graph. We note that our routing policies JFIQ and JFSQ reduce to JIQ and JSQ, respectively, when all servers have the same service rates.

\subsection{State Representation}
Before we proceed to state our results, we first state the notation that we will use in the paper. We use capital letters to denote random variables, such as $Q_r(t)$ for the queue length of server $r$ at time $t$, and small letters to denote realizations. 

Clearly, for the system considered in this paper, the sequence $\{\mathbf{Q}(t) = (Q_1(t),\cdots,Q_N(t))\}$ forms a Continuous Time Markov chain (CTMC). Since the buffers are finite, there is a unique stationary distribution of $\mathbf{Q}(t)$. For each state $\mathbf{q} = (q_1,\cdots,q_N)$, let 
\[
s_{m,i}(\mathbf{q}) = \frac{1}{N}\left|\left\{r \in \set{R}\colon q_r \geq i, t_r = m\right\}\right|
\]
be the fraction of type $m$ servers with queue length at least $i$. Besides, let 
\[
C_m(\mathbf{q}) = \sum_{i=1}^b s_{m,i}(\mathbf{q}), W(\mathbf{q}) = \sum_{m=1}^K \mu_m s_{m,1}(\mathbf{q}),
\]
which is the normalized (divided by $N$) number of jobs in type $m$ servers, and the rate to complete a job if we only consider the first $K$ types of servers. 
\paragraph{Notation:} As mentioned earlier, capital letters are reserved for random variables (such as $\mathbf{Q}(t)$ for queue lengths at time $t$), and small letters are for realizations (such as $\mathbf{q}$ for a queue-length state). We add a line on top of a variable meaning that it is in steady state (such as $\bar{\mathbf{Q}}$). This paper makes use of asymptotic notations. For two positive functions $f(x),g(x)$, we write $f(x)=o(g(x))$ if $\sup \lim_{x \to \infty} \frac{f(x)}{g(x)} = 0$; write $f(x)=O(g(x))$ if $\sup \lim_{x \to \infty} \frac{f(x)}{g(x)} < \infty$; write $f(x) = \Omega(g(x))$ if $\inf \lim_{x \to \infty} \frac{f(x)}{g(x)}>0$; write $f(x) = \omega(g(x))$ if $\inf \lim_{x \to \infty} \frac{f(x)}{g(x)}=\infty$.

\section{Main Results}
We summarize our main results in this section. To be specific, our results provide an upper bound of the mean number jobs in the system under certain assumptions. This upper bound can directly imply asymptotic optimality of JFSQ and JFIQ in the sense of minimum mean response time, which we will define explicitly later. We also give a random graph construction of the graph $G$ such that $G$ can satisfy Assumption \ref{as:dense} with high probability.

\subsection{Upper Bound of the Mean Number of Jobs}

Let $K$ be the minimum value such that $N\sum_{m=1}^K \mu_m\alpha_m > \lambda_{\Sigma}$. Such a $K$ must exist by the assumption of sufficient service capacity. Assume that $\lambda_{\Sigma} = N\sum_{m=1}^K\mu_m\alpha_m(1-\beta)$ where $0 <\beta \leq 1$, and denote $\lambda = \frac{\lambda_{\Sigma}}{N}$. Let
\[
C_1^* = \alpha_1,\cdots,C_{K-1}^*=\alpha_{K-1},C_K^*=\frac{\lambda-\sum_{m=1}^{K-1}\mu_m\alpha_m}{\mu_K},
\]
and let $C^* = \sum_{m=1}^K C^*_m$. Such definition is motivated by the mean-field limit of our system, which will be illustrated later. The following result provides lower bounds for the expected service time of each job, and the mean number of jobs in the system. 
\begin{proposition}\label{thm:lower-bound}
Suppose that the buffer size is infinite, i.e. $b = \infty$. Let $\bar{Z}$ be the random variable denoting the service time of one job. Then for any stable policy, the mean number of jobs in the system is lower bounded by $NC^*$, and 
\begin{equation}
\expect{\bar{Z}} \geq \frac{C^*}{\lambda}.
\end{equation}
\end{proposition}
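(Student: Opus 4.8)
The plan is to bound both quantities through the steady-state number of busy servers of each type, combined with a rate-conservation (Little's law) argument. First I would isolate the relevant conserved quantity. Working in steady state, write $B_m := N\,s_{m,1}(\bar{\mathbf{Q}})$ for the number of busy type-$m$ servers. Since the policy is stable and the buffers are infinite, no job is ever lost, so the long-run job-completion rate equals the arrival rate $\lambda_{\Sigma}$; because completions at a busy type-$m$ server occur at rate $\mu_m$, rate conservation for the (positive-recurrent) CTMC gives
\[
\sum_{m=1}^{M}\mu_m\,\expect{B_m} \;=\; \lambda_{\Sigma},
\]
while $0\le B_m\le N\alpha_m$ pathwise, hence $0\le \expect{B_m}\le N\alpha_m$.

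Next I would prove a deterministic ``water-filling'' lower bound: every vector $(x_1,\dots,x_M)$ with $0\le x_m\le N\alpha_m$ and $\sum_m\mu_m x_m=\lambda_{\Sigma}$ satisfies $\sum_m x_m\ge NC^{*}$. Using the minimality of $K$ one first checks $0\le C_K^{*}\le\alpha_K$, so the ``fill the fastest servers first'' allocation $x_m^{*}=N\alpha_m$ for $m<K$, $x_K^{*}=NC_K^{*}$, $x_m^{*}=0$ for $m>K$ is feasible with $\sum_m x_m^{*}=NC^{*}$. For a general feasible $x$ set $\delta_m=x_m-x_m^{*}$; then $\sum_m\mu_m\delta_m=0$, while feasibility forces $\delta_m\le 0$ for $m<K$ (there $x_m^{*}$ sits at its upper bound) and $\delta_m\ge 0$ for $m>K$ (there $x_m^{*}=0$). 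Since $\mu_m>\mu_K$ when $m<K$ and $\mu_m<\mu_K$ when $m>K$, in every coordinate $\mu_m\delta_m\le\mu_K\delta_m$, hence $0=\sum_m\mu_m\delta_m\le\mu_K\sum_m\delta_m$, so $\sum_m\delta_m\ge 0$. Applying this with $x_m=\expect{B_m}$ gives $\expect{\sum_{m}B_m}\ge NC^{*}$.

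It then remains to read off the two bounds. Every busy server holds at least one job, so the number of jobs in the system is at least $\sum_m B_m$, and taking expectations gives the first claim. For the service time, note that the jobs currently in service number exactly $\sum_m B_m$, that they enter and leave this ``in-service'' subsystem at long-run rate $\lambda_{\Sigma}$ (again by stability and no losses), and that each job stays there precisely for its service time $\bar Z$; Little's law applied to this subsystem gives $\expect{\sum_m B_m}=\lambda_{\Sigma}\,\expect{\bar Z}$, whence $\expect{\bar Z}=\expect{\sum_m B_m}/\lambda_{\Sigma}\ge NC^{*}/\lambda_{\Sigma}=C^{*}/\lambda$. (Equivalently, one could argue directly that $\expect{\bar Z}=\sum_m p_m/\mu_m$ for the long-run routing fractions $p_m$, subject to $\sum_m p_m=1$ and $p_m\le\alpha_m\mu_m/\lambda$, and solve the same linear program.)

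I expect the only delicate point to be the rate-conservation identity displayed above together with the Little's-law step: both rely on positive recurrence of the CTMC (the content of ``stable policy'') and on $\expect{B_m}\le N\alpha_m<\infty$, which makes all the expectations finite and the rate equations valid. I would invoke the standard rate-conservation / ergodic-theorem facts for Markov chains rather than reprove them. The exchange argument and the bookkeeping verifying $0\le C_K^{*}\le\alpha_K$ are routine.
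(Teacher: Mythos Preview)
Your proposal is correct and follows essentially the same route as the paper: both reduce the problem to the linear program of minimizing $\sum_m x_m$ subject to $\sum_m \mu_m x_m=\lambda$ and $0\le x_m\le\alpha_m$, with $x_m=\expect{s_{m,1}(\bar{\mathbf{Q}})}$, and both identify the water-filling allocation $x_m^{*}=C_m^{*}$ as the optimizer. The paper reaches $\expect{\bar Z}=\sum_m x_m/\lambda$ via the routing fractions $I_m$ and the per-type flow balance $\lambda_\Sigma I_m=N\mu_m\expect{\bar S_{m,1}}$ (which you also mention parenthetically), whereas you get the same identity from Little's law applied to the in-service subsystem; and you spell out the LP optimality via an exchange argument while the paper simply asserts it. These are cosmetic differences only.
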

The proof is provided in the appendix.

For every $1 \leq m \leq K$, let $\set{R}_m$ be the set of servers of types $1$ through $m$. Let $\hat{\beta} = \beta \sum_{m=1}^K \alpha_m$, and $\epsilon$ be a number in $(0,\frac{\hat{\beta}}{4}];$ we call $\epsilon$ the approximation error since we will later use this parameter to characterize the near optimality of our routing policies. For any subset $\set{I} \subseteq \set{R}$, define $\set{N}_{\set{R}}(\set{I}) = \cup_{r \in \set{I}} \set{N}_{\set{R}}(r)$ to be the set of ports connected to at least one server in $\set{I}$, and $D_{\set{I}} = \sum_{\ell \not \in \set{N}_{\set{R}}(\set{I})} \lambda_{\ell}$ be the sum of arrival rates at ports not connected to $\set{I}$. 
Before stating our results on JFSQ and JFIQ, we first make a few assumptions on the system. Let $\tau_{1K}=\frac{\mu_1}{\mu_K}, \tau_{1M}=\frac{\mu_1}{\mu_M}, \tau_{KM}=\frac{\mu_K}{\mu_M}$.

\begin{assumption}[Buffer Size]\label{as:buffer}
For a fixed approximation parameter $\epsilon$ in $(0,\frac{\hat{\beta}}{4}],$ the buffer size $b$ satisfies
$
6\sqrt{\tau_{1K}} \leq b \leq \left\lfloor \left(\frac{\epsilon^2 N}{1152\tau_{1K}\ln{N}}\right)^{1/5}\right \rfloor.
$
\end{assumption}
\begin{assumption}[Well Connectedness]\label{as:dense}
The graph $G$ satisfies the following conditions:
\begin{itemize}
\item $D_{\set{I}} \leq N\tilde{d_1}$ for any $\set{I} \subseteq \set{R}_{K-1}$ with $|\set{I}| \geq Np_1$;
\item $ D_{\set{I}} \leq N\tilde{d}_2$ for any $\set{I} \subseteq \set{R}_K$ with $|\set{I}| \geq Np_2$.  
\end{itemize}
where $p_1 = \frac{\epsilon}{6b^2}, p_2 = \frac{\hat{\beta}}{2}, \tilde{d}_1\leq\frac{\epsilon\mu_K}{12b^3}, \tilde{d}_2 \leq \frac{\epsilon\mu_K}{2b}$.
\end{assumption}
Although there are two constraints, Assumption \ref{as:dense} basically requires that a large enough subset of the first $K$ types of servers must connect with ports with enough arrival rates. Such requirement enables that JFSQ and JFIQ behave almost the same as in a classical load balancing system even though there are additional job-server constraints. We are now ready to state the main result.
\begin{theorem}\label{thm:finite-bound}
Suppose that Assumptions \ref{as:buffer} and \ref{as:dense} hold, and that the routing policy is either JFSQ or JFIQ. Then for a sufficiently large $N$, the following results hold:
\begin{enumerate}
\item[(i)] the expected number of jobs in servers of the first $K$ types divided by $N$ is bounded as 
\begin{equation}\label{eq:njobs-firstK}
\expect{\max\left(\sum_{m=1}^K C_m(\bar{\mathbf{Q}}) - (C^* +\epsilon),0\right)} \leq \frac{52\tau_{1K}b^2}{\epsilon N};
\end{equation}
\item[(ii)] 
if $K < M$, the expected number of jobs in the system divided by $N$ is bounded as 
\begin{equation}\label{eq:bound-total}
\expect{\sum_{m=1}^M C_m(\bar{\mathbf{Q}})} \leq C^*+\left(1+\frac{\tau_{KM}}{2}\right)\epsilon+2\sqrt{\frac{5\tau_{1M}b\ln N}{ N}}+60b^2\sqrt{\frac{26\tau_{1K}\tau_{1M}}{\hat{\beta}\epsilon N}};
\end{equation}
\item[(iii)]
the probability $p_{\set{B}}$ that an arriving job is blocked is bounded as 
\begin{equation}\label{eq:finite-block}
p_{\set{B}} \leq \frac{\tilde{d_2}}{\lambda} + \frac{52\tau_{1K}b^2}{\epsilon N}.
\end{equation}
\end{enumerate}
\end{theorem}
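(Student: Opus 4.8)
The plan is to run the Lyapunov-drift argument of~\cite{LiuYing2020} in two nested stages, using Assumption~\ref{as:dense} to restore---one buffer level at a time---the resource-pooling behaviour that is automatic in the unconstrained model. Throughout, $\{\mathbf Q(t)\}$ is an ergodic finite CTMC, so $\expect{\mathcal G V(\bar{\mathbf Q})}=0$ for every bounded $V$ (with $\mathcal G$ the generator); one chooses $V$ so that this identity, together with a negative-drift estimate and the unit jump sizes of the relevant counts, yields the stated moment bounds via a Hajek-type inequality~\cite{hajek1982hitting}. \emph{Part (i)} establishes state-space collapse onto the first $K$ types. Writing $\Phi(\mathbf q)=\sum_{m=1}^{K}C_m(\mathbf q)$, the count $N\Phi$ has one-sided drift at most $N\lambda-N\sum_{m=1}^{K}\mu_m s_{m,1}(\mathbf q)$ (admitted rate minus the service rate of $\set R_K$), so it suffices to show there is $\delta$ of order $\epsilon/b^{2}$ such that $\Phi(\mathbf q)>C^{*}+\epsilon$ forces the drift of a smoothed Lyapunov function built from $(\Phi-C^{*}-\epsilon)_+$ below $-\delta N$. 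I would prove this by peeling the buffer: for each level $i$ let $\set I_i$ be the set of servers in $\set R_{K-1}$ of queue length exactly $i-1$. If $|\set I_i|\ge Np_1$ for some $i$, the first bullet of Assumption~\ref{as:dense} says all but rate $N\tilde d_1$ of the arrivals come from ports connected to $\set I_i$, and JFIQ (resp.\ JFSQ) routes these into $\set R_{K-1}$ at a queue of length $\le i-1$, which caps the inflow to levels $\ge i$ and pushes $\Phi$ down; if instead $|\set I_i|<Np_1$ at every level, then types $1,\dots,K-1$ are almost entirely busy, any residual mass of $\Phi$ above $C^{*}+\epsilon$ must sit in type-$K$ servers that are then overwhelmingly busy, and $\sum_{m=1}^{K}\mu_m s_{m,1}(\mathbf q)$ beats $\lambda$ by the margin $\delta$. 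The constants $p_1=\epsilon/6b^{2}$ and $\tilde d_1\le\epsilon\mu_K/12b^{3}$ are calibrated exactly so that the two cases share this margin. Feeding $\delta$ and the unit increments of $N\Phi$ into the negative-drift moment bound gives~\eqref{eq:njobs-firstK}.

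\emph{Part (iii): blocking.} By PASTA, $\lambda p_{\set B}=\expect{\sum_{\ell}\lambda_\ell\,\mathbbm{1}\{\text{every server of }\set N_L(\ell)\text{ has queue length }b\}}$. Let $\set J=\{r\in\set R_K:Q_r=b\}$. Ports with no neighbour in $\set R_K$ contribute blocked rate at most $D_{\set R_K}\le N\tilde d_2$ by the second bullet of Assumption~\ref{as:dense} (with $\set I=\set R_K$, noting $|\set R_K|\ge Np_2$); and a port with a neighbour in $\set R_K$ is blocked only if all its $\set R_K$-neighbours lie in $\set J$, so on the event $|\set R_K\setminus\set J|\ge Np_2$ the same bullet bounds the \emph{total} blocked rate by $D_{\set R_K\setminus\set J}\le N\tilde d_2$. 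On the complement $\set A=\{|\set J|>N(\sum_{m\le K}\alpha_m-p_2)\}$ one has $\Phi(\bar{\mathbf Q})\ge b|\set J|/N>C^{*}+\epsilon+\Theta(b)$, so $\Pr(\set A)\le\frac{52\tau_{1K}b^{2}}{\epsilon N}$ by Markov's inequality and Part (i), while the blocked rate there is trivially at most $\lambda N$. Hence $\lambda p_{\set B}\le N\tilde d_2+\lambda N\Pr(\set A)$, which gives~\eqref{eq:finite-block}.

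\emph{Part (ii): second state-space collapse onto the slow servers.} Now bound $\Psi(\mathbf q)=\sum_{m=K+1}^{M}C_m(\mathbf q)$, the normalized job count at servers of types $>K$. By Part (i), for any $t>\epsilon$ the event $\set E_t=\{\Phi(\bar{\mathbf Q})\le C^{*}+t\}$ has probability at least $1-\frac{52\tau_{1K}b^{2}}{\epsilon N(t-\epsilon)}$. Since $\alpha_K-C_K^{*}\ge\hat\beta$, on $\set E_t$ (for $t$ below a constant) at least $Np_2$ servers of $\set R_K$ are idle, so by the second bullet of Assumption~\ref{as:dense} arrivals escape to type-$>K$ servers only at total rate $\le N\tilde d_2\le N\epsilon\mu_K/(2b)$, which is below the service rate of the slow servers whenever their busy mass exceeds $\epsilon\tau_{KM}/(2b)$ per level; repeating the buffer-peeling argument level by level shows that, on $\set E_t$, $\Psi$ has negative drift with unit increments once it exceeds $\frac{\tau_{KM}}{2}\epsilon$, with the per-level escape rate decaying geometrically in the level, hence a geometric conditional upper tail of ratio bounded away from $1$ (this is where $b\ge6\sqrt{\tau_{1K}}$ of Assumption~\ref{as:buffer} enters). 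A conditional negative-drift/geometric-tail estimate then gives $\expect{\Psi(\bar{\mathbf Q})}\le\frac{\tau_{KM}}{2}\epsilon+(\text{tail and below-threshold terms})+(\text{cost of }\set E_t^{c})$, where a union bound over the $O(N)$ servers and $O(b)$ levels used to control $\set R_K$'s occupancy produces the $\sqrt{5\tau_{1M}b\ln N/N}$ term and optimizing $t$ against the Markov penalty produces the $b^{2}\sqrt{\tau_{1K}\tau_{1M}/(\hat\beta\epsilon N)}$ term. Adding $\expect{\Phi(\bar{\mathbf Q})}\le C^{*}+\epsilon+\frac{52\tau_{1K}b^{2}}{\epsilon N}$ from Part (i) then yields~\eqref{eq:bound-total}.

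\emph{Main obstacle.} The delicate step is the negative-drift claim of Part (i). Unlike in the unconstrained model of~\cite{LiuYing2020}, an arrival need not see a short fast queue, so Assumption~\ref{as:dense} only guarantees that \emph{enough} arrival rate does, and this information must be propagated through all $b$ buffer levels while losing only a $\mathrm{poly}(b)$ factor---this is what forces the level-by-level argument and fixes $p_1,\tilde d_1,p_2,\tilde d_2$. The second difficulty is making the conditioning in Part (ii) rigorous: $\set E_t$ is not an absorbing set, so the conditional drift statement must be packaged as a drift inequality that holds on $\set E_t$ together with an explicit accounting of excursions into $\set E_t^{c}$, which is the source of the extra $\sqrt{\ln N}$ and square-root factors.
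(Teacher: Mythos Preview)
Your high-level three-part structure matches the paper, and Parts~(ii) and~(iii) are close in spirit (the paper uses a single Lyapunov $V_3=\Psi$ with a fixed conditioning set $\{\Phi\le C^*+\hat\beta/2\}$ and the conditional geometric-tail lemma, rather than level-by-level peeling or optimizing over $t$; and for blocking it conditions on $\{\Phi\le 3\}$ to guarantee many non-full servers in $\set R_K$). The substantive gap is in Part~(i).

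Your ``buffer-peeling'' dichotomy does not establish the key inequality $W(\mathbf q)=\sum_{m\le K}\mu_m s_{m,1}(\mathbf q)\ge\lambda+\mu_1\delta$ on $\{\Phi>C^*+\epsilon\}$. First, the cases are not well-posed: the sets $\set I_i$ partition $\set R_{K-1}$, so $\sum_i|\set I_i|=N\sum_{m<K}\alpha_m$, and hence ``$|\set I_i|<Np_1$ at every level'' is generically impossible. More importantly, even if you repair the cases to say ``few idle servers in $\set R_{K-1}$'', knowing that types $1,\dots,K-1$ are almost fully busy is not enough: $\Phi>C^*+\epsilon$ could come entirely from waiting jobs (queue length $\ge2$) in the first $K-1$ types, with type-$K$ servers idle, and then $W<\lambda$. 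You therefore need a \emph{second} collapse showing that waiting jobs in the first $K$ types are few, which forces $s_{K,1}$ up to $C_K^*$. The paper does exactly this with two auxiliary Lyapunov functions of $\min$ form,
\[
V_1=\min\Bigl(C_K+\textstyle\sum_{m<K}\sum_{j\ge2}s_{m,j},\ \sum_{m<K}\alpha_m-\sum_{m<K}s_{m,1}\Bigr),\qquad
V_2=\min\Bigl(\textstyle\sum_{m\le K}\sum_{j\ge2}s_{m,j},\ \text{const}-\sum_{m\le K}s_{m,1}\Bigr),
\]
proves negative drift for each above a threshold (this is where Assumption~\ref{as:dense} enters, one bullet per function), gets geometric tails, and then on the high-probability event $\{V_1,V_2\text{ small}\}\cap\{\Phi>C^*+\epsilon\}$ both minima equal their second arguments, yielding $\sum_{m<K}s_{m,1}\approx\sum_{m<K}\alpha_m$ and $s_{K,1}\gtrsim C_K^*$, hence $W\ge\lambda+\mu_1\delta$. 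Finally, this is fed not into a direct drift bound for $\Phi$ but into a Stein/generator-coupling identity with the quadratic test function $g(x)=\max(x-C^*-\epsilon,0)^2/(2\mu_1\delta)$, which is what produces the $1/(\epsilon N)$ rate in~\eqref{eq:njobs-firstK}. Your sketch is missing both the second collapse and the Stein step; without them the argument does not close.
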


\subsection{Asymptotic Optimality}

Theorem~\ref{thm:finite-bound} may be difficult to interpret since there are several parameters involved in the results. So let us interpret the result for an important special case which is perhaps the one that is practically most relevant. Suppose that the normalized arrival rate $\lambda$, the proportions of different types of servers $\{\alpha_m\}$, and $\epsilon$ are fixed. In most practical systems, the number of jobs that can wait at a server is small, so let us suppose that $b$ is a fixed constant satisfying Assumption~\ref{as:dense}. Then, from (\ref{eq:bound-total}), it is clear that the normalized expected number of jobs in the system is asymptotically equal to $C^* + O(\epsilon)$ in the many-server limit. The blocking probability goes to zero provided $\tilde{d}_2=o(1)$ and the rate at which it goes to zero depends on rate at which $\tilde{d}_2$ decreases with $N.$ From Proposition~\ref{thm:lower-bound}, the lower bound on the normalized number of jobs in an infinite buffer system is $C^*.$ This suggests that JFSQ and JFIQ are near-optimal from the perspective of mean response time if the graph is reasonably well connected; we make this argument more general (by allowing many parameters to scale) and precise next.

To study the limit as $N$ approaches infinity, we let $\{G_N=(\set{L}_N,\set{R}_N,E_N), N \geq 1\}$ be a sequence of bipartite graphs such that $|\set{R}_N| = N$ and the buffer size of each server is given by $b_N$. Here, the number of servers, $N$, is allowed to scale, but the server-type distribution $(\alpha_1,\cdots,\alpha_M)$, and the service rate of each type of servers, $(\mu_1,\cdots,\mu_M), \mu_1 > \cdots > \mu_M$, are fixed. Further, the total arrival rates at ports in $\set{L}_N$, $\lambda_{\Sigma}$, is assumed to be equal to $N\sum_{m=1}^K \mu_m\alpha_m (1-\beta_N)$ for all $G_N$. As before, we can define a sequence of parameters $\{\epsilon_N, N \geq 1\}$ that quantify the approximation error where $\epsilon_N \in (0,\frac{\hat{\beta}_N}{4}]$, and $\hat{\beta}_N = \beta_N\sum_{m=1}^K \alpha_m.$ Now we can discuss the asymptotic performance of a routing policy as $N \to \infty$.

Proposition \ref{thm:lower-bound} provides a lower bound on the expected service time of a job in the system with infinite buffers. we thus have the following definition of an (asymptotically) optimal routing policy in the bipartite load balancing system.
\begin{definition}[Optimality in the Mean Response Time Sense] \label{def:optimality}
A stable routing policy is asymptotically optimal in the response time if the mean response time of jobs converges to $\frac{C^*}{\lambda}$ and the blocking probability goes to zero when $N \to \infty$. 
\end{definition}
We can see that optimality in the mean response time is a stronger metric than the common zero-waiting property discussed in the literature \cite{stolyar_2015,gamarnik_2018, LiuYing2020}. With this optimality, not only an arriving job has asymptotically zero waiting time, but it also has the minimum possible service time.

Then Theorem \ref{thm:finite-bound} immediately implies that both JFSQ and JFIQ are asymptotically optimal if the load of the system is moderate and the graph $G_N$ is suitably well connected. 
\begin{corollary}\label{cor:asymptotic}
Suppose that $\epsilon_N$ is both $o(1)$ and $\omega(ln(N)N^{-0.5})$, and that both Assumptions \ref{as:buffer} and \ref{as:dense} hold for $G_N$ when $N$ is sufficiently large. Then as $N \to \infty$, both JFSQ and JFIQ are asymptotically optimal, and the expected queueing delay converges to zero for both policies.
\end{corollary}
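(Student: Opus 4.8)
The plan is to derive the corollary directly from Theorem \ref{thm:finite-bound} by letting $N \to \infty$ and controlling each of the error terms under the stated growth conditions on $\epsilon_N$. First I would set up the bookkeeping: for each $N$ apply Theorem \ref{thm:finite-bound} with parameters $\epsilon = \epsilon_N$, buffer $b = b_N$, and the graph $G_N$, which is legitimate since both Assumptions \ref{as:buffer} and \ref{as:dense} are assumed to hold for large $N$. The fixed quantities $\tau_{1K}, \tau_{1M}, \tau_{KM}, \hat\beta$ (the last is fixed because $\beta_N$ need not be, but note $\hat\beta_N$ may vary — I would either assume $\beta_N$ bounded away from $0$, as is implicit when $\epsilon_N \le \hat\beta_N/4$ and $\epsilon_N = \omega(\ln N \cdot N^{-1/2})$ forces $\hat\beta_N$ not to vanish too fast, or carry $\hat\beta_N$ symbolically and use that $\epsilon_N/\hat\beta_N \le 1/4$). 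The key observation is that Assumption \ref{as:buffer} forces $b_N = O((\epsilon_N^2 N/\ln N)^{1/5})$, so $b_N$ grows slowly; combined with $\epsilon_N = \omega(\ln N \cdot N^{-1/2})$ this will make every error term in \eqref{eq:bound-total} vanish.

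Next I would show the normalized expected number of jobs converges to $C^*$. Starting from \eqref{eq:bound-total}, I bound $\expect{\sum_{m=1}^M C_m(\bar{\mathbf Q})} \le C^* + (1+\tau_{KM}/2)\epsilon_N + 2\sqrt{5\tau_{1M} b_N \ln N / N} + 60 b_N^2 \sqrt{26\tau_{1K}\tau_{1M}/(\hat\beta_N \epsilon_N N)}$. The first correction is $O(\epsilon_N) = o(1)$. For the second, using $b_N = O((\epsilon_N^2 N/\ln N)^{1/5})$ gives $b_N \ln N / N = O(\epsilon_N^{2/5} (\ln N)^{4/5} N^{-4/5})$, which is $o(1)$ since $\epsilon_N \le \hat\beta_N/4$ is bounded. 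For the third, $b_N^2/\sqrt{\epsilon_N N} = O(\epsilon_N^{4/5}(\ln N)^{-2/5} N^{2/5} \cdot \epsilon_N^{-1/2} N^{-1/2}) = O(\epsilon_N^{3/10}(\ln N)^{-2/5} N^{-1/10})$, again $o(1)$ (here I may need $\epsilon_N/\hat\beta_N$ bounded, which holds). Hence $\limsup_N \expect{\sum_m C_m(\bar{\mathbf Q})} \le C^*$; combined with the matching lower bound from Proposition \ref{thm:lower-bound} (which holds for infinite buffers — I would note the finite-buffer system has no more jobs, or invoke that blocking vanishes so the two agree in the limit), we get convergence of the normalized job count to $C^*$. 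Then by Little's law, mean response time $\to C^*/\lambda$, provided $\lambda = \lambda_N$ is bounded away from $0$ (which it is, since $\lambda_N \ge \sum_{m=1}^{K-1}\mu_m\alpha_m > 0$ when $K \ge 2$, and one handles $K=1$ separately using $\lambda_N = \mu_1\alpha_1(1-\beta_N) \ge \mu_1\alpha_1 \cdot 3\hat\beta_N/4$, again bounded below under the $\epsilon_N$ condition — or simply state $\liminf \lambda_N > 0$ as part of ``moderate load'').

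Then I would handle blocking and the zero-queueing-delay claim. From \eqref{eq:finite-block}, $p_{\set B} \le \tilde d_2/\lambda_N + 52\tau_{1K}b_N^2/(\epsilon_N N)$. The second term is $O(b_N^2/(\epsilon_N N)) = O(\epsilon_N^{-3/5}(\ln N)^{-2/5} N^{-3/5}) = o(1)$. For the first term, Assumption \ref{as:dense} only gives $\tilde d_2 \le \epsilon_N \mu_K/(2b_N)$; since $b_N \ge 6\sqrt{\tau_{1K}}$ is bounded below and $\epsilon_N = o(1)$, we get $\tilde d_2 = o(1)$, hence $p_{\set B} \to 0$. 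So the policy is stable (finite buffers) with vanishing blocking and response time $\to C^*/\lambda$, i.e. asymptotically optimal by Definition \ref{def:optimality}. Finally, the queueing-delay statement: the mean response time is (service time) + (waiting time), and by Proposition \ref{thm:lower-bound} the mean service time of a served job is at least $C^*/\lambda$ in the limit; since the total converges to $C^*/\lambda$ and waiting time is nonnegative, the expected waiting time must converge to zero. I expect the main obstacle to be the careful verification that all three residual error terms are $o(1)$ simultaneously under the single joint constraint linking $b_N$, $\epsilon_N$, and $N$ from Assumption \ref{as:buffer} — in particular making sure the exponents of $N$ genuinely come out negative after substituting the worst-case $b_N$, and dealing cleanly with the possibility that $\hat\beta_N$ (equivalently $\beta_N$) is not fixed, which I would resolve by noting $\epsilon_N = \omega(\ln N \cdot N^{-1/2})$ together with $\epsilon_N \le \hat\beta_N/4$ keeps $\hat\beta_N$ from decaying fast enough to spoil the $\hat\beta_N^{-1/2}$ factor relative to the gain from $b_N^2 N^{-1/2}$.
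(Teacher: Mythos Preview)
Your proposal is correct and follows essentially the same route as the paper's proof: apply Theorem~\ref{thm:finite-bound}, use the upper bound on $b_N$ from Assumption~\ref{as:buffer} together with $\epsilon_N = o(1)$, $\epsilon_N = \omega(N^{-1/2}\ln N)$, and $\hat\beta_N \ge 4\epsilon_N$ to show all residual terms in \eqref{eq:bound-total} and \eqref{eq:finite-block} vanish, then invoke Little's law and the decomposition $\expect{T_N} = \expect{T_{\set W}^N} + \expect{Z_N}$ with $\expect{Z_N} \ge C^*/\lambda$. Your explicit term-by-term verification and your attention to the issues of varying $\hat\beta_N$ and the finite-buffer caveat on Proposition~\ref{thm:lower-bound} are in fact more careful than the paper's own appendix proof, which simply asserts the limit without working through the exponents.
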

Due to the relationship between $\beta_N$ and $\epsilon_N,$ it is not difficult to see that asymptotic optimality holds for arrival rates upto the sub-Halfin-Whitt regime. We refer the reader to the appendix for a proof of Corollary \ref{cor:asymptotic}.

\subsection{Random Graph Models}
We now discuss when a bipartite graph can satisfy Assumption \ref{as:dense} in random graph models. Suppose the set of ports $\set{L}$ and the set of servers $\set{R}$ are fixed, but connections between them, i.e., the graph $G$, is not determined. This section considers a random graph $G$ where port $i$ connects with server $j$ with probability $z_{ij}$. We devise an explicit construction of $z_{ij} $ and show that such a random graph can satisfy Assumption \ref{as:dense} with a high probability. Our result first provides the construction of $z_{ij}$ when ports can have different arrival rates. Later, by restricting the scope to homogeneous arrival rates among ports, we give a better construction where the graph $G$ can have fewer edges. We are now ready to state our results.
\begin{theorem}\label{thm:random-graph}
Let $H_j = \frac{2\ln{2}(N+L)/N}{p_j}$ for $j \in \{1,2\}$. Consider the following construction of the graph $G$. For each port $\ell \in \set{L}$, 
\begin{itemize}
\item if $\lambda_{\ell} \geq N\frac{\tilde{d}_1}{H_1}$, this port connects with all servers of types less than $K$;
\item if $\lambda_{\ell} \geq N\frac{\tilde{d}_2}{H_2}$, this port connects with all servers of types equal to $K$;
\item otherwise, for each server $r \in \set{R}$, if $r \in \set{R}_{K-1}$, then $\ell$ connects with $r$ with probability $\frac{\lambda_{\ell} H_1}{N\tilde{d}_1}$. And if $r \in \set{R}_K \setminus \set{R}_{K-1}$, then $\ell$ connects with $r$ with probability $\frac{\lambda_{\ell}H_2}{N\tilde{d}_2}$.
\end{itemize}
Then $G$ satisfies Assumption \ref{as:dense} with probability at least $1 - 2^{-(N+L-1)}.$ The expected total number of edges used in $G_N$ scales as $O(\frac{(N+L)b^5}{\epsilon^2})$. 
\end{theorem}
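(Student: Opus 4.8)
The plan is to show that Assumption~\ref{as:dense} fails only on an event of probability at most $2^{-(N+L-1)}$, and separately to estimate the expected number of edges. The central move is to re-express each of the two conditions in Assumption~\ref{as:dense} as a statement quantified over subsets of \emph{ports} rather than subsets of \emph{servers}: a direct union bound over the $\binom{N}{\lceil Np_j\rceil}$ candidate server sets $\set{I}$ is too lossy, because a single $D_{\set{I}}$ is a weighted sum of independent Bernoullis whose deviation exponent is only of order $H_j$, which need not beat $\log\binom{N}{\lceil Np_j\rceil}$.

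First I would observe that the first condition fails if and only if there is a port set $\set{J}\subseteq\set{L}$ with $\sum_{\ell\in\set{J}}\lambda_\ell>N\tilde{d}_1$ and at least $\lceil Np_1\rceil$ servers of $\set{R}_{K-1}$ having no neighbour in $\set{J}$: from a violating $\set{I}$ take $\set{J}=\set{L}\setminus\set{N}_{\set{R}}(\set{I})$, and from such a $\set{J}$ take any $\lceil Np_1\rceil$ of the isolated servers as $\set{I}$. Since a port with $\lambda_\ell\ge N\tilde{d}_1/H_1$ is joined to all of $\set{R}_{K-1}$ by construction, it can never belong to such a $\set{J}$, so it suffices to range over $\set{J}$ made of ``light'' ports, of which there are at most $2^L$. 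For the second condition I would handle separately the case $\set{I}\subseteq\set{R}_{K-1}$ (then $|\set{I}|\ge Np_2>Np_1$, so the first condition, also being proved, already gives $D_{\set{I}}\le N\tilde{d}_1\le N\tilde{d}_2$ once $\tilde{d}_1,\tilde{d}_2$ are taken at their maximal admissible values) and the case where $\set{I}$ contains a type-$K$ server, in which the associated $\set{J}$ consists only of ports with $\lambda_\ell<N\tilde{d}_2/H_2$.

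For a fixed light $\set{J}$ with $\lambda(\set{J}):=\sum_{\ell\in\set{J}}\lambda_\ell>N\tilde{d}_1$, edges at distinct servers are independent, and each server $r\in\set{R}_{K-1}$ fails to have a neighbour in $\set{J}$ with probability $\prod_{\ell\in\set{J}}(1-\lambda_\ell H_1/(N\tilde{d}_1))\le\exp(-\lambda(\set{J})H_1/(N\tilde{d}_1))<e^{-H_1}$; hence the chance that $\ge\lceil Np_1\rceil$ servers fail is at most $\binom{|\set{R}_{K-1}|}{\lceil Np_1\rceil}e^{-H_1\lceil Np_1\rceil}\le 2^{N}e^{-H_1Np_1}$. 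The identity $H_1 Np_1=2\ln 2\,(N+L)$, immediate from the definition of $H_1$, turns this into $2^{N}\cdot 2^{-2(N+L)}=2^{-N-2L}$ for each $\set{J}$. The same bound holds for the second condition with $H_2,p_2,\set{R}_K$, using that a port with $\lambda_\ell<N\tilde{d}_2/H_2$ is joined to \emph{any} server of $\set{R}_K$ with probability at least $\lambda_\ell H_2/(N\tilde{d}_2)$ — exact for type-$K$ servers, and for servers of $\set{R}_{K-1}$ a consequence of $H_1/\tilde{d}_1\ge H_2/\tilde{d}_2$, i.e.\ $p_1\tilde{d}_1\le p_2\tilde{d}_2$, which follows from the constraints on $\epsilon$ and $b$. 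Summing over the $\le 2^L$ sets $\set{J}$ and the two conditions gives total failure probability $\le 2\cdot 2^L\cdot 2^{-N-2L}=2^{-(N+L-1)}$. For the edge count, every port $\ell$ satisfies $\expect{\deg_{\set{R}_{K-1}}(\ell)}\le\lambda_\ell H_1/\tilde{d}_1$ and $\expect{\deg_{\text{type-}K}(\ell)}\le\lambda_\ell H_2/\tilde{d}_2$ (the deterministic cases being covered since there $\lambda_\ell\ge N\tilde{d}_j/H_j$), so $\expect{|E|}\le\lambda_\Sigma(H_1/\tilde{d}_1+H_2/\tilde{d}_2)$; plugging in $\lambda_\Sigma=\Theta(N)$, $H_j=\Theta((N+L)/(Np_j))$ and the values of $p_j,\tilde{d}_j$ yields $\expect{|E|}=O((N+L)b^5/\epsilon^2)$, the first term dominating.

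The hard part is the reformulation itself together with the calibration of $H_j$: the factor $e^{-H_jNp_j}=2^{-2(N+L)}$ is doubly exponentially small precisely because $H_j$ carries the weight $2\ln 2\,(N+L)/(Np_j)$, and this is exactly what simultaneously absorbs the $2^L$ union over port sets and the $2^{N}$ binomial slack — something the naive server-set union bound cannot achieve. A secondary delicate point is the bookkeeping linking the two conditions, in particular that a type-$K$ server inside $\set{I}$ forces $\set{J}$ to be ``type-$K$-light'' and that the $\set{R}_{K-1}$ connection probabilities dominate the type-$K$ ones, which is what lets a single isolation estimate cover all of $\set{R}_K$.
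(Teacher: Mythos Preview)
Your proposal is correct and essentially identical to the paper's proof: both rewrite the failure of each condition as the existence of a disconnected pair $(\set{K},\set{I})$ with $\sum_{\ell\in\set{K}}\lambda_\ell>N\tilde d_j$ and $|\set{I}|\ge Np_j$, bound each $\Pr\{\set{D}_{\set{K},\set{I}}\}\le e^{-H_jNp_j}=2^{-2(N+L)}$, and take a union bound over the at most $2^{L+N}$ such pairs (your inner $\binom{N}{\lceil Np_j\rceil}\le 2^N$ is exactly the paper's union over $\set{I}$). Your case split for condition~2 ($\set{I}\subseteq\set{R}_{K-1}$ versus $\set{I}$ containing a type-$K$ server) is unnecessary---the paper simply observes $z_{\ell,r}\ge\lambda_\ell H_2/(N\tilde d_2)$ for every $r\in\set{R}_K$ using $H_1/\tilde d_1\ge H_2/\tilde d_2$, so the same per-pair bound applies uniformly---but the argument and the edge-count estimate are otherwise the same.
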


Next, we discuss the special case of homogeneous arrival rates.

\begin{theorem}\label{thm:random-graph-uniform}
Suppose that all ports have the same arrival rates, that is, $\lambda_{\ell}\equiv \bar{\lambda}$ for all $\ell \in \set{L}$. Then following the same construction of graph $G$ in Theorem \ref{thm:random-graph} but with $H_j = 6\left(-\ln{p_j}+\frac{\tilde{d_j}}{p_j\bar{\lambda}}\ln\frac{2\mu_1}{\tilde{d_j}}\right)$ for $j \in \{1,2\}$, it holds that $G$ satisfies Assumption \ref{as:dense} with probability at least $1 - 2\binom{N}{Np_1}^{-1}$. The total number of edges in $G_N$ scales as $O\left(\frac{(N+L)b^3}{\epsilon}\ln\frac{b}{\epsilon}\right)$.
\end{theorem}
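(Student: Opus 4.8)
The plan is to mirror the argument behind Theorem \ref{thm:random-graph}, but to replace the crude Chernoff/union bound estimates by sharper ones that exploit the fact that, with a common arrival rate $\bar\lambda$, every port looks the same, so the "bad event" becomes a pure balls-in-bins counting problem. First I would reduce Assumption \ref{as:dense} to a statement about a fixed set $\set{I}$: for each of the two conditions, it suffices to show that for every $\set{I}\subseteq \set{R}_{K-1}$ (resp.\ $\set{R}_K$) with $|\set{I}|\ge Np_j$, the number of ports $\ell\notin\set{N}_{\set{R}}(\set{I})$ contributes total rate at most $N\tilde d_j$. Since all rates equal $\bar\lambda$, this is equivalent to bounding the \emph{count} of ports that miss $\set{I}$ entirely by $N\tilde d_j/\bar\lambda$. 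For a fixed port $\ell$ with $\lambda_\ell=\bar\lambda<N\tilde d_j/H_j$ (the "small" regime; the "large"-rate ports connect to everything and never miss $\set{I}$), the probability that $\ell$ avoids all of $\set{I}$ is $(1-\bar\lambda H_j/(N\tilde d_j))^{|\set{I}|}\le \exp(-\bar\lambda H_j |\set{I}|/(N\tilde d_j))\le \exp(-\bar\lambda H_j p_j/\tilde d_j)$. These events are independent across ports, so the number of missing ports is stochastically dominated by a $\mathrm{Binomial}(L,q)$ with $q=\exp(-\bar\lambda H_j p_j/\tilde d_j)$.

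Next I would plug in the new value $H_j = 6\bigl(-\ln p_j + \tfrac{\tilde d_j}{p_j\bar\lambda}\ln\tfrac{2\mu_1}{\tilde d_j}\bigr)$. The two terms in $H_j$ are tuned so that $\bar\lambda H_j p_j/\tilde d_j = 6\bigl(-\tfrac{p_j\bar\lambda}{\tilde d_j}\ln p_j + \ln\tfrac{2\mu_1}{\tilde d_j}\bigr)$, which is large enough that $q \le (\text{small})\cdot (\tilde d_j/(2\mu_1))^{6}$, making the per-port miss probability tiny relative to the target fraction $\tilde d_j/\bar\lambda$ (note $\tilde d_j \le \epsilon\mu_K/(2b)$ and $\mu_K\le\mu_1$, so $\tilde d_j/(2\mu_1)$ is the natural small parameter). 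Then I would apply a Chernoff bound on $\mathrm{Binomial}(L,q)$ to control the probability that more than $N\tilde d_j/\bar\lambda$ ports miss $\set{I}$, and finally take a union bound over all admissible $\set{I}$: there are at most $\binom{N}{Np_j}$ such sets for condition $j$ (crudely, $\binom{N}{\ge Np_j}\le \binom{N}{Np_j}\cdot N$, absorbed into constants, or one notes monotonicity and only the threshold size matters). Choosing $H_j$ with the factor $6$ ensures the per-set failure probability beats $\binom{N}{Np_j}^{-1}$ with room to spare, so the two conditions together fail with probability at most $2\binom{N}{Np_1}^{-1}$ (the $j=1$ term dominates since $p_1=\epsilon/(6b^2)$ is the smaller threshold). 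For the edge count: "large"-rate ports number at most $N\tilde d_j H_j^{-1}/(\bar\lambda)\cdot(\text{something})$—more precisely at most a $\tilde d_j/(H_j\bar\lambda)\cdot(\lambda_\Sigma/\bar\lambda)$ fraction, hence $O(N)$ of them, each contributing $O(N)$ edges is too many, so one checks instead that their total rate is a bounded fraction of $\lambda_\Sigma$, giving $O(N)$ such ports only when $\tilde d_j/H_j$ is $\Theta(\bar\lambda)$; the dominant contribution is the $\Theta(L\cdot N \cdot \bar\lambda H_j/(N\tilde d_j))=\Theta(L H_j \bar\lambda/\tilde d_j)$ expected random edges plus the symmetric $\Theta(N H_j /p_j \cdot \dots)$ term, and substituting $\tilde d_j=\Theta(\epsilon\mu_K/b^{3})$ (resp.\ $b$), $p_j=\Theta(\epsilon/b^2)$ (resp.\ constant), $H_j=O(\ln(N+L)/p_j + \ln(1/\epsilon))$ collapses everything to $O\bigl(\tfrac{(N+L)b^3}{\epsilon}\ln\tfrac{b}{\epsilon}\bigr)$ after using $\mu_K/\mu_1=\Theta(1)$.

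The main obstacle I anticipate is \emph{not} the concentration step but the bookkeeping that makes the constant $6$ in $H_j$ actually work simultaneously for (a) the per-port exponent, (b) the Chernoff deviation bound on the count, and (c) beating the binomial-coefficient union bound—all while the three small parameters $\epsilon$, $1/b$, and $\tilde d_j/\mu_1$ interact. Concretely, one needs an inequality of the shape $q^{\,|\set{I}|\text{-type bound}}\cdot\binom{N}{Np_j}\le \tfrac12\binom{N}{Np_1}^{-1}$, and since $\binom{N}{Np_j}\le 2^{N H(p_j)}$ with $H$ the binary entropy, the requirement becomes $\bar\lambda H_j p_j/\tilde d_j \gtrsim H(p_j)/p_j + (\text{Chernoff slack})$; verifying that the chosen $H_j$ dominates $H(p_j)/p_j \approx \ln(1/p_j)$ is exactly why the $-\ln p_j$ term appears, and checking the Chernoff slack is why the $\ln(2\mu_1/\tilde d_j)$ term and the factor $6$ appear. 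Once that single inequality is nailed down the rest is the same union-bound-and-substitute routine as in Theorem \ref{thm:random-graph}, so I would isolate it as a lemma first and then assemble the proof. The details are deferred to the appendix.
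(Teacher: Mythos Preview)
Your proposal is correct and is essentially the same argument as the paper's, just phrased differently. You frame step 2 as ``for fixed $\set{I}$, apply a Chernoff bound to the $\mathrm{Binomial}(L,q)$ count of ports that miss $\set{I}$, then union over $\set{I}$''; the paper instead takes a union bound over \emph{pairs} $(\set{K},\set{I})$ with $|\set{K}|=N\tilde d_j/\bar\lambda$ and $|\set{I}|=Np_j$. But the crudest Chernoff bound $\Pr\{\mathrm{Bin}(L,q)\ge k\}\le\binom{L}{k}q^k$ is exactly the paper's union over $\set{K}$, and both routes produce the identical quantity $\binom{L}{N\tilde d_j/\bar\lambda}\binom{N}{Np_j}\exp(-H_jNp_j)$. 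The paper then isolates a clean lemma $\ln\binom{n}{\alpha n}\le -3\alpha n\ln\alpha$ (for $\alpha<1/2$) to bound the two binomial coefficients, which is why the constant is $6=2\times 3$; your entropy-based reasoning $\binom{N}{Np_j}\le 2^{NH(p_j)}$ with $H(p_j)\approx p_j\ln(1/p_j)$ is the same estimate. Your monotonicity remark (only threshold-size $\set{I}$ matter since $D_{\set{I}}$ is non-increasing in $\set{I}$) is also how the paper reduces to counting $\binom{N}{Np_j}$ sets.

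One slip to fix in the edge-count paragraph: you wrote $H_j=O(\ln(N+L)/p_j+\ln(1/\epsilon))$, but that is the $H_j$ from Theorem~\ref{thm:random-graph}. Here $H_j$ has no $N,L$ dependence; plugging in $p_1=\epsilon/(6b^2)$, $\tilde d_1=\Theta(\epsilon\mu_K/b^3)$ gives $H_j=O\bigl((1+\tfrac{1}{b\bar\lambda})\ln\tfrac{b}{\epsilon}\bigr)$ and $H_j/\tilde d_j=O\bigl(\tfrac{b^3}{\epsilon}\ln\tfrac{b}{\epsilon}\bigr)$. The $(N+L)$ factor enters only when you sum the expected per-port degree over all ports (using $\lambda_\Sigma=L\bar\lambda\le N\mu_1$), not through $H_j$ itself.
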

\begin{remark}
Th previous two theorems indicate that to achieve asymptotically optimal mean response time and asymptotic zero waiting probability, the average number of  connections of each port is only $O(\frac{1}{\epsilon^2})$ for heterogeneous arrival rates, and $O(\frac{1}{\epsilon}\ln\frac{1}{\epsilon})$ for homogeneous arrival rates, given that $L = \Omega(N), b = O(1)$. When $1/(1-\lambda)=O(1),$ we only require $\epsilon=o(1)$. Then the average number of edges connected to each port becomes $\omega(1)$. Therefore, for achieving very small loss probability and near-optimal response times, the number of edges in a random graph need to be only sparse compared to a fully connected graph. 
\end{remark}

\section{Proof of the Upper Bound and Optimality Results}\label{sec:upper-bound}
In this section, we provide the proofs of Theorem \ref{thm:finite-bound}. These results respectively bound the mean number of jobs in a finite-size system and show the asymptotic optimality for JFSQ and JFIQ in the many-server limit and the sub Halfin-Whitt regime. 

\subsection{Proof Sketch}
Ahead of the complete proof, we first provide a sketch of the proof reflecting intuitions behind it. Recall that the goal is to bound the mean number of jobs in the system divided by $N$, given by $\expect{\sum_{m=1}^M C_m(\bar{\bold{Q}})}$. Here by definition, $C_m(\bar{\bold{Q}}) = \sum_{j = 1}^b s_{m,j}(\bar{\bold{Q}})$. Our proof starts with the following observation about the mean-field limit for JFSQ and JFIQ in the heterogeneous system. 
\subsubsection{Mean-Field Limit}
Ideally, if the load $\lambda$ is a constant, then as $N \to \infty$, it holds that 
\begin{equation}\label{eq:true-mean-field}
s_{m,1}(\bar{\bold{Q}}) \approx \left\{
\begin{aligned}
\alpha_m, & ~m < K \\
C_K^*, & ~m = K \\
0, & ~m > K
\end{aligned}
\right.\mspace{50mu}\text{and}\mspace{50mu}
s_{m,j}(\bar{Q}) \approx 0,~\forall {m=1\dots M, j=2 \dots b}.
\end{equation}
Roughly speaking, this limit tells us that all the first $K-1$ types of servers are busy, some servers of type $K$ are busy, and all the servers with types greater than $K$ are idle. 

The intuition behind (\ref{eq:true-mean-field}) is as follows. Since there are infinite servers, a certain fraction of them must be idle. Then by the definition of JFIQ and JFSQ, all arrivals of jobs are routed to idle servers, at least in a fluid model. Therefore, the scaled number of waiting jobs (i.e., not in service), $\sum_{m=1}^M\sum_{j=2}^b S_{m,j}(\bold{Q})$ must converge to zero. For $S_{1,1}(\bold{Q}),\cdots,S_{M,1}(\bold{Q})$, JFIQ and JFSQ always route jobs to fastest idle servers. Therefore, it must be the case that $s_{m,1}(\bold{Q})$ are filled from $1$ to $M$ until $\sum_{m=1}^M \mu_m s_{m,1}(\bar{\bold{Q}}) = \lambda$. That is to say, the total departure rate is equal to the total arrival rate. Therefore, we can `guess' that the mean-field limit has the form (\ref{eq:true-mean-field}). 

Based on this limit, the scaled mean number of jobs can be decomposed as 
\begin{equation}
\expect{\sum_{m=1}^M C_m(\bar{\bold{Q}})} = \expect{\sum_{m=1}^K C_m(\bar{\bold{Q}})} + \expect{\sum_{m=K + 1}^M C_m(\bar{\bold{Q}})}.
\end{equation}
\subsubsection{Lyapunov Drift Arguments}

The drift argument starts by considering a Lyapunov function $g$ and setting its drift in steady-state equal to zero. Since we are considering continuous-time Markov chains, this is equivalent to saying that
$\expect{Gg(\bar{\bold{Q}})}=0$ where $G$ is the generator of the Markov chain (defined explicitly later). Initially, let us focus on the total queue length in the first $K$ types of servers (scaled by $N$) and thus, choose the Lyapunov function to be a function of the scaled total number of jobs in these servers and their queues, which we will call $x$. By an abuse of notation, we will rewrite the drift as
$\expect{Gg(x)}=0.$
However, this drift may be hard to analyze. Instead, suppose that the system was a simple deterministic fluid model of the form $\dot{x}=-\Delta$ for an appropriately $\Delta>0.$ The motivation for considering this fluid model is that, in the large-system limit, our system behaves like a single-server queue with simple fluid dynamics. If this fluid limit were the true system, then the drift of $g$ becomes simply $-g'(x)\Delta.$ We add and subtract this drift from the drift of the stochastic system to obtain
$\expect{Gg(x)-g'(x)\Delta+g'(x)\Delta}=0,$
which can be rewritten as
$$\expect{g'(x)\Delta}=\expect{Gg(x)-(-g'(x)\Delta)}.$$
We are interested in getting a bound on the steady-state expectation of $h(x)=(x-C^*+\epsilon)^+$ where $\epsilon$ controls the approximation error. Therefore, we choose $g$ such that $g'(x)\Delta=h(x)$ (this equality is sometimes called Stein's equation). Thus, the drift equation becomes
$$\expect{h(x)}=\expect{Gg(x)-(-g'(x)\Delta)}.$$
Now, it is easy to see that we can bound $\expect{h(x)}$ if we can show that the drift of the Markov process $\expect{G(g(x))}$ is approximately equal to $-g'(x)\Delta.$ The rest of the proof involves studying $\expect{Gg(x)-(-g'(x)\Delta)}$ by choosing $\Delta=\mu_1\delta$ where $\delta>0.$

In Lemma \ref{lemma:firstK-key-term}, we show that this expression is approximately equal to 
\begin{equation}\label{eq:approximate-key-term}
 \frac{1}{\mu_1\delta} \expect{\mathbbm{1}\left\{\sum_{m=1}^K C_m(\bar{\bold{Q}}) \geq C^*+\epsilon + \frac{1}{N}\right\} h\left(\sum_{m=1}^K C_m(\bar{\bold{Q}})\right)(\lambda+\mu_1\delta - W(\bar{\bold{Q}}))}.
\end{equation}
We want to upper bound this expression by a quantity which is small when $N$ is large.
Note that $\sum_{m=1}^K C_m(\bar{\bold{Q}})$ is the total scaled queue length in the first $K$ types of servers and $W(\bar{\bold{Q}}) = \sum_{m=1}^K \mu_m s_{m,1}(\bar{\bold{Q}})$ can be interpreted as the departure rate from these servers. Thus, the above expression can be upper bounded by a small quantity if the following holds: whenever the total queue length is large, the departure rate exceeds the arrival rate with high probability.

To establish this fact, the mean-field limit (\ref{eq:true-mean-field}) motivates us to show that $s_{m,1}(\bar{\bold{Q}}) \approx \alpha_m$ for $m < K$ and $s_{K,1}(\bar{\bold{Q}}) \approx C_K^*$. To be concrete, we show a two-stage state space collapse result through the following two Lyapunov functions (omitting extra technical terms):

\begin{align}
\tilde{V}_1(\bold{q}) &= \min\left(\sum_{m=1}^{K-1}\sum_{j=2}^b s_{m,j}(\bold{q}) + C_K(\bold{q}), \sum_{m=1}^{K-1} \alpha_m - \sum_{m=1}^{K-1} s_{m,1}(\bold{q})\right) \label{eq:tildeV1}\\
\tilde{V}_2(\bold{q}) &= \min\left(\sum_{m=1}^{K}\sum_{j=2}^b s_{m,j}(\bold{q}), \sum_{m=1}^{K-1} C_m^* + \tau_{1K}\delta - \sum_{m=1}^{K} s_{m,1}(\bold{q})\right) \label{eq:tildeV2}.
\end{align}

The well-connectedness condition in Assumption \ref{as:dense} and the routing policy (JFSQ and JFIQ) ensure that both of them have negative drifts when they are sufficiently large (Lemma \ref{lemma:lyapunov-firstK-1} and Lemma \ref{lemma:lyapunov-firstK}). We now provide some intuition to explain how the well-connectedness condition plays a role in establishing the negative drift of these Lyapunov functions. We consider $\tilde{V}_1,$ the explanation for the other Lyapunov function is similar. If $\tilde{V}_1$ is large, it implies that both terms inside the min in (\ref{eq:tildeV1}) are large. In particular, by focusing on the second term, we note that a large $\tilde{V}_1$ implies that the (scaled) number of used servers $\sum_{m=1}^{K} s_{m,1}(\bold{q})$ is small. Equivalently, the number of idle servers is large. The well-connected condition simply states that the arrival rates to large subsets of servers is large. Thus, if $\tilde{V}_1$ is large, the number of empty servers is large which implies they have a large arrival rate, which in turn implies that the number of empty servers quickly decreases.
The negative drift of $\tilde{V}_1$ and $\tilde{V}_2$ can be used to establish geometric tail bounds (Lemma \ref{lemma:geometric-tail}) using standard drift arguments to show that they are small with high probability. 

Observe that when $\sum_{m=1}^K C_m(\bold{q}) > C^*+\epsilon$, these two Lyapunov functions are all equal to the second term on their right hand side. 
Then in this case, $\sum_{m=1}^{K-1} s_{m,1}(\bold{q}) \approx \sum_{m=1}^{K-1} \alpha_m$, and $\sum_{m=1}^{K} s_{m,1}(\bold{q}) \approx \sum_{m=1}^{K} C_m^* + \tau_{1K}\delta$. It then implies $s_{K,1}(\bold{q}) \approx C_K^*+\tau_{1K}\delta$. Now that $\sum_{m=1}^K \mu_m C_m^* = \lambda$, it holds $W(\bold{q}) \approx \lambda + \mu_1\delta$ with high probability. We thus prove that (\ref{eq:approximate-key-term}) should be small, and it leads to a bound on the scaled mean number of jobs in the first $K$ types of servers.
	
Now for the remaining types of servers, the mean-field limit (\ref{eq:true-mean-field}) indicates that almost all of them are idle. We thus try to bound this third Lyapunov function, $\sum_{m=K+1}^M C_m(\bar{\bold{Q}})$. From the mean-field limit, we know that $\sum_{m=1}^K s_{m,1}(\bold{Q}) \approx C^*$. Therefore, approximately $N\left(\sum_{m=1}^K \alpha_m - C^*\right)$ servers of the first $K$ types are idle. Therefore, Assumption \ref{as:dense} ensures that very few jobs are routed to the remaining types of servers under JFSQ and JFIQ. By utilizing a conditional geometric tail bound (Lemma \ref{lemma:geometric-tail}), we manage to show that $\sum_{m=K+1}^M C_m(\bar{\bold{Q}})$ is small with high probability, and finally obtain a bound on its mean. 

For the complete proof of Theorem \ref{thm:finite-bound}, since our theorem consists of three parts, we prove each of them in order, and combine them together at the end of this section. 

\subsection{Bound for the First $K$ Types of Servers}
The first result, which bounds the number of jobs in the first $K$ types of servers, is the most important part in the theorem, which is restated as follows.
\begin{lemma}\label{lemma: njobs-firstK}
Under Assumption \ref{as:buffer} and Assumption \ref{as:dense}, the expected number of jobs in servers of the first $K$ types divided by $N$ is bounded as 
\begin{equation}
\expect{\max\left(\sum_{m=1}^K C_m(\bar{\mathbf{Q}}) - (C^* +\epsilon),0\right)} \leq \frac{52\tau_{1K}b^2}{\epsilon N} \tag{\ref{eq:njobs-firstK}}
\end{equation}
if the routing policy is either JFSQ or JFIQ.
\end{lemma}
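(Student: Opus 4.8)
The plan is to carry out in full the Stein's-method / Lyapunov-drift argument sketched above. Write $X = \sum_{m=1}^K C_m(\bar{\mathbf{Q}})$ for the scaled number of jobs in the first $K$ server types and $h(x) = \max(x - C^* - \epsilon,\,0)$, so that $\expect{h(X)}$ is exactly the quantity to be bounded. Motivated by the fact that in the large-system limit the aggregate of these server types behaves like a single queue which, when overloaded, drains at a rate exceeding its fill rate by roughly $\mu_1\delta$, I would fix a small parameter $\delta > 0$, take the fluid drift $\Delta = \mu_1\delta$, and choose $g$ to solve Stein's equation $g'(x)\Delta = h(x)$, i.e. $g(x) = \frac{1}{\mu_1\delta}\int_0^x h(u)\,du$. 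Because the finite buffers make the chain positive recurrent, setting the steady-state drift of $g$ to zero gives $\expect{Gg(X)} = 0$ (with the usual abuse of treating $g$ as a function of $X$ alone); adding and subtracting the fluid drift $-g'(X)\Delta$ yields the identity $\expect{h(X)} = \expect{Gg(X) + g'(X)\Delta}$, reducing the lemma to an estimate of the right-hand side.

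The second step is to massage this drift. Controlling the discrete second-order corrections of the generator via Assumption~\ref{as:buffer}, Lemma~\ref{lemma:firstK-key-term} rewrites $\expect{Gg(X) + g'(X)\Delta}$, up to an error of order $\tau_{1K}b^2/(\delta N)$, as
\[
\frac{1}{\mu_1\delta}\,\expect{\mathbbm{1}\left\{X \geq C^* + \epsilon + \tfrac1N\right\}\,h(X)\,\bigl(\lambda + \mu_1\delta - W(\bar{\mathbf{Q}})\bigr)},
\]
where $W(\bar{\mathbf{Q}}) = \sum_{m=1}^K \mu_m s_{m,1}(\bar{\mathbf{Q}})$ is the aggregate departure rate of the first $K$ types. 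Everything now hinges on showing that, on $\{X > C^* + \epsilon\}$, the departure rate $W(\bar{\mathbf{Q}})$ is at least $\lambda + \mu_1\delta$ with overwhelming probability. This is the heterogeneous-server analogue of the homogeneous state-space-collapse step and proceeds in two stages: Lemma~\ref{lemma:lyapunov-firstK-1} shows $\tilde V_1$ of (\ref{eq:tildeV1}) has drift bounded away from $0$ (negatively) once it is large, and Lemma~\ref{lemma:lyapunov-firstK} shows the same for $\tilde V_2$ of (\ref{eq:tildeV2}). Assumption~\ref{as:dense} and the JFSQ/JFIQ routing rule are precisely what force these negative drifts: a large $\tilde V_1$ (resp. $\tilde V_2$) forces a large scaled number of idle servers among the first $K$ types, and well-connectedness guarantees that any such set of idle servers faces a large arrival rate, so it fills up quickly. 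Feeding these drifts into the geometric tail bound of Lemma~\ref{lemma:geometric-tail} makes $\tilde V_1$ and $\tilde V_2$ negligibly small (in expectation, $\mathrm{poly}(b)/N$) outside an event of exponentially small probability. On $\{X > C^* + \epsilon\}$ the minima defining $\tilde V_1$ and $\tilde V_2$ are attained by their second arguments (because $\delta$ is taken small enough that the first arguments strictly exceed the second on that region), so their smallness pins $\sum_{m=1}^{K-1} s_{m,1}(\bar{\mathbf{Q}}) \approx \sum_{m=1}^{K-1}\alpha_m$ and $s_{K,1}(\bar{\mathbf{Q}}) \approx C_K^* + \tau_{1K}\delta$; since $\sum_{m=1}^K \mu_m C_m^* = \lambda$ and $\mu_K\tau_{1K} = \mu_1$, this gives $W(\bar{\mathbf{Q}}) \approx \lambda + \mu_1\delta$, so $\lambda + \mu_1\delta - W(\bar{\mathbf{Q}})$ is at most the small collapse error.

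For the last step I would split the expectation in the key term over the event where $\tilde V_1, \tilde V_2$ are small and its complement. On the good event the integrand equals $h(X)$ times a quantity bounded by the collapse error, and $h(X) \leq X \leq b$, so this contributes $O(b)$ times $\expect{\tilde V_1 + \tilde V_2} = O(\mathrm{poly}(b)/N)$; on the complement $|h(X)(\lambda + \mu_1\delta - W(\bar{\mathbf{Q}}))|$ is bounded by an absolute constant times $b$ while the probability is exponentially small, so that contribution is negligible. Dividing by $\mu_1\delta$, adding the $O(\tau_{1K}b^2/(\delta N))$ discretization error, and optimizing the free parameter $\delta$ (taking it proportional to $\epsilon$, the largest value for which the step isolating the second argument of $\tilde V_2$ and the collapse argument remain valid) yields a bound of the claimed order $\tau_{1K}b^2/(\epsilon N)$, and a careful accounting of constants gives the explicit factor $52$. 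The main obstacle is proving the two Lyapunov-drift lemmas (Lemmas~\ref{lemma:lyapunov-firstK-1} and~\ref{lemma:lyapunov-firstK}): turning ``a large idle subset of the first $K$ types receives a large arrival rate under JFSQ/JFIQ'' into a quantitative negative drift requires a state-by-state comparison of the rate at which idle servers get populated against the rate at which queued jobs accumulate, and the Lyapunov functions must be designed with exactly the right min-structure and $\tau_{1K}\delta$ slack so that the estimate closes precisely on the region $\{X > C^* + \epsilon\}$; keeping the accumulated constants below $52$ is the only remaining nuisance.
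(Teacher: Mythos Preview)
Your proposal is correct and follows essentially the same approach as the paper: generator coupling via Stein's equation with the fluid model $\dot x=-\mu_1\delta$ (with $\delta=\tfrac{\mu_K}{6\mu_1 b^2}\epsilon$ fixed from the start rather than optimized at the end), the gradient/Taylor step of Lemma~\ref{lemma:firstK-key-term}, the two state-space-collapse Lyapunov functions of Lemmas~\ref{lemma:lyapunov-firstK-1}--\ref{lemma:lyapunov-firstK}, and the geometric tail bound of Lemma~\ref{lemma:geometric-tail}. The one refinement worth noting is the final combination: rather than bounding the good-event contribution by ``$h(X)$ times the collapse error'' (which, after dividing by $\mu_1\delta$, would pick up an extra $b^3/\epsilon$ factor and not close at $52\tau_{1K}b^2/(\epsilon N)$), the paper builds the extra slack $3\tau_{1K}\bar\delta$ directly into $V_2$ so that on the good event $\{V_1\le B_1+\chi/(\epsilon N)\}\cap\{V_2\le B_2+\chi/(\epsilon N)\}$ one gets $W(\mathbf q)\ge\lambda+\mu_1\delta$ \emph{exactly} (Lemma~\ref{lemma:firstK-enoughWork}), making the key term nonpositive there; only the bad event, of probability $\le 2/N^2$, contributes, and that yields the constant $52$.
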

\begin{proof}
Throughout this proof, we assume all assumptions in Lemma \ref{lemma: njobs-firstK} are satisfied. Recall that the metric of interest is $\expect{\max\left(\sum_{m=1}^K C_m(\bar{\mathbf{Q}})-(C^*+\epsilon),0)\right)}$, where $C^* = \sum_{m=1}^K C_m^*.$ To simplify the notation, let $\eta = C^*+\epsilon$, and denote $h(x)=\max(x-\eta,0)$. Our goal is thus to bound $\expect{h(\sum_{m=1}^K C_m(\bar{\mathbf{Q}}))}$. The proof is motivated by the framework introduced in \cite{LiuYing2020}, and can be divided mainly into three parts, generator coupling, gradient bounds and state-space collapse.
\paragraph{\textbf{Generator Coupling}}
We couple our system with a fluid model that is simple, but can well approximate the evolution of $h(\sum_{m=1}^K C_m(\bar{\mathbf{Q}}))$. In particular, consider a fluid model $\dot{x} = -\mu_1\delta$ where $\delta = \frac{\mu_K}{6\mu_1b^2}\epsilon$. Let $g(x)$ be the solution to the following Stein's equation of the fluid model,
\begin{equation}
    \mu_1\delta g'(x) = h(x).
\end{equation}
The solution is unique, and is given by 
\begin{equation}\label{eq:g-gradient}
g(x) = \frac{\max(x-\eta,0)^2}{2\mu_1\delta},~ g'(x)=\frac{\max(x-\eta,0)}{\mu\delta},~g''(x)=\left\{
\begin{aligned}
0,~x<\eta \\
\frac{1}{\mu_1\delta},~x\geq \eta.
\end{aligned}
\right.
\end{equation}
The next step is to couple our system with the fluid model through this stein's equation. 

To do so, recall that the system is a CTMC defined on queue lengths of servers, $\mathbf{Q}(t)$. let $G$ be the generator of our system such that for a queue state $\mathbf{q}$, and any function $V$ defined on the state space, 
\begin{equation}
GV(\mathbf{q}) = \sum_{\mathbf{q'}} r_{\mathbf{q},\mathbf{q'}} \left(V(\mathbf{q'})-V(\mathbf{q})\right)
\end{equation}
where $r_{\mathbf{q},\mathbf{q'}}$ is the transition rate from state $\mathbf{q}$ to state $\mathbf{q'}$. It is clear that $Gg(\mathbf{q})$ serves as an analog of the drift of function $g$ at state $\mathbf{q}$ in a discrete-time Markov chain as in \cite{EryilmazSrikant2012}. To couple our system with the fluid model, we first need the following property, a key insight from \cite{EryilmazSrikant2012} and \cite{LiuYing2020}.
\begin{lemma}\label{lemma:zero-drift}
The expectation $\expect{Gg(\sum_{m=1}^K C_m(\bar{\mathbf{Q}}))}$ is equal to $0$.
\end{lemma}

Then the two systems can be coupled by seeing that 
\begin{align}
\expect{h\left(\sum_{m=1}^K C_m(\bar{\mathbf{Q}})\right)} &= \expect{g'\left(\sum_{m=1}^K C_m(\bar{\mathbf{Q}})\right)(\mu_1\delta)} \\
&=\expect{Gg\left(\sum_{m=1}^K C_m(\bar{\mathbf{Q}})\right) - g'\left(\sum_{m=1}^K C_m(\bar{\mathbf{Q}})\right)(-\mu_1\delta) }\label{eq:geneartor-diff}.
\end{align}
As a result, to bound $\expect{h\left(\sum_{m=1}^K C_m(\bar{\mathbf{Q}})\right)}$, it is equivalent to bound (\ref{eq:geneartor-diff}).
\paragraph{\textbf{Gradient Bounds.}}
We now utilizing the explicit form of $g(x)$ in (\ref{eq:g-gradient}) to bound (\ref{eq:geneartor-diff}). First by definition, it holds that for a state $\mathbf{q}$,
\begin{align}
Gg\left(\sum_{m=1}^K C_m(\mathbf{q})\right) &= \sum_{\mathbf{q'}} r_{\mathbf{q},\mathbf{q'}} \left(g\left(\sum_{m=1}^K C_m(\mathbf{q'})\right)-g\left(\sum_{m=1}^K C_m(\mathbf{q})\right)\right) \notag \\
&= \lambda_{\Sigma}(1-P_k(\mathbf{q}))\left(g\left(\sum_{m=1}^K C_m(\mathbf{q})+\frac{1}{N}\right)-g\left(\sum_{m=1}^K C_m(\mathbf{q})\right)\right) \mspace{40mu}(\text{Arrival transitions}) \\ \label{eq:generator-arrival}
&\mspace{20mu}+NW(\mathbf{q})\left(g\left(\sum_{m=1}^K C_m(\mathbf{q})-\frac{1}{N}\right)-g\left(\sum_{m=1}^K C_m(\mathbf{q})\right)\right) \mspace{40mu}(\text{Departure transitions})
\end{align}
where $P_k(\mathbf{q})$ is the probability that an arrival of jobs is not routed to a server of type no greater than $K$, and $W(\mathbf{q})=\sum_{m=1}^K \mu_m s_{m,1}(\mathbf{q})$. Then by (\ref{eq:geneartor-diff}), we can get 
\begin{align}
\expect{h\left(\sum_{m=1}^K C_m(\bar{\mathbf{Q}})\right)}  &\leq \mathbb{E}\left[g'\left(\sum_{m=1}^K C_m(\bar{\mathbf{Q}})\right)(\mu_1\delta)\right. \label{eq:initial-value}\\
&\left.\mspace{20mu}+\lambda_{\Sigma}\left(g\left(\sum_{m=1}^K C_m(\bar{\mathbf{Q}})+\frac{1}{N}\right)-g\left(\sum_{m=1}^K C_m(\bar{\mathbf{Q}})\right)\right) \right. \label{eq:arrival-term}\\
&\left.\mspace{20mu} +NW(\bar{\mathbf{Q}})\left(g\left(\sum_{m=1}^K C_m(\bar{\mathbf{Q}})-\frac{1}{N}\right)-g\left(\sum_{m=1}^K C_m(\bar{\mathbf{Q}})\right)\right)\right] \label{eq:departure-term}
\end{align}
where we omit the term $P_k(\bar{\bold{Q}})$ from (\ref{eq:generator-arrival}) since $g(x)$ is an increasing function by (\ref{eq:g-gradient}). 
Now to simplify the equation, we can do Taylor's expansion on (\ref{eq:arrival-term}) and (\ref{eq:departure-term}), and apply gradient bounds of $g(x)$. The result is summarized as follows whose proof is provided in the appendix.
\begin{lemma}\label{lemma:firstK-key-term}
It holds that
\begin{equation}\label{eq:refined-key-term}
\expect{h\left(\sum_{m=1}^K C_m(\bar{\mathbf{Q}})\right)} \leq \expect{\mathbbm{1}\left\{\sum_{m=1}^K C_m(\bar{\bold{Q}}) \geq \eta + \frac{1}{N}\right\} g'\left(\sum_{m=1}^K C_m(\bar{\bold{Q}})\right)(\mu_1\delta+\lambda-W(\bar{\bold{Q}}))} + \frac{38b^2\tau_{1K}}{\epsilon N}.
\end{equation}
\end{lemma} 

The remaining step is to bound the first term on the right hand side in (\ref{eq:refined-key-term}), which is the main part of this proof. The key insight is that as long as $W(\bold{q}) \geq \lambda +\mu_1\delta$, it holds that the contribution of $\bold{q}$ to the first term would be at most zero. Furthermore, this property only needs to hold when $\sum_{m=1}^K C_m(\bold{q}) \geq \eta + \frac{1}{N}$ due to the indicator function. To justify this result, we establish two state space collapse results as follows.

\paragraph{\textbf{State Space Collapse.}}
Recall that $\sum_{m=1}^K C_m(\bold{q})$ is the number of jobs in servers of the first $K$ types divided by $N$. The intuition is to show that when this number is large, it holds that with high probability, 
\begin{equation}\label{eq:firstK-intuition}
s_{1,1}(\bold{q}) = C_1^*,\cdots, s_{K-1,1}(\bold{q})=C_{K-1}^*, s_{K,1} > C_K^*.
\end{equation}
That is to say, almost all servers of the first $K-1$ types are busy. And enough type-$K$ servers are busy such that their total departure rates (or works produced by these servers) are sufficient for the total arrival rate $\lambda_{\Sigma}$.

The following lemma indirectly shows that unless $\sum_{m=1}^K C_m(\bold{q})$ is small, $\sum_{m=1}^K s_{m,1}(\bold{q}) \approx \sum_{m=1}^{K-1} \alpha_m$. In particular, it designs a Lyapunov function closely related to the above property. Due to space limitations, the proof is deferred to the appendix.
\begin{lemma} \label{lemma:lyapunov-firstK-1}
Consider the following Lyapunov function
\begin{equation}\label{eq:lyapunov-firstK-1}
V_1(\bold{q}) = \min\left(\sum_{j=1}^b s_{K,j}(\bold{q})+\sum_{m=1}^{K-1}\sum_{j=2}^b s_{m,j}(\bold{q}), \sum_{m=1}^{K-1} C_m^*-\sum_{m=1}^{K-1} s_{m,1}(\bold{q})\right).
\end{equation}
It holds that if $V_1(\bold{q}) \geq B_1 \coloneqq \tau_{1K}\delta$, then $GV_1(\bold{q}) \leq \frac{-\mu_1\delta}{2b}$.
\end{lemma}

In addition to Lemma \ref{lemma:lyapunov-firstK-1} that focuses on the first $K-1$ types of servers, the following lemma provides another Lyapunov function. This function is later used together with Lemma \ref{lemma:lyapunov-firstK-1} to show that if $\sum_{m=1}^K C_m(\bold{q})$ is large, then a certain number of type $K$ servers are busy. It then complements the goal in (\ref{eq:firstK-intuition}). The proof of this lemma is similar to that of Lemma \ref{lemma:lyapunov-firstK-1}, and is provided in the appendix.
\begin{lemma}\label{lemma:lyapunov-firstK}
Consider the following Lyapunov function
\begin{equation}
V_2(\bold{q}) = \min\left(\sum_{m=1}^K \sum_{j=2}^b s_{m,j}(\bold{q}), \sum_{m=1}^K C_m^* + B_2 + 3\tau_{1K}\bar{\delta}-\sum_{m=1}^K s_{m,1}(\bold{q})\right) \label{eq:lyapunov-firstK}
\end{equation}
where $\bar{\delta} \coloneqq \tau_{1K}\delta$, and $B_2 \coloneqq \frac{1}{2}\epsilon + \bar{\delta}.$ It holds that if $V_2(\bold{q}) \geq B_2$, then $GV_2(\bold{q}) \leq -\frac{\mu_1\delta}{b}.$
\end{lemma}

To apply the above two lemmas, we need the following geometric tail bound from \cite{WengWang20}, which originates in  \cite{BertsimasGarmarnik01,WangMaguluri18}. This lemma translates the fact that a Lyapunov function has a negative drift to the property that the function is within a certain region with high probability. 

\begin{lemma}\label{lemma:geometric-tail}
	Consider a continuous time Markov chain $\left\{\bold{S}(t):t\ge 0\right\}$ on a finite state space $\mathcal{S}$. Assume that it has a unique stationary distribution.  For a Lyapunov function $V: \mathcal{S} \rightarrow [0,+\infty)$, define
	$
	GV(\bold{s}) = \sum_{\bold{s}' \in \mathcal{S}} r_{\bold{s},\bold{s}'}(V(\bold{s}') - V(\bold{s}))
	$
	where $r_{\bold{s},\bold{s}'}$ is the transition rate from state $\bold{s}$ to $\bold{s}'$.  
	
	Suppose that
	\[
	\nu_{\mathrm{max}} \coloneqq \sup_{\bold{s},\bold{s}' \in \mathcal{S}\colon r_{\bold{s},\bold{s}'} > 0} |V(\bold{s}) - V(\bold{s}')| < \infty;~~
	f_{\mathrm{max}} \coloneqq \max\left\{0,\sup_{\bold{s} \in \mathcal{S}} \sum_{\bold{s}':V(\bold{s}') > V(\bold{s})} r_{\bold{s} ,\bold{s}'}\left(V(\bold{s}') - V(\bold{s})\right)\right\} < \infty.
	\]
	Given a set $\mathcal{E}$. If for some $B > 0,\gamma > 0, \xi \geq 0$, it holds: 1) $G V(\bold{s}) \leq -\gamma$ when $V(\bold{s}) \geq B$ and $\bold{s} \in \mathcal{E}$; 2) $GV(\bold{s}) \leq \xi$ when $V(\bold{s}) \geq B$ and $\bold{s} \not \in \mathcal{E}$,

	then for all positive integer $j$, if $\bar{\bold{S}}$ is the steady-state random variable, it holds 
	\begin{equation}
	\Pr\left\{V(\bar{\bold{S}}) \geq B + 2 \nu_{\mathrm{max}}j\right\} \leq \left(\frac{f_{\mathrm{max}}}{f_{\mathrm{max}} + \gamma}\right)^j + \left(\frac{\xi}{\gamma} + 1\right)\Pr\left\{s \not \in \mathcal{E}\right\}.
	\end{equation}
	
\end{lemma}

Based on Lemma \ref{lemma:geometric-tail}, we can bound the probability that $V_1(\bold{q})$ or $V_2(\bold{q})$ is large in the following result.
\begin{lemma}\label{lemma:firstK-collapse}
Let $\chi = 96\tau_{1K}b^3\ln N$. With the same notation in Lemma \ref{lemma:lyapunov-firstK-1} and Lemma \ref{lemma:lyapunov-firstK}, it holds that 
\begin{equation} 
\Pr\left\{V_1(\bar{\bold{Q}}) \geq B_1+\frac{\chi}{\epsilon N}\right\} \leq N^{-2};
\Pr\left\{V_2(\bar{\bold{Q}}) \geq B_2+\frac{\chi}{\epsilon N}\right\} \leq N^{-2}.
\end{equation}
\end{lemma}
\begin{proof}
Note that under the notation in Lemma \ref{lemma:geometric-tail}, we have for both $V_1(\bold{q})$ and $V_2(\bold{q})$, $\nu_{\max} = \frac{1}{N}$, and $f_{\max} \leq \mu_1$. We first bound $\Pr\left\{V_1(\bold{q}) \geq B_1+\frac{\chi}{\epsilon N}\right\}$. Since by Lemma \ref{lemma:lyapunov-firstK-1}, when $V_1(\bold{q}) \geq B_1$, it holds $GV_1(\bold{q}) \leq \frac{-\mu_1\delta}{2b}$. Then by taking the set $\set{E}$ to be the empty set and taking $j_1 = \frac{8b}{\delta}\log N$, Lemma \ref{lemma:geometric-tail} shows that 
\begin{equation}
\Pr\left\{V_1(\bold{q}) \geq B_1+2\nu_{\max}j_1\right\} \leq \left(1+\frac{\delta}{2b}\right)^{-j_1} \leq \exp\left(-\frac{j_1\delta}{4b}\right)=N^{-2}
\end{equation}
where the last inequality comes from the fact that $\ln(1+x) \geq x/2$ for $x \in [0,1]$. We can easily verify that 
$
2\nu_{\max}j_1 = \frac{2}{N}\cdot \frac{48\mu_1b^3}{\mu_K\epsilon}=\frac{\chi}{\epsilon N}.
$
Similarly, take $j_2 = \frac{4b}{\delta}\log N$ for $V_2(\bold{q})$. Together with Lemma \ref{lemma:lyapunov-firstK}, Lemma \ref{lemma:geometric-tail} shows that 
\begin{equation}
\Pr\left\{V_2(\bold{q}) \geq B_2+2\nu_{\max}j_2\right\} \leq \left(1+\frac{\delta}{b}\right)^{-j_2} \leq \exp\left(-\frac{j_2\delta}{2b}\right)=N^{-2}.
\end{equation}
We complete the proof by noticing that 
$
2\nu_{\max}j_2 = \frac{2}{N} \cdot \frac{24\mu_1b^3}{\mu_K\epsilon} \leq \frac{\chi}{\epsilon N}.
$
\end{proof}

\paragraph{\textbf{Completing the Whole Proof}}
Finally, combining Lemma \ref{lemma:firstK-collapse} with Lemma  \ref{lemma:firstK-key-term} help us complete the proof. To see why, recall that it remains to bound 
\begin{equation}\label{eq:firstK-key-term}
\expect{\mathbbm{1}\left\{\sum_{m=1}^K C_m(\bar{\bold{Q}}) \geq \eta + \frac{1}{N}\right\} g'\left(\sum_{m=1}^K C_m(\bar{\bold{Q}})\right)(\lambda+\mu_1\delta-W(\bar{\bold{Q}}))}. 
\end{equation}
Let event $\set{D} = \{V_1(\bar{\bold{Q}}) \leq B_1 + \frac{\chi}{\epsilon N}\} \cap \{V_2(\bar{\bold{Q}}) \leq B_2 + \frac{\chi}{\epsilon N}\}$. It holds that 
\begin{align}
(\ref{eq:firstK-key-term}) &\leq \mathbb{E}\left[\mathbbm{1}\left\{\sum_{m=1}^K C_m(\bar{\bold{Q}}) \geq \eta + \frac{1}{N}\right\} g'\left(\sum_{m=1}^K C_m(\bar{\bold{Q}})\right)(\lambda+\mu_1\delta-W(\bar{\bold{Q}})) \middle| \set{D}\right]+g'(b)\mu_1(1+\delta)\Pr\{\bar{\set{D}}\} \notag \\
&\leq  \mathbb{E}\left[\mathbbm{1}\left\{\sum_{m=1}^K C_m(\bar{\bold{Q}}) \geq \eta + \frac{1}{N}\right\} g'\left(\sum_{m=1}^K C_m(\bar{\bold{Q}})\right)(\lambda+\mu_1\delta-W(\bar{\bold{Q}})) \middle| \set{D}\right] + \frac{2b}{\delta N^2}(1+\delta) \label{eq:firstK-rem1}
\end{align}
where the first inequality is by the law of total probability and the fact that $g'(x)$ is a positive increasing function, that $\sum_{m=1}^K C_m(\bold{q}) \leq b$ for all possible $\bold{q}$, and that $\lambda \leq \mu_1$, and the second inequality is by Lemma \ref{lemma:firstK-collapse} that shows $\Pr\{\bar{\set{D}}\} \leq \frac{2}{N^2}.$

Therefore, it is sufficient to bound the first term in (\ref{eq:firstK-rem1}). The following lemma shows that this term is indeed non-positive. 
\begin{lemma}\label{lemma:firstK-enoughWork}
For any $\bold{q}$ such that $V_1(\bold{q}) \leq B_1 + \frac{\chi}{\epsilon N}$ and $V_2(\bold{q}) \leq B_2 + \frac{\chi}{\epsilon N}$, it holds that 
\begin{equation}\label{eq:firstK-enoughWork}
\mathbbm{1}\left\{\sum_{m=1}^K C_m(\bold{q}) \geq \eta + \frac{1}{N}\right\} (\lambda+\mu_1\delta-W(\bold{q})) \leq 0.
\end{equation}
\end{lemma}
\begin{proof}
W.L.O.G., we can directly assume $\sum_{m=1}^K C_m(\bold{q}) \geq \eta + \frac{1}{N}$. Otherwise, (\ref{eq:firstK-enoughWork}) is already zero. Then the key step is to show $W(\bold{q}) = \sum_{m=1}^K \mu_m s_{m,1}(\bold{q}) \geq \lambda + \mu_1\delta$. By the definition of $V_1(\bold{q})$ in (\ref{eq:lyapunov-firstK}), since $\sum_{m=1}^K C_m(\bold{q}) \geq \eta + \frac{1}{N}$, it holds that $V_1(\bold{q}) = \sum_{m=1}^{K-1} C_m^* - \sum_{m=1}^{K-1} s_{m,1}(\bold{q})$. Furthermore, as $V_1(\bold{q}) \leq B_1 + \frac{\chi}{\epsilon N}$ and $C_m^* = \alpha_m$ for $m < K$, it satisfies 
\begin{equation}
\sum_{m=1}^{K-1} s_{i,1}(\bold{q}) \geq \sum_{m=1}^{K-1} \alpha_m - (B_1 + \frac{\chi}{\epsilon N}).
\end{equation}
Since $s_{m,1}(\bold{q}) \leq \alpha_m$ for all $m$, the total departure rate of servers of the first $K-1$ types is at least
\begin{equation}\label{eq:firstK-work-firstK-1}
\sum_{m=1}^{K-1} \mu_m s_{m,1}(\bold{q}) \geq \sum_{m=1}^{K-1} \mu_m \alpha_m - \mu_1\left(B_1+\frac{\chi}{\epsilon N}\right).
\end{equation}
Then for $s_{K,1}(\bold{q})$, recall the definition of $V_2(\bold{q})$ in (\ref{eq:lyapunov-firstK-1}). To show that $V_2(\bold{q})$ is equal to the second term in its definition, note that 
\[
B_2 + 3\tau_{1K}\bar{\delta} = \frac{1}{2}\epsilon + \tau_{1K}\delta+3\tau_{1K}^2\delta \leq \frac{1}{2}+\frac{2\tau_{1K}\epsilon}{3b^2} \leq \epsilon.
\]
Then since $\sum_{m=1}^K C_m(\bold{q}) \geq \sum_{m=1}^K C_m^*+\epsilon+\frac{1}{N}$, it holds $\sum_{m=1}^K C_m(\bold{q}) \geq \sum_{m=1}^K C_m^* + B_2 + 3\tau_{1K}\bar{\delta}$. Therefore, $V_2(\bold{q})$ is equal to $\sum_{m=1}^K C_m^* + B_2 + 3\tau_{1K}\bar{\delta}-\sum_{m=1}^K s_{m,1}(\bold{q})$, the second term in (\ref{eq:lyapunov-firstK-1}). By assumption, $V_2(\bold{q}) \leq B_2 + \frac{\chi}{\epsilon N}$. As a result, 
\begin{equation}
\sum_{m=1}^K s_{m,1}(\bold{q}) \geq \sum_{m=1}^K C_m^* + 3\tau_{1K}\bar{\delta} - \frac{\chi}{\epsilon N},
\end{equation}
and 
\begin{equation}\label{eq:firstK-work-firstK}
s_{K,1}(\bold{q}) \geq C_K^* + 3\tau_{1K}\bar{\delta} - \frac{\chi}{\epsilon N}
\end{equation}
because $s_{m,1}(\bold{q}) \leq \alpha_m = C_m^*$ for $m < K$. From (\ref{eq:firstK-work-firstK-1}) and (\ref{eq:firstK-work-firstK}), it holds
\begin{align}
W(\bold{q}) = \sum_{m=1}^{K-1} \mu_ms_{m,1}(\bold{q})+\mu_Ks_{K,1}(\bold{q}) 
&\geq \sum_{m=1}^{K-1}\mu_m\alpha_m+\mu_K C_K^* + 3\mu_K\tau_{1K}\bar{\delta}-\mu_1B_1 - 2\frac{\mu_1\chi}{\epsilon N} \\
&\geq \lambda + 2\frac{\mu_1^2}{\mu_K}\delta-\frac{192\mu_1^2b^3}{\mu_K \epsilon N}\ln(N) \geq \lambda + \mu_1\delta \label{eq:firstK-work-overflow}
\end{align}
where the last inequality is because $\mu_1 > \mu_K$, and $\frac{\mu_1^2}{\mu_K}\delta \geq \frac{192\mu_1^2\ln(N)}{\mu_K\epsilon N}b^3$ by Assumption \ref{as:buffer}. The inequality (\ref{eq:firstK-work-overflow}) immediately implies the desired result. 
\end{proof}
To conclude the proof of Lemma \ref{lemma: njobs-firstK}, by Lemma \ref{lemma:firstK-key-term}, the bound in (\ref{eq:firstK-rem1}) and Lemma \ref{lemma:firstK-enoughWork}, it holds
\begin{equation}
\expect{h\left(\sum_{m=1}^K C_m(\bar{\mathbf{Q}})\right)} \leq \frac{2b}{\delta N^2}(1+\delta) + \frac{38b^2\tau_{1K}}{\epsilon N} \leq \frac{12b^3\tau_{1K}}{\epsilon N^2} + \frac{2b}{N^2} + \frac{38b^2\tau_{1K}}{\epsilon N} \leq \frac{52b^2\tau_{1K}}{\epsilon N}.
\end{equation}
\end{proof}

\subsection{Bound for the Remaining Servers}
Since Lemma \ref{lemma: njobs-firstK} only bounds the mean number of jobs in servers of the first $K$ types, we need the following result for the remaining servers in the system. This result shows that very few jobs will be served by servers of the last $M-K$ types of jobs. Note that if $K = M$, then Lemma \ref{lemma: njobs-firstK} already bounds the mean number of jobs in the system. 

\begin{lemma}\label{lemma:njobs-lastM-K}
Suppose $K< M$. Under Assumption \ref{as:buffer} and Assumption \ref{as:dense}, if $N$ is sufficiently large, the expected number of jobs in servers of the last $M-K$ types divided by $N$ is bounded as 
\begin{equation}
\expect{\sum_{m=K+1}^M C_m(\bar{\mathbf{Q}})} \leq \frac{\tilde{d_2}b}{\mu_M}+2\sqrt{\frac{5\tau_{1M}b\ln N}{ N}}+8b^2\sqrt{\frac{26\tau_{1K}\tau_{1M}}{\hat{\beta}\epsilon N}}.
\end{equation}
if the routing policy is either JFSQ or JFIQ.
\end{lemma}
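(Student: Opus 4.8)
The plan is to bound $V_3(\mathbf{q}):=\sum_{m=K+1}^M C_m(\mathbf{q})$, the scaled number of jobs held by the slow servers (types $K+1,\dots,M$), by a \emph{conditional} drift argument: on a good event $\set{E}$ on which enough type-$\le K$ servers are idle, $V_3$ has a strictly negative drift once it is moderately large, while off $\set{E}$ its drift is at worst $O(\mu_1)$, and $\Pr\{\bar{\mathbf{Q}}\notin\set{E}\}$ is small thanks to Lemma~\ref{lemma: njobs-firstK}. Concretely, set $\set{E}=\{\mathbf{q}\colon\sum_{m=1}^K C_m(\mathbf{q})\le C^*+\hat\beta/2\}$. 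A short computation from $\sum_{m=1}^K\mu_m C_m^*=\lambda$ and $\mu_m\ge\mu_K$ (for $m\le K$) gives $\sum_{m=1}^K\alpha_m-C^*=\alpha_K-C_K^*\ge\hat\beta$; combined with $s_{m,1}(\mathbf{q})\le C_m(\mathbf{q})$, this shows that on $\set{E}$ the set $\set{I}$ of idle type-$\le K$ servers satisfies $|\set{I}|\ge N(\hat\beta-\hat\beta/2)=Np_2$ and $\set{I}\subseteq\set{R}_K$. Meanwhile, Lemma~\ref{lemma: njobs-firstK} and Markov's inequality (using $\epsilon\le\hat\beta/4$) give $\Pr\{\bar{\mathbf{Q}}\notin\set{E}\}\le\frac{208\tau_{1K}b^2}{\hat\beta\epsilon N}$.

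Next I would establish the conditional drift. Under both JFSQ and JFIQ, an arriving job is routed to a slow server only if its port has no idle type-$\le K$ neighbour: an idle fast server has queue length $0$ (hence shortest queue) and beats every slow server on speed, so it is strictly preferred by either policy's tie-break. Therefore, on $\set{E}$, the rate of arrivals entering slow servers is at most $\sum_{\ell\notin\set{N}_{\set{R}}(\set{I})}\lambda_\ell=D_{\set{I}}\le N\tilde d_2$ by the second bullet of Assumption~\ref{as:dense}, contributing at most $\tilde d_2$ to $GV_3$. The departure rate from slow servers is $\sum_{m>K}\mu_m s_{m,1}(\mathbf{q})\ge\mu_M\sum_{m>K}s_{m,1}(\mathbf{q})\ge\frac{\mu_M}{b}V_3(\mathbf{q})$ (because $C_m\le b\,s_{m,1}$). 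Hence $GV_3(\mathbf{q})\le\tilde d_2-\frac{\mu_M}{b}V_3(\mathbf{q})\le-\gamma$ whenever $\mathbf{q}\in\set{E}$ and $V_3(\mathbf{q})\ge B:=\frac{b(\tilde d_2+\gamma)}{\mu_M}$, while $GV_3(\mathbf{q})\le f_{\max}\le\lambda\le\mu_1=:\xi$ always. With $\nu_{\max}=1/N$, Lemma~\ref{lemma:geometric-tail} then gives, for every positive integer $j$, $\Pr\{V_3(\bar{\mathbf{Q}})\ge B+2j/N\}\le\left(\frac{\mu_1}{\mu_1+\gamma}\right)^{j}+\left(\frac{\mu_1}{\gamma}+1\right)\Pr\{\bar{\mathbf{Q}}\notin\set{E}\}$.

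To finish, choose $j=\lceil\frac{2\mu_1}{\gamma}\ln N\rceil$ so that $\left(\frac{\mu_1}{\mu_1+\gamma}\right)^{j}\le1/N$ (valid once $\gamma\le\mu_1$, which holds for large $N$ since Assumption~\ref{as:buffer} forces $\gamma\to0$), and use $V_3\le b$ to obtain $\expect{V_3}\le B+\frac{4\mu_1\ln N}{\gamma N}+b\left(\frac{\mu_1}{\gamma}+1\right)\Pr\{\bar{\mathbf{Q}}\notin\set{E}\}+O(b/N)$. Substituting the bound on $\Pr\{\bar{\mathbf{Q}}\notin\set{E}\}$, the right-hand side becomes $\frac{\tilde d_2 b}{\mu_M}+\frac{\gamma b}{\mu_M}+\frac{1}{\gamma}\cdot\frac{\mu_1}{N}\left(4\ln N+\frac{416\tau_{1K}b^3}{\hat\beta\epsilon}\right)+O(b/N)$; picking $\gamma$ to balance $\frac{\gamma b}{\mu_M}$ against the $1/\gamma$ term and applying $\sqrt{a+c}\le\sqrt a+\sqrt c$ yields $\frac{\tilde d_2 b}{\mu_M}+4\sqrt{\frac{\tau_{1M}b\ln N}{N}}+8b^2\sqrt{\frac{26\tau_{1K}\tau_{1M}}{\hat\beta\epsilon N}}+O(b/N)$, and since $4\le2\sqrt5$ and the $O(b/N)$ term is of lower order (again by Assumption~\ref{as:buffer}), this is at most the claimed bound once $N$ is sufficiently large.

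The crux is the conditional drift step: one must argue precisely that the tie-breaking rules of JFSQ and JFIQ force any spill-over onto slow servers to originate at ports disconnected from \emph{every} idle fast server, and then marry this to the well-connectedness bound $D_{\set{I}}\le N\tilde d_2$; the subsequent geometric-tail calculation and the optimization over $\gamma$ are mechanical, though the constants must be tracked carefully to recover the stated coefficients.
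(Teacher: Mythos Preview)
Your proposal is correct and follows essentially the same route as the paper: the same Lyapunov function $V_3=\sum_{m>K}C_m$, the same conditioning event $\set{E}=\{\sum_{m\le K}C_m\le C^*+\hat\beta/2\}$ with $\Pr\{\bar{\mathbf Q}\notin\set{E}\}$ bounded via Lemma~\ref{lemma: njobs-firstK}, the same drift split (arrival rate into slow servers $\le N\tilde d_2$ on $\set{E}$ by Assumption~\ref{as:dense}; departure rate $\ge\frac{\mu_M}{b}V_3$), and the same application of Lemma~\ref{lemma:geometric-tail}. The only cosmetic difference is that the paper fixes the threshold $B_3$ (and hence $\gamma=\frac{B_3\mu_M}{b}-\tilde d_2$) at the outset to the value that balances the two error terms, whereas you carry $\gamma$ as a free parameter and optimize it at the end; the resulting constants coincide.
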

\begin{proof}
To prove this result, let us consider the Lyapunov  function $V_3(\bold{q})=\sum_{m=K+1}^M C_m(\bold{q})$. Then by showing that this function has a negative drift when outside of a region, we can obtain a bound on its expectation. To do so, define $B_3$ as 
\begin{equation}\label{eq:def-B3}
B_3=\frac{1}{\mu_M}\left(\tilde{d}_2b+\sqrt{\mu_1\mu_M\left(\frac{5b\ln(N)}{N}+\frac{416\tau_{1K} b^4}{\hat{\beta}\epsilon N}\right)}\right).
\end{equation}
Let $\set{E}_K=\{\bold{q}\colon \sum_{m=1}^K C_m(\bold{q}) \leq C^*+\frac{\hat{\beta}}{2}\}$. It holds that $\bar{\bold{Q}}$ lies in $\set{E}_K$ with high probability by the following lemma whose proof is in the appendix.
\begin{lemma} \label{lemma:prob-serverK-explode}
For any $\Delta \geq \frac{\hat{\beta}}{2}$, it holds $\Pr\{\sum_{m=1}^K C_m(\bar{\bold{Q}}) > C^*+\Delta\} \leq \frac{104\tau_{1K}b^2}{\Delta \epsilon N}.$
\end{lemma}

By Lemma \ref{lemma:prob-serverK-explode}, it holds that
$\Pr\{\bar{\bold{Q}} \not \in \set{E}_K\} \leq \frac{208\tau_{1K}b^2}{\hat{\beta}\epsilon N}.$ Then it is natural to discuss the drift of $V_3(\bold{q})$ when it is greater than $B_3$ by conditioning on whether $\bold{q}$ is in $\set{E}_K$ or not. The result is summarized in this lemma, and the proof is in the appendix.
\begin{lemma} \label{lemma:serverM-negativeDrift}
When $V_3(\bold{q}) \geq B_3$, it holds that
\begin{itemize}
	\item if $\bold{q} \in \set{E}_K$, the drift is bounded as $GV_3(\bold{q}) \leq -\frac{B_3\mu_M}{b}+\tilde{d}_2$;
	\item if $\bold{q} \not \in \set{E}_K$, the drift is bounded as $GV_3(\bold{q}) \leq \mu_1.$
\end{itemize}
\end{lemma}
We now apply Lemma \ref{lemma:geometric-tail}. Under the notation of that lemma, it holds $\nu_{\max} = \frac{1}{N}, f_{\max} \leq \mu_1$ for $V_3(\bold{q})$. Let $\gamma \coloneqq\frac{B_3\mu_M}{b}-\tilde{d}_2$, and take $j_3 = \frac{2\mu_1\ln(N)}{\gamma}$. Applying Lemma \ref{lemma:geometric-tail} and using Lemma \ref{lemma:serverM-negativeDrift}, it satisfies that 
\begin{equation}
\Pr\left\{V_3(\bar{\bold{Q}}) > B_3 + \frac{2j_3}{N}\right\} \leq \left(1+\frac{\gamma}{\mu_1}\right)^{-j_3}+\left(\frac{\mu_1}{\gamma}+1\right)\Pr\{\bold{q} \not \in \set{E}_K\}
\leq N^{-2}+\frac{416\mu_1\tau_{1K}b^2}{\beta \epsilon N}
\end{equation}
where the last inequality is because $\gamma < \mu_1$ when $N$ is sufficiently large. Furthermore, the expecation of $V_3(\bar{\bold{Q}})$ can be bounded as 
\begin{align}
\expect{V_3(\bar{\bold{Q}})} &\leq \expect{V_3(\bar{\bold{Q}}) \middle| V_3(\bar{\bold{Q}}) \leq B_3 + \frac{2j_3}{N}} + \expect{V_3(\bar{\bold{Q}}) \middle| V_3(\bar{\bold{Q}}) > B_3 + \frac{2j_3}{N}} \Pr\left\{V_3(\bar{\bold{Q}}) > B_3 + \frac{2j_3}{N}\right\} \\
&\leq B_3+\frac{4\mu_1\ln(N)}{\gamma N}+b\left(N^{-2}+\frac{416\mu_1\tau_{1K}b^2}{\beta \epsilon N}\right) \\
&\leq B_3 + \frac{5\mu_1\ln(N)}{\gamma N}+\frac{416\mu_1\tau_{1K}b^3}{\hat{\beta}\epsilon\gamma N}.
\end{align}
The definition of $B_3$ in (\ref{eq:def-B3}) and that of $\gamma$ immediately give the desired result. 
\end{proof}

\subsection{Throughput Guarantee and the Proof of Theorem \ref{thm:finite-bound}}
The next lemma provides a bound on the blocking probability, and thus characterizes the effective throughput of the system. Due to space limitations, the reader is referred to the appendix for the proof.
\begin{lemma}\label{lemma:blocking}
Under Assumptions \ref{as:buffer} and \ref{as:dense}, the probability $p_{\set{B}}$ that an arrival of job is blocked is bounded as 
\begin{equation}\label{eq:bound-lastM-K}
p_{\set{B}} \leq \frac{\tilde{d_2}}{\lambda} + \frac{52\tau_{1K}b^2}{\epsilon N}. \tag{\ref{eq:finite-block}}
\end{equation}
\end{lemma}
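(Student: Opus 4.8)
The plan is to relate the blocking probability to the rate at which arriving jobs see servers that are either ``far away'' in the graph or already at full buffer. A job arriving at port $\ell$ is blocked only if \emph{every} server in $\set{N}_L(\ell)$ has queue length exactly $b$. Under JFSQ (resp.\ JFIQ) this can happen only if the job is either (a) routed under a port that has no connection to any server in $\set{R}_K$ with spare capacity, or (b) routed to a server of type $>K$; the latter is controlled because few jobs go to slow servers, while the former is controlled by the connectivity assumption. Concretely, let $\set{I}(\mathbf{q}) = \{r \in \set{R}_K : q_r < b\}$ be the set of type-$\le K$ servers with room. I would first argue that if $\ell \in \set{N}_{\set{R}}(\set{I}(\mathbf{q}))$, then an arrival at $\ell$ is \emph{not} blocked (it has at least one reachable type-$\le K$ server with spare capacity, and JFSQ/JFIQ would pick it over a full server). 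Hence the instantaneous blocking rate at state $\mathbf{q}$ is at most $\sum_{\ell \notin \set{N}_{\set{R}}(\set{I}(\mathbf{q}))} \lambda_\ell = D_{\set{I}(\mathbf{q})}$.

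The second step is to split on the size of $\set{I}(\mathbf{q})$. If $|\set{I}(\mathbf{q})| \ge N p_2 = N\hat\beta/2$, the second bullet of Assumption \ref{as:dense} gives $D_{\set{I}(\mathbf{q})} \le N\tilde d_2$, so the blocking rate is at most $N\tilde d_2$; dividing by the total arrival rate $\lambda_\Sigma = N\lambda$ contributes the $\tilde d_2/\lambda$ term. If instead $|\set{I}(\mathbf{q})| < N\hat\beta/2$, then the number of type-$\le K$ servers that are \emph{full} exceeds $N(\sum_{m=1}^K\alpha_m - \hat\beta/2)$, which forces $\sum_{m=1}^K C_m(\mathbf{q}) \ge \sum_{m=1}^K s_{m,b}(\mathbf{q}) \cdot 1 \ge \sum_{m=1}^K\alpha_m - \hat\beta/2$. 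Recalling $C^* = \sum_{m=1}^K C_m^* \le \sum_{m=1}^K \alpha_m$ and $\hat\beta = \beta\sum_{m=1}^K\alpha_m$, a short computation shows this event implies $\sum_{m=1}^K C_m(\mathbf{q}) \ge C^* + \epsilon$ (using $\epsilon \le \hat\beta/4$ and the relation between $C^*$ and $\sum\alpha_m$), so by Lemma \ref{lemma: njobs-firstK} combined with Markov's inequality its steady-state probability is at most $\tfrac{52\tau_{1K}b^2}{\epsilon N}$; on this event we bound the (normalized) blocking rate crudely by $\lambda$, contributing the second term after dividing by $\lambda$.

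Putting these together via PASTA (Poisson arrivals see time averages), the blocking probability equals the steady-state expectation of the normalized blocking rate divided by $\lambda$, which is at most $\tilde d_2/\lambda + \tfrac{52\tau_{1K}b^2}{\epsilon N}$. The main obstacle I anticipate is the first step: carefully verifying that under both JFSQ and JFIQ an arrival at a port reaching some server in $\set{I}(\mathbf{q})$ is genuinely never blocked — for JFIQ this requires noting that if any reachable server is idle (hence has room) the job is placed there, and if none is idle the job is placed at a random reachable server, so one must instead argue directly that ``all reachable servers full'' is the only blocking scenario and that this is disjoint from ``some reachable server in $\set{I}(\mathbf{q})$'', which is immediate once stated correctly but needs care about the tie-breaking/random-choice clause. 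The rest is bookkeeping with Assumption \ref{as:dense} and the already-established Lemma \ref{lemma: njobs-firstK}.
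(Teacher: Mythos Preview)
Your overall strategy matches the paper's exactly: write the blocking probability via PASTA as a steady-state expectation, observe that an arrival at port $\ell$ is blocked only if every server in $\set N_L(\ell)$ is full (so the blocking rate in state $\mathbf q$ is at most $D_{\set I(\mathbf q)}$ with $\set I(\mathbf q)=\{r\in\set R_K:q_r<b\}$), then split into a ``good'' case where $|\set I(\mathbf q)|\ge Np_2$ (apply Assumption~\ref{as:dense}) and a ``bad'' case bounded via Lemma~\ref{lemma: njobs-firstK}/Lemma~\ref{lemma:prob-serverK-explode}. Your discussion of why JFIQ does not cause trouble is also fine, since blocking is defined as ``all reachable servers full'' regardless of the routing rule.

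The only real gap is quantitative, and it prevents you from landing on the stated constant $52\tau_{1K}b^2/(\epsilon N)$. From $|\set I(\mathbf q)|<N\hat\beta/2$ you deduce $\sum_{m\le K}C_m(\mathbf q)\ge C^*+\epsilon$ (in fact $\ge C^*+\hat\beta/2$), but applying Markov to $\mathbb E[(\sum C_m-C^*-\epsilon)^+]\le 52\tau_{1K}b^2/(\epsilon N)$ then only yields a probability bound of order $\tau_{1K}b^2/(\hat\beta\,\epsilon N)$, with an unwanted $\hat\beta^{-1}$. The paper avoids this by splitting instead on the \emph{absolute} threshold $\sum_{m\le K}C_m(\mathbf q)\le 3$: on the good side, $\sum_{m\le K}s_{m,b}\le 3/b\le 1/2$ (using $b\ge 6$) forces $|\set I(\mathbf q)|\ge Np_2$ so Assumption~\ref{as:dense} applies; on the bad side, since $C^*\le 1$ one has $\sum_{m\le K}C_m>C^*+2$, and Lemma~\ref{lemma:prob-serverK-explode} with $\Delta=2$ gives exactly $52\tau_{1K}b^2/(\epsilon N)$. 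If you either adopt this threshold, or strengthen your bad-case implication using $C_m\ge b\,s_{m,b}$ together with $b\ge 6$ and $C^*\le 1$, your argument goes through with the correct constant.
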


Wrapping up above lemmas, we can conclude the proof of Theorem \ref{thm:finite-bound}.
\begin{proof}[Proof of Theorem \ref{thm:finite-bound}]
The first result and third result in Theorem \ref{thm:finite-bound} corresponds to Lemma \ref{lemma: njobs-firstK} and \ref{lemma:blocking}. For the second result, notice that Lemma \ref{lemma: njobs-firstK} implies 
\begin{equation} \label{eq:bound-firstK}
\expect{\sum_{m=1}^K C_m(\bar{\mathbf{Q}})} \leq C^*+\epsilon+\frac{52\tau_{1K}b^2}{\epsilon N}.
\end{equation}
Then combining (\ref{eq:bound-firstK}) and (\ref{eq:bound-lastM-K}) in Lemma \ref{lemma:njobs-lastM-K} and the assumption that $\tilde{d}_2 \leq \frac{\epsilon \mu_K}{2b}$ in Assumption \ref{as:dense}, it holds
\[
\begin{aligned}
\expect{\sum_{m=1}^M C_m(\bar{\mathbf{Q}})} &= \expect{\sum_{m=1}^K C_m(\bar{\mathbf{Q}})} + \expect{\sum_{m=K+1}^M C_m(\bar{\mathbf{Q}})} \\
&\leq C^*+\epsilon+\frac{52\tau_{1K}b^2}{\epsilon N} + \frac{\tilde{d_2}b}{\mu_M}+2\sqrt{\frac{5\tau_{1M}b\ln N}{ N}}+8b^2\sqrt{\frac{26\tau_{1K}\tau_{1M}}{\hat{\beta}\epsilon N}} \\
&\leq C^*+\left(1+\frac{\mu_K}{2\mu_M}\right)\epsilon + 2\sqrt{\frac{5\tau_{1M}b\ln N}{N}}+60b^2\sqrt{\frac{26\tau_{1K}\tau_{1M}}{\hat{\beta}\epsilon N}},
\end{aligned}
\]
which is exactly (\ref{eq:bound-total}).
\end{proof}

\section{Proof of The Random Graph Results}
In this section, we prove Theorem \ref{thm:random-graph}. Since similar proof holds for Theorem \ref{thm:random-graph-uniform}, we provide that proof in the appendix.
\paragraph{\textbf{Proof Sketch}} The result is proved by showing that almost every pair of large enough subsets of $\set{L}, \set{R}$ shares edges between the two sets because of the random graph structure. To show this fact, we first bound the probability that two given subsets are disconnected. Then the union bound concludes the proof since the total number of pairs of subsets is given by $2^{L+N}.$
\subsection{Proof of Theorem \ref{thm:random-graph}}
\begin{proof}
Recall the definition of $p_1,p_2,\tilde{d}_1,\tilde{d}_2$ in Assumption \ref{as:dense}. W.L.O.G., assume $Np_j$ is an integer for $j = 1,2$. Otherwise, we can raise $p_j$ to satisfy this condition since the size of a subset must be an integer. Suppose that we generate a bipartite graph $G$ as in Theorem \ref{thm:random-graph}. Let $\set{C}_j$ be the event that $G$ violates the $j-$th condition in Assumption \ref{as:dense}. We bound $\Pr\{\set{C}_j\}$ separately. To simplify the notation, let us denote $\set{R}^1 = \set{R}_{K-1}, \set{R}^2 = \set{R}_K$. And let us write $p_{\ell,r}$ be the probability that a port $\ell$ connects with a server $r$ in the graph $G$.

First, define $\set{D}_{\set{K},\set{I}}$ as the event that a subset $\set{K}$ of $\set{L}$ has no edges with a subset $\set{I}$ of $\set{R}$. Then for $j = 1,2$,  
\begin{equation}
\set{C}_j = \bigcup_{\substack{\set{K}\subseteq \set{L} \colon \sum_{\ell \in \set{K}} \lambda_{\ell} > N\tilde{d}_j \\ \set{I}\subseteq \set{R}^j \colon |\set{I}| \geq Np_j}} \set{D}_{\set{K},\set{I}}.
\end{equation}
Fix $j \in \{1,2\}$. Let $\set{K}$ be any subset of $\set{L}$ satisfying $\sum_{\ell \in \set{K}} \lambda_{\ell} > N\tilde{d}_j$, and $\set{I}$ be any subset of $\set{R}^j$ satisfying $|\set{I}| \geq Np_j$. We want to bound $\Pr\{\set{D}_{\set{K},\set{I}}\}$. Notice that by Assumption \ref{as:dense}, it holds $p_1 < p_2, \tilde{d}_1 < \tilde{d}_2$, and $\frac{\tilde{d}_2}{H_2} \geq \frac{\tilde{d}_1}{H_1}$. Then by the construction of $G$, if there is a port $\ell$ in $\set{K}$ such that $\lambda_{\ell} \geq N\tilde{d_j}{H_j}$, this port must be connected to all servers in $\set{R}^j$, meaning that $\Pr\{\set{D}_{\set{K},\set{I}}\} = 0$. Therefore, we can assume that such port does not exist. Recall that $z_{\ell,r}$ is the probability that port $\ell$ is connected with server $r$. It holds that 
\begin{equation}
\Pr\{\set{D}_{\set{K},\set{I}}\} = \prod_{\ell \in \set{K}}\prod_{r \in \set{I}} (1-z_{\ell, r}) 
\leq \exp\left(-\sum_{\ell \in \set{K}}\sum_{r \in \set{I}} z_{\ell, r}\right) 
\leq \exp\left(-\sum_{\ell \in \set{K}}\sum_{r \in \set{I}} \frac{\lambda_{\ell}H_j}{N\tilde{d}_j}\right),
\end{equation}
and thus
\begin{equation}
\Pr\{\set{D}_{\set{K},\set{I}}\} \leq  \exp\left(-|\set{I}|\frac{\sum_{\ell \in \set{K}} \lambda_{\ell}H_j}{N\tilde{d_j}}\right) \\
\leq \exp(-H_jNp_j)
\leq 2^{-2(N+L)}.
\end{equation}
The first inequality is because $\ln(1+x) \leq x$ for $x > -1$, and $z_{\ell,r} < 1$. The second inequality is from the construction of $G$. The third inequality is from the definition of $\set{K}$ and $\set{I}$. It thus holds that $\Pr\{\set{C}_j\} \leq 2^{N+L}2^{-2(N+L)} = 2^{-(N+L)}$ by the union bound. Use the union bound once again, it holds $\Pr\{\set{C}_1 \cup \set{C}_2\} \leq 2^{-(N+L-1)}$.

For the total number of edges used in $G_N$, recall the definition of $p_1,p_2,\tilde{d}_1,\tilde{d}_2$ for a particular system in Assumption \ref{as:dense}, and $H_1,H_2$ in Theorem \ref{thm:random-graph}. It holds that $\frac{\tilde{d}_1}{H_1} = O(\frac{\epsilon^2}{b^5(N+L)/N})$, and $\frac{\tilde{d}_2}{H_2} = O(\frac{\epsilon^2}{b^5(N+L)/N})$. Note that there are four types of connections on graph $G_N$ as per Theorem \ref{thm:random-graph}, we bound their numbers of edges separately. First, the number of ports with $\lambda_{\ell} \geq N\frac{\tilde{d}_1}{H_1}$ is bounded by $\frac{N\mu_1H_1}{N\tilde{d}_1} = O(\frac{b^5(N+L)}{\epsilon^2}N)$ because $\lambda_{\Sigma} \leq N\mu_1$. Therefore, the number of connections from them is bounded by $O(\frac{b^5(N+L)}{\epsilon^2})$ since there are $N$ servers. The same result holds for ports with $\lambda_{\ell} \geq N\frac{\tilde{d}_2}{H_2}$. Now for the remaining ports, the expected number of edges is upper bounded by 
$
2\sum_{\ell \in \set{L}}\frac{\lambda_{\ell}}{N}\left(\frac{H_1}{\tilde{d}_1}+\frac{H_2}{\tilde{d}_2}\right)N = O\left(\frac{b^5(N+L)}{\epsilon^2}\right).
$
Then to sum up, the expected number of edges in $G_N$ scales as $O\left(\frac{b^5(N+L)}{\epsilon^2}\right)$.
\end{proof}

\section{Simulation Results}
In this section, we present simulation results for JFSQ and JFIQ. In particular, the following two settings are explored:
\begin{itemize}
\item we compare the mean response time of JFSQ, JFIQ with a recent paper \cite{gardner2020scalable} in a fixed-size system;
\item we study the convergence of JFSQ and JFIQ on a random bipartite graph in the many-server regime.
\end{itemize}
We will also compare our policies with JSQ and JIQ where we assume that ties in those policies are broken at random. Detailed results are as follows.
\subsection{Performance in a Fixed-Size System}
We first study one particular setting as in \cite{gardner2020scalable}. There are $100$ servers with fast service rate $\frac{25}{9}$, and $400$ servers with slow service rate $\frac{5}{9}$. Jobs arrive into the system in a Poisson process of rate $\lambda_{\Sigma}$, and can be routed to any server. We simulate an infinite buffer system by setting the buffer size at each server to $10^6$. We compare JFSQ and JFIQ with JSQ, JIQ and JSQ-(2,2) introduced in \cite{gardner2020scalable}. JSQ-(2,2) is similar to Pod, and it is shown in \cite{gardner2020scalable} to perform better than other algorithms in light traffic. We refer the reader to the appendix for a detailed description of JSQ-(2,2). Beside, the lower bound result in Theorem \ref{thm:lower-bound} is plotted as a baseline. Define the system load to be $\frac{\lambda_{\Sigma}}{500}$. By increasing the system load, we can obtain Fig. \ref{fig:traffic-range}.
\begin{figure}
 \centering
 \scalebox{0.45}{\input{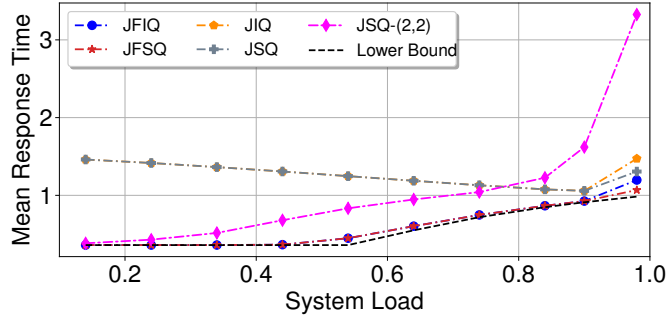}}
 \caption{The Mean Response Time of Different Routing Policies in a Fixed-Size System with Increasing System Load}
 \label{fig:traffic-range}
\end{figure}
Clearly, Fig. \ref{fig:traffic-range} shows that JFSQ and JFIQ can achieve consistently fast mean response (very close to the lower bound) ranging from light traffic to heavy traffic (the system load is around $0.98$). For other policies, JSQ-(2,2) performs well in light traffic. However, JIQ and JSQ could have relatively poor response time in light traffic, although JIQ is shown to have asymptotically zero waiting time \cite{stolyar_2015}.

\subsection{Convergence in the Many-Server Regime}\label{sec:random-manyserver}
Next we explore the convergence behavior of JFSQ and JFIQ when there are job-server constraints. In particular, suppose there are $N$ servers in the system. We assume there are four types of servers with the same amount of each type. The service time distributions are all exponentially distributed, but with different service rate such that $\mu_i = 2^{-i+1}, i=1,2,3,4.$ We also study the convergence of JSQ and JIQ. JSQ-(2,2) introduced above is not studied because it is designed for systems with two classes of servers. 

The number of ports is set as $L = N^{1.5}$. The arrival rate to each port is assumed to be homogeneous, and is equal to $\frac{\lambda_{\Sigma}}{L}$ with $\lambda_{\Sigma} = 0.9\sum_{i=1}^4 \frac{N\mu_i}{4}$. Denote the system load as $\lambda = 0.9$. In the corresponding bipartite graph, each port connects with each server with probability $\frac{2\sqrt{\ln N}}{N(1-\lambda)}\ln \frac{1}{1-\lambda}$ according to Theorem \ref{thm:random-graph-uniform}. The buffer size in this case is set as $b=5$ because in many-server systems, we expect there to be little queueing and one should not need a large buffer size. Fig. \ref{fig:random} presents the convergence behavior of the mean-response time for JFSQ, JFIQ, JIQ and JSQ.  
\begin{figure}
	\centering
	\scalebox{0.45}{\input{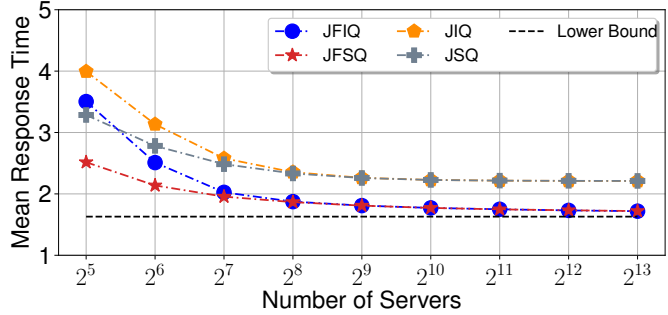}}
	\caption{The Mean Response Time of Different Routing Policies on Increasing-Sized Random Bipartite Graphs}
	\label{fig:random}
\end{figure}
It is interesting to notice that both JIQ and JFIQ suffer from slow mean response time when the system is small. But when the number of servers is $2^{11} = 2048$, the mean response time of JFSQ and JFIQ is very close to the lower bound. Such requirement on the number of servers is fine since modern cloud platforms can easily possess tens of thousands of servers \cite{amvrosiadis_2018}. On the other hand, both JSQ and JIQ also converge as $N$ increases. Nevertheless, their mean response time is not optimal because they neglect server heterogeneity. Note that when the system is large, the blocking probability is nearly zero, even with a small buffer size. The convergence of the blocking probability is provided in the appendix. The setting is also extended to hyper-exponential service time distribution. For this new distribution, we show that although JFSQ and JFIQ have slow mean response times initially, their convergence behavior is similar to Fig. \ref{fig:random} when $N$ increases. We refer the reader to the appendix for details. 
\section{Conclusion}
In this paper, we studied the performance of two load balancing policies, JFSQ and JFIQ for load balancing on a bipartite graph. For a "well-connected" bipartite graph, we presented a bound on the mean response time for finite-size systems, which implies asymptotic optimality in the mean response time in both the many-server regime and the sub Halfin-Whitt regime. A by-product of this paper is a novel technique for bouding the distance to the mean-field limit of heterogeneous load balancing systems. In the analysis, we established three state-space collapse results to show that the system behaves similar to its mean-field limit. We also presented how to construct a sparse "well-connected" bipartite graph, where each left node is only connected to $\omega(\frac{1}{(1-\lambda)^2})$ right nodes when arrival rates are heterogeneous, and only $\omega(\frac{1}{1-\lambda}\ln \frac{1}{1-\lambda})$ nodes for homogeneous servers, given that the buffer size is a constant, and the number of left nodes is at least that of right nodes. However, it is unknown whether these two bounds are tight, which we leave for future research. 

\paragraph{Acknowledgment:} The work of Wentao Weng was conducted during a visit to the Coordinated Science Lab, UIUC during 2020.
\bibliographystyle{abbrvnat}
\bibliography{wentao-references}

\appendix
\section{Proof of Proposition \ref{thm:lower-bound}}
\renewcommand{\thmnamerestated}{Proposition~\ref{thm:lower-bound}[Restated]}
\newtheorem*{thmrestatedfilling}{\thmnamerestated}{\bfseries}{\itshape}
\begin{thmrestatedfilling}
Suppose that the buffer size is infinite, i.e. $b = \infty$. Let $\bar{Z}$ be the random variable denoting the service time of one job. Then for any stable policy, the mean number of jobs in the system is lower bounded by $NC^*$, and 
\begin{equation}
\expect{\bar{Z}} \geq \frac{C^*}{\lambda}.
\end{equation}
\end{thmrestatedfilling}
\begin{proof}
For any $m \in \{1,\cdots,M\}$, let $I_m$ denote the probability that an arrival of jobs is scheduled to a type-$m$ server in steady state. Also, recall that $\bar{s}_{m,1}$ is defined as a steady-state random variable denoting the number of busy type-$m$ servers divided by $N$. Then because of stability and work conservation law, it holds that for all $m \leq M$,
\begin{equation}\label{eq:flow-balance-each}
\lambda_{\Sigma}I_m = N\mu_m\expect{\bar{S}_{m,1}}.
\end{equation}
In particular, 
\begin{equation}\label{eq:flow-balance-total}
\lambda = \sum_{m=1}^M \frac{\lambda_{\Sigma}I_m}{N} = \sum_{m=1}^M \mu_m\expect{\bar{S}_{m,1}}
\end{equation}
since $\sum_{m=1}^M I_m = 1$. Now notice that the mean service time of jobs is given by 
\begin{equation}
\expect{\bar{Z}} =\sum_{m=1}^M \frac{I_m}{\mu_m}=\sum_{m=1}^M \frac{\expect{\bar{S}_{m,1}}}{\lambda}
\end{equation}
since the service time at type-$m$ servers is exponentially distributed with mean $\frac{1}{\mu_m}$, and $I_m$ satisfies (\ref{eq:flow-balance-each}). To obtain a lower bound of $\expect{\bar{Z}}$, consider the following linear programming.
\begin{align*}
	\min\quad        & \frac{1}{\lambda}\sum_{m=1}^M x_m   \\
	\text{s.t.\quad} & 
	\lambda=\sum_{m=1}^M \mu_m x_m,~~ m=1,\dots,M \\
	& 0 \leq x_m \leq \alpha_m,~~m=1,\dots,M 
\end{align*}
where $x_m$ is an analog of $\expect{\bar{S}_{m,1}}$, and the objective value is a lower bound of $\expect{\bar{Z}}$ because of (\ref{eq:flow-balance-total}). Then since only the sum of $x_m$ matters, and $\mu_1 \geq \cdots \geq \mu_M$, the optimal solution is exactly given by $x^*_1 = \alpha_1,\cdots,x^*_{K-1}=\alpha_{K-1},x^*_K=\frac{\lambda-\sum_{m=1}^{K-1}\mu_m x_m}{\mu_K}, x^*_m = 0$ for $m > K$. Then it is clear that $\expect{\bar{Z}} \geq \frac{1}{\lambda}\sum_{m=1}^M x^*_m = \frac{C^*}{\lambda}.$
\end{proof}

\section{Proof of Lemmas in Section \ref{sec:upper-bound}}

\subsection{Proof of Lemma \ref{lemma:zero-drift}}
\renewcommand{\thmnamerestated}{Lemma~\ref{lemma:zero-drift}[Restated]}
\begin{thmrestatedfilling}
The expectation $\expect{Gg(\sum_{m=1}^K C_m(\bar{\mathbf{Q}}))}$ is equal to $0$.
\end{thmrestatedfilling}
\begin{proof}
To simplify the notation, denote $V(\bold{q}) = g(\sum_{m=1}^K C_m(\bold{q}))$ for a state $\bold{q}$. Now that since the system is stable (because of the assumption of finite buffers), there is a unique stationary distribution $\pi_{\bold{q}}$ that solves the balancing equation such that for every $\bold{q}$,
\begin{equation}
\pi_{\bold{q}}\sum_{\bold{q'}} r_{\bold{q},\bold{q'}}=\sum_{\bold{q'}} \pi_{\bold{q}'}r_{\bold{q'},\bold{q}}
\end{equation}
where $r_{\bold{q},\bold{q'}}$ is the transition rate from $\bold{q}$ to $\bold{q'}$.
Now that $V(\bold{q})$ is bounded (as $\sum_{m=1}^K C_m(\bold{q} \leq b$), it holds
\[
\begin{aligned}
\expect{GV(\bar{\bold{Q}})} &= \sum_{\bold{q}} \pi_{\bold{q}}\sum_{\bold{q'}} r_{\bold{q},\bold{q'}}(V(\bold{q'})-V(\bold{q})) \\
&= -\sum_{\bold{q}}\pi_{\bold{q}}\sum_{\bold{q'}} V(\bold{q})r_{\bold{q},\bold{q'}} + \sum_{\bold{q}}\pi_{\bold{q}}\sum_{\bold{q}'}r_{\bold{q},\bold{q'}}V(\bold{q'}) \\
&= -\sum_{\bold{q}} V(\bold{q})\sum_{\bold{q'}} \pi_{\bold{q}}r_{\bold{q},\bold{q'}} + \sum_{\bold{q}} V(\bold{q})\sum_{\bold{q'}} \pi_{\bold{q'}}r_{\bold{q'},\bold{q}} \\
&= 0. 
\end{aligned}
\]
\end{proof}

\subsection{Proof of Lemma \ref{lemma:firstK-key-term}}
\renewcommand{\thmnamerestated}{Lemma~\ref{lemma:firstK-key-term}[Restated]}
\begin{thmrestatedfilling}
It holds that
\begin{equation}
\expect{h\left(\sum_{m=1}^K C_m(\bar{\mathbf{Q}})\right)} \leq \expect{\mathbbm{1}\left\{\sum_{m=1}^K C_m(\bar{\bold{Q}}) \geq \eta + \frac{1}{N}\right\} g'\left(\sum_{m=1}^K C_m(\bar{\bold{Q}})\right)(\lambda+\mu_1\delta-W(\bar{\bold{Q}}))} + \frac{38b^2\tau_{1K}}{\epsilon N}. \tag{\ref{eq:refined-key-term}}
\end{equation}
\end{thmrestatedfilling}
\begin{proof}
The idea is to utilize the result that $\expect{h\left(\sum_{m=1}^K C_m(\bar{\mathbf{Q}})\right)} \leq (\ref{eq:initial-value}) (\ref{eq:arrival-term})+(\ref{eq:departure-term})$, and to expand (\ref{eq:arrival-term}) and (\ref{eq:departure-term}) by Taylor's expansion. Consider three cases of state $\bold{q}$. 
\begin{itemize}
\item First, if $\sum_{m=1}^K C_m(\bold{q}) \leq \eta - \frac{1}{N}$, then $g(\sum_{m=1}^K C_m(\bold{q})-\frac{1}{N}), g(\sum_{m=1}^K C_m(\bold{q})), g(\sum_{m=1}^K C_m(\bold{q}) + \frac{1}{N})$ are all zero. This case has no contribution to the expectation;
\item second, if $\sum_{m=1}^K C_m(\bold{q}) \in (\eta - \frac{1}{N},\eta+\frac{1}{N})$, by first-order Taylor's expansion, there exists some $\tilde{\xi}_{\bold{q}},\tilde{\eta}_{\bold{q}} \in (\eta - \frac{2}{N},\eta+\frac{2}{N})$, such that 
\[
\begin{aligned}
g\left(\sum_{m=1}^K C_m(\bold{q}) + \frac{1}{N}\right) - g\left(\sum_{m=1}^K C_m(\bold{q})\right) &= \frac{1}{N}g'(\tilde{\xi}_{\bold{q}}), \\
g\left(\sum_{m=1}^K C_m(\bold{q}) - \frac{1}{N}\right) - g\left(\sum_{m=1}^K C_m(\bold{q})\right) &= \frac{-1}{N}g'(\tilde{\eta}_{\bold{q}}); 
\end{aligned}
\]
\item third, if $\sum_{m=1}^K C_m(\bold{q}) \geq \eta+\frac{1}{N}$, by second-order Taylor's expansion, there exists some $\xi_{\bold{q}},\eta_{\bold{q}}$, such that 
\[
\begin{aligned}
g\left(\sum_{m=1}^K C_m(\bold{q}) + \frac{1}{N}\right) - g\left(\sum_{m=1}^K C_m(\bold{q})\right) &= \frac{1}{N}g'\left(\sum_{m=1}^K C_m(\bold{q})\right) + \frac{2}{N^2}g''(\xi_{\bold{q}}), \\
g\left(\sum_{m=1}^K C_m(\bold{q}) - \frac{1}{N}\right) - g\left(\sum_{m=1}^K C_m(\bold{q})\right) &= -\frac{1}{N}g'\left(\sum_{m=1}^K C_m(\bold{q})\right) + \frac{2}{N^2}g''(\eta_{\bold{q}}). 
\end{aligned}
\]

\end{itemize} 
Then it holds that 
\begin{align}
&\mspace{20mu}\expect{h\left(\sum_{m=1}^K C_m(\bar{\mathbf{Q}})\right)} \\
 &\leq (\ref{eq:initial-value})+ (\ref{eq:arrival-term})+(\ref{eq:departure-term}) \\
&= \expect{\mathbbm{1}\left\{\sum_{m=1}^K C_m(\bar{\bold{Q}}) \geq \eta + \frac{1}{N}\right\}\left(g'\left(\sum_{m=1}^K C_m(\bar{\bold{Q}})\right)(\lambda+\mu_1\delta-W(\bar{\bold{Q}}))\right)} \\
&\mspace{20mu}+\expect{\mathbbm{1}\left\{\sum_{m=1}^K C_m(\bar{\bold{Q}}) \geq \eta + \frac{1}{N}\right\}\left(\frac{2}{N}\left(\lambda g''(\xi_{\bar{\bold{Q}}})+W(\bar{\bold{Q}})g''(\eta_{\bar{\bold{Q}}})\right)\right)} \label{eq:second-gradient-bound}\\
&\mspace{20mu}+\expect{\mathbbm{1}\left\{\sum_{m=1}^K C_m(\bar{\bold{Q}}) \in (\eta-\frac{1}{N},\eta + \frac{1}{N})\right\}\left(g'\left(\sum_{m=1}^K C_m(\bar{\bold{Q}})\right)(\mu_1\delta)+\lambda g'(\tilde{\xi}_{\bar{\bold{Q}}})-W(\bar{\bold{Q}})g'(\tilde{\eta}_{\bar{\bold{Q}}})\right)} \label{eq:first-gradient-bound}.
\end{align}
It suffices to bound (\ref{eq:second-gradient-bound}) and (\ref{eq:first-gradient-bound}). First, note that $|g''(x)| \leq \frac{1}{\mu_1\delta}$ for all $x$ by the explicit form of $g(x)$ in (\ref{eq:g-gradient}). It holds
\begin{equation}
(\ref{eq:second-gradient-bound}) \leq \frac{2}{N}\cdot\frac{1}{\mu_1\delta}\cdot 2\mu_1 = \frac{4}{N\delta} = \frac{24\tau_{1K}b^2}{\epsilon N}.
\end{equation}
On the other hand, to bound (\ref{eq:first-gradient-bound}), since $\sum_{m=1}^K C_m(\bar{\bold{Q}}), \tilde{\xi}_{\bold{q}},\tilde{\eta}_{\bold{q}} \in (\eta - \frac{2}{N},\eta+\frac{2}{N})$, their derivatives are all bounded by $\frac{2}{N\mu_1\delta}.$ Then
\begin{equation}
(\ref{eq:first-gradient-bound}) \leq \frac{2}{N\mu_1\delta}\cdot(\mu_1\delta+\mu_1)=\frac{2}{N}+\frac{12\tau_{1K}b^2}{\epsilon N} \leq \frac{14\tau_{1K}b^2}{\epsilon N}.
\end{equation}
Summing the above two equations completes the proof of Lemma \ref{lemma:firstK-key-term}.
\end{proof}

\subsection{Proof of Lemma \ref{lemma:lyapunov-firstK-1}}
\renewcommand{\thmnamerestated}{Lemma~\ref{lemma:lyapunov-firstK-1}[Restated]}
\begin{thmrestatedfilling}
Consider the following Lyapunov function
\begin{equation}
V_1(\bold{q}) = \min\left(\sum_{j=1}^b s_{K,j}(\bold{q})+\sum_{m=1}^{K-1}\sum_{j=2}^b s_{m,j}(\bold{q}), \sum_{m=1}^{K-1} C_m^*-\sum_{m=1}^{K-1} s_{m,1}(\bold{q})\right). \tag{\ref{eq:lyapunov-firstK-1}}
\end{equation}
It holds that if $V_1(\bold{q}) \geq B_1 \coloneqq \tau_{1K}\delta$, then $GV_1(\bold{q}) \leq \frac{-\mu_1\delta}{2b}$.
\end{thmrestatedfilling}

\begin{proof}
Since $V_1(\bold{q}) \geq B_1$ by assumption, both of the following two properties holds:
\begin{align}
\sum_{j=1}^b s_{K,j}(\bold{q})+\sum_{m=1}^{K-1}\sum_{j=2}^b s_{m,j}(\bold{q}) &\geq B_1  \label{eq:V1-largeQ}; \\
\sum_{m=1}^{K-1} s_{m,1}(\bold{q}) &\leq \sum_{m=1}^{K-1} C_m^* - B_1. \label{eq:V1-idleK}
\end{align}
Let $\set{T}_{1,1}$ be the first term in $V_1(\bold{q})$, and $\set{T}_{1,2}$ be the second term. First, by definition, 
\begin{align}
GV_1(\bold{q}) &= \sum_{\bold{q'}} r_{\bold{q},\bold{q'}} \left(V_1(\bold{q'})-V_1(\bold{q})\right) \notag \\
&= \sum_{\bold{q'},\text{arrival}} r_{\bold{q},\bold{q'}} \left(V_1(\bold{q'})-V_1(\bold{q})\right) \label{eq:V1-arrival} \\
&\mspace{20mu}+\sum_{\bold{q'},\text{departure}} r_{\bold{q},\bold{q'}} \left(V_1(\bold{q'})-V_1(\bold{q})\right) \label{eq:V1-departure}
\end{align}
where we separate transitions by identifying those caused by a job arrival from those caused by a job departure. Bounding (\ref{eq:V1-arrival}) and (\ref{eq:V1-departure}) can then bound $GV_1(\bold{q})$. Next we consider two cases corresponding to whether $V_1(\bold{q})$ is equal to $\set{T}_{1,1}$ or to $\set{T}_{1,2}$.

Suppose that $\set{T}_{1,1} \leq \set{T}_{1,2}$. then in this case,
\begin{align}
(\ref{eq:V1-departure}) &\leq -\left(\sum_{j=1}^b \mu_K(s_{K,j}(\bold{q})-s_{K,j+1}(\bold{q}))+\sum_{m=1}^{K-1}\sum_{j=2}^b\mu_m(s_{m,j}(\bold{q})-s_{m,j+1}(\bold{q}))\right) \label{eq:V1-case1-dep-e1}\\
&= -\left(\mu_Ks_{K,1}(\bold{q})+\sum_{m=1}^K \mu_ms_{m,2}(\bold{q})\right)\label{eq:V1-case1-dep-e2} \\
&\leq -\frac{B_1\mu_K}{b} \leq \frac{-\mu_1\delta}{b}.\label{eq:V1-case1-dep-e3}  
\end{align}
The first inequality (\ref{eq:V1-case1-dep-e1}) is because $V_1(\bold{q}) = \tau_{1,1}$, and only jobs departing from servers of type $K$ and servers of types less than $K$ with queue length at least $2$ can affect the value of $V_1(\bold{q})$. The first equation (\ref{eq:V1-case1-dep-e2}) comes from the fact that $s_{m,b+1}=0$ for all $m$. The last inequality is from (\ref{eq:V1-largeQ}) and the non-decreasing property 
\[
s_{m,1}(\bold{q}) \geq s_{m,2}(\bold{q}) \geq \cdots s_{m,b}(\bold{q}) 
\]
for all $m$.

On the other hand, to bound (\ref{eq:V1-arrival}), notice that $V_1(\bold{q})$ can increase only when a job arrival is routed to some servers of types at least $K$. Then clearly,
\begin{equation}
(\ref{eq:V1-arrival}) \leq \sum_{\ell=1}^L \frac{1}{N}\lambda_{\ell}\cdot \mathbbm{1}\left\{\text{an arrival to port }\ell \text{ is not routed to an idle server of types less than }k \mid \bold{q}\right\}. \label{eq:V1-case1-arrival-e1}
\end{equation}
However, by (\ref{eq:V1-idleK}), the number of idle servers of types less than $K$ is at least
\[
N\sum_{m=1}^{K-1} \left(C_m^*-s_{m,1}(\bold{q})\right) \geq NB_1 = \frac{N\epsilon}{6b^2}.
\]
Let $\set{I}$ be the set of idle servers of types less than $K$. Since $|\set{I}| \geq \frac{N\epsilon}{6b^2}$, Assumption \ref{as:dense} guarantees that $\sum_{\ell \not \in N_{\set{R}}(\set{I})} \lambda_{\ell} \leq N\tilde{d}_1 = \frac{N\epsilon \mu_K}{12b^3}.$ That is to say, the total arrival rates of ports not connected with servers in $\set{I}$ is bounded by $N\tilde{d}_1$. Now since our routing policy is either JFSQ or JFIQ, for those ports connected with $\set{I}$, a job arrival must be routed to one server in $\set{I}$ because servers in $\set{I}$ are idle, and are faster than other idle servers not in $\set{I}$. Therefore, 
\begin{equation}
(\ref{eq:V1-case1-arrival-e1}) \leq \frac{1}{N}\cdot \frac{N\epsilon \mu_K}{12b^3} \leq \frac{\mu_1\delta}{2b}. \label{eq:V1-case1-arrival-final}
\end{equation}
With (\ref{eq:V1-case1-dep-e3}) and (\ref{eq:V1-case1-arrival-final}), it holds $GV_1(\bold{q}) \leq \frac{-\mu_1\delta}{2b}$ when $\set{T}_{1,1} \leq \set{T}_{1,2}$.

For the second case where $\set{T}_{1,1} \geq \set{T}_{1,2}$, it holds
\begin{equation}
(\ref{eq:V1-departure}) \leq \sum_{m=1}^{K-1} \mu_m\left(s_{m,1}(\bold{q})-s_{m,2}(\bold{q})\right)
\end{equation}
since $V_1(\bold{q})$ increases only when a job departs from a server of type less than $K$ and only with this single job in the server. Also, we can see
\begin{align}
(\ref{eq:V1-arrival}) &\leq -\frac{1}{N}\sum_{\ell=1}^L \lambda_{\ell}\cdot \mathbbm{1}\left\{\text{an arrival to port }\ell \text{ is routed to an idle server of type less than }k \mid \bold{q}\right\} \\
&\leq \frac{1}{N}(-\lambda_{\Sigma} + N\tilde{d}_1) = -\lambda + \tilde{d}_1.
\end{align}
The first inequality is because for arrival transitions, only jobs arriving to idle servers of types less than $k$ can change $V_1(\bold{q})$, and their arrivals will all decrease $V_1(\bold{q})$ by $\frac{1}{N}$ by the definition of $\set{T}_{1,2}$. The second inequality is derived from the same argument of (\ref{eq:V1-case1-arrival-final}). Therefore, it holds that
\begin{align}
GV_1(\bold{q}) = (\ref{eq:V1-arrival})+(\ref{eq:V1-departure}) \leq -\lambda + \tilde{d_1}+\sum_{m=1}^{K-1} \mu_m\left(s_{m,1}(\bold{q})-s_{m,2}(\bold{q})\right)
&\leq -\lambda + \tilde{d_1}+\sum_{m=1}^{K-1}\mu_m\alpha_m - \mu_K B_1 \\
&\leq -\mu_KB_1 + \tilde{d_1} \\
&\leq -\frac{\mu_1\delta}{2b}
\end{align}
because of (\ref{eq:V1-idleK}) and the assumption that $\lambda \geq \sum_{m=1}^{K-1} \mu_m\alpha_m.$

Therefore, the above discussion proves that whenever $V_1(\bold{q}) \geq B_1$, it holds $GV_1(\bold{q}) \leq -\frac{\mu_1\delta}{2b}.$
\end{proof}

\subsection{Proof of Lemma \ref{lemma:lyapunov-firstK}}

\renewcommand{\thmnamerestated}{Lemma~\ref{lemma:lyapunov-firstK}[Restated]}
\begin{thmrestatedfilling}
Consider the following Lyapunov function
\begin{equation}
	V_2(\bold{q}) = \min\left(\sum_{m=1}^K \sum_{j=2}^b s_{m,j}(\bold{q}), \sum_{m=1}^K C_m^* + B_2 + 3\tau_{1K}\bar{\delta}-\sum_{m=1}^K s_{m,1}(\bold{q})\right) \tag{\ref{eq:lyapunov-firstK}}
\end{equation}
where $\bar{\delta} \coloneqq \tau_{1K}\delta$, and $B_2 \coloneqq \frac{1}{2}\epsilon + \bar{\delta}.$ It holds that if $V_2(\bold{q}) \geq B_2$, then $GV_2(\bold{q}) \leq -\frac{\mu_1\delta}{b}.$
\end{thmrestatedfilling}
\begin{proof}
Let $\set{T}_{2,1}$ be the first term in $V_2(\bold{q})$, and $\set{T}_{2,2}$ be the second term. Since $V_2(\bold{q}) \geq B_2$, both the following hold:
\begin{align}
\sum_{m=1}^K \sum_{j=2}^b s_{ij}(\bold{q}) &\geq B_2 \label{eq:V2-largeQ};\\
\sum_{m=1}^K s_{m,1}(\bold{q}) &\leq \sum_{m=1}^K C_m^i + 3\mu\bar{\delta} \label{eq:V2-idleK}.
\end{align}
By definition,
\begin{align}
GV_2(\bold{q}) &= \sum_{\bold{q'},\text{arrival}} r_{\bold{q},\bold{q'}}\left(V_2(\bold{q'})-V_2(\bold{q})\right) \label{eq:V2-arrival} \\
&\mspace{20mu}+\sum_{\bold{q'},\text{departure}} r_{\bold{q},\bold{q'}}\left(V_2(\bold{q'})-V_2(\bold{q})\right) \label{eq:V2-departure}. 
\end{align}
We then consider two cases. First, suppose that $\set{T}_{2,1} \leq \set{T}_{2,2}$. Then similar to the proof of Lemma \ref{lemma:lyapunov-firstK-1}, using (\ref{eq:V2-largeQ}), it holds that
\begin{align}
(\ref{eq:V2-departure}) &\leq -\frac{1}{N}\sum_{m=1}^K \sum_{j=2}^b N\mu_m\left(s_{m,j}(\bold{q})-s_{m,j+1}(\bold{q})\right) \\
&= -\frac{1}{N}\sum_{m=1}^K N\mu_ms_{m,2}(\bold{q}) \\
&\leq -\frac{B_2\mu_K}{b} = -\frac{\epsilon \mu_K}{2b}-\frac{\mu_1\delta}{b}.
\end{align}
On the other hand, we have 
\begin{equation}
(\ref{eq:V2-arrival}) \leq \sum_{\ell=1}^L \frac{1}{N}\lambda_{\ell}\cdot \mathbbm{1}\left\{\text{an arrival to port }\ell \text{ is not routed to an idle server of types } \leq k \mid \bold{q}\right\}. \label{eq:V2-case1-arrival-e1}
\end{equation}
Notice that by (\ref{eq:V2-idleK}), the number of idle servers of types no greater than $K$ satisfies that 
\begin{align}
&\mspace{20mu}N\left(\sum_{m=1}^K \alpha_m - \sum_{m=1}^K s_{m,1}(\bold{q})\right) \\
&\geq N\left(\sum_{m=1}^K \alpha_m - \sum_{m=1}^K C_m^* - 3\tau_{1,K}\bar{\delta}\right) \\
&= N\left(\alpha_K - \frac{\lambda - \sum_{m=1}^{K-1} \mu_m \alpha_m}{\mu_K} - 3\tau_{1,K}\bar{\delta}\right) \\
&= N\cdot \frac{\sum_{m=1}^K \mu_m\alpha_m - \lambda}{\mu_K}-3N\tau_{1K}\bar{\delta} \\
&= \frac{N}{\mu_K}\left(\beta\sum_{m=1}^{K} \mu_m \alpha_m - 3\mu_1\tau_{1K}\delta\right) \\
&\geq N\left(\hat{\beta}-3\tau_{1K}\frac{\epsilon}{6b^2}\right) \geq \frac{N\hat{\beta}}{2} \label{eq:V2-case1-arrival-e2}
\end{align}
where (\ref{eq:V2-case1-arrival-e2}) is because $b^2 \geq \tau_{1K}$ by Assumption \ref{as:buffer}, and $\hat{\beta} = \beta\sum_{m=1}^K \alpha_m$, and $\mu_1 > \cdots > \mu_K$. 

Let $\set{I}$ be the set of idle servers of types no greater than $K$. It then holds $|\set{I}| \geq \frac{N\hat{\beta}}{2}$. Then By Assumption \ref{as:dense}, the total arrival rate of ports not connected with $\set{I}$ is bounded by $N\tilde{d}_2$. Since the routing policy is either JFSQ or JIFQ, jobs arriving to ports connecting with $\set{I}$ must be routed to servers in $\set{I}$. Therefore, it holds $(\ref{eq:V2-case1-arrival-e1}) \leq \tilde{d}_2 \leq \frac{\mu_K\epsilon}{2b}$. Then in this case, we know 
\[
GV_2(\bold{q}) = (\ref{eq:V2-arrival}) + (\ref{eq:V2-departure}) \leq -\frac{\epsilon \mu_K}{2b}-\frac{\mu_1\delta}{b}+\frac{\mu_K\epsilon}{2b} \leq -\frac{\mu_1\delta}{b}.
\] 
Now we consider the second case, $\set{T}_{2,1} \geq \set{T}_{2,2}$. Similarly, it holds
$
(\ref{eq:V2-departure}) \leq \sum_{m=1}^K \mu_m\left(s_{m,1}(\bold{q})-s_{m,2}(\bold{q})\right),
$
and 
\begin{equation}
\begin{aligned}
(\ref{eq:V2-arrival}) &\leq -\frac{1}{N}\sum_{\ell=1}^L \frac{1}{N}\lambda_{\ell}\cdot \mathbbm{1}\left\{\text{an arrival to port }\ell \text{ is routed to an idle server of types } \leq k \mid \bold{q}\right\} \\
&\leq -\lambda + \tilde{d_2}
\end{aligned}
\end{equation}
where the last inequality follows the same argument as in the first case. Then it holds
\begin{align}
GV_2(\bold{q}) &\leq \sum_{m=1}^K \mu_m s_{m,1}(\bold{q})-\sum_{m=1}^K \mu_m s_{m,2}(\bold{q}) - \lambda + \tilde{d_2} \\
&\leq \sum_{m=1}^{K-1} \mu_m\alpha_m + \mu_K(C_K^*+3\mu_1\bar{\delta})-\lambda-\frac{\mu_KB_2}{b-1}+\frac{\mu_K\epsilon}{2b} \\
&\leq 3\mu_1\delta - \frac{\mu_KB_2}{b-1}+\frac{\epsilon}{2b} \\
&\leq 3\mu_1\delta - \frac{\mu_K\epsilon}{2(b-1)}+\frac{\mu_K\epsilon}{2b}-\frac{\mu_1\delta}{b} \\
&\leq -\frac{\mu_1\delta}{b}.
\end{align}
The last inequality is because 
\[
\frac{\mu_K\epsilon}{2(b-1)}-\frac{\mu_K\epsilon}{2b}=\frac{\mu_K\epsilon}{2b^2} \geq 3\mu_1\frac{\mu_K\epsilon}{6\mu_1b^2} = 3\mu_1\delta.
\]
Therefore, we complete the proof of Lemma \ref{lemma:lyapunov-firstK}.
\end{proof}

\subsection{Proof of Lemma \ref{lemma:prob-serverK-explode}}
\renewcommand{\thmnamerestated}{Lemma~\ref{lemma:prob-serverK-explode}[Restated]}
\begin{thmrestatedfilling}
For any $\Delta \geq \frac{\hat{\beta}}{2}$, it holds $\Pr\{\sum_{m=1}^K C_m(\bar{\bold{Q}}) > C^*+\Delta\} \leq \frac{104\tau_{1K}b^2}{\Delta \epsilon N}.$
\end{thmrestatedfilling}
\begin{proof}
By Lemma \ref{lemma: njobs-firstK}, it holds that 
\begin{align}
\Pr\left\{\sum_{m=1}^K C_m(\bar{\bold{Q}}) > C^*+\Delta\right\} &= \Pr\left\{\sum_{m=1}^K C_m(\bar{\bold{Q}}) - C^* - \frac{\hat{\beta}}{4}> \Delta - \frac{\beta}{4}\right\} \\
&\leq \Pr\left\{\sum_{m=1}^K C_m(\bar{\bold{Q}}) - C^* - \frac{\hat{\beta}}{4}> \frac{1}{2}\Delta\right\} \\
&\leq \frac{\expect{\max\left(\sum_{m=1}^K C_m(\bar{\bold{Q}}) - C^* - \frac{\hat{\beta}}{4}, 0\right)}}{\frac{1}{2}\Delta} \\
&\leq \frac{208\tau_{1K}b^2}{\Delta \epsilon N}
\end{align}
since $\epsilon \leq \frac{\hat{\beta}}{4}$ by assumption.
\end{proof}

\subsection{Proof of Lemma \ref{lemma:serverM-negativeDrift}}
\renewcommand{\thmnamerestated}{Lemma~\ref{lemma:serverM-negativeDrift}[Restated]}
\begin{thmrestatedfilling}
When $V_3(\bold{q}) \geq B_3$, it holds that
\begin{itemize}
	\item if $\bold{q} \in \set{E}_K$, the drift is bounded as $GV_3(\bold{q}) \leq -\frac{B_3\mu_M}{b}+\tilde{d}_2$;
	\item if $\bold{q} \not \in \set{E}_K$, the drift is bounded as $GV_3(\bold{q}) \leq \mu_1.$
\end{itemize}
\end{thmrestatedfilling}
\begin{proof}
By definition, 
\begin{align}
	GV_3(\bold{q}) &= \sum_{\bold{q'}} r_{\bold{q},\bold{q'}} \left(V_3(\bold{q'})-V_3(\bold{q})\right) \notag \\
	&= \sum_{\bold{q'},\text{arrival}} r_{\bold{q},\bold{q'}} \left(V_3(\bold{q'})-V_3(\bold{q})\right) \label{eq:V3-arrival} \\
	&\mspace{20mu}+\sum_{\bold{q'},\text{departure}} r_{\bold{q},\bold{q'}} \left(V_3(\bold{q'})-V_3(\bold{q})\right). \label{eq:V3-departure}
\end{align}
Note that since $V_3(\bold{q}) \geq B_3$, and $V_3(\bold{q})=\sum_{m=K+1}^M \sum_{j=1}^b s_{m,j}(\bold{q})$, it holds that 
\begin{equation}
(\ref{eq:V3-departure}) = -\sum_{m=k+1}^M \mu_ms_{m,1}(\bold{q}) \geq -\frac{B_3\mu_M}{b}
\end{equation}
since $s_{m,1}(\bold{q}) \geq \cdots \geq s_{m,b}(\bold{q})$ and $s_{m,b+1}(\bold{q}) = 0$ for all $m$.

For (\ref{eq:V3-arrival}), we consider two cases. First, if $\bold{q} \in \set{E}_K$, the number of idle servers of types no greater than $K$ is given by
\[
\begin{aligned}
&\mspace{20mu}N\left(\sum_{m=1}^K \alpha_m - \sum_{m=1}^K s_{m,1}(\bold{q})\right) \\
&\geq N\left(\sum_{m=1}^K \alpha_m - \sum_{m=1}^K C_m(\bold{q})\right) \\
&\geq  N\left(\sum_{m=1}^K \alpha_m - C^* - \frac{\hat{\beta}}{2}\right) \\
&= N\left(\frac{\beta\sum_{m=1}^{K-1}\alpha_m\mu_m}{\mu_K}-\frac{\hat{\beta}}{2}\right) \\
&\geq N\frac{\hat{\beta}}{2}
\end{aligned}
\]
where the second inequality is because $sum_{m=1}^K C_m(\bold{q}) \leq C^*+\frac{\hat{\beta}}{2}$ when $\bold{q} \in \set{E}_K$. Then since the routing policy is either JFSQ or JFIQ, jobs arriving to ports connecting with idle servers of types no greater than $K$ must be routed to those servers. And by Assumption \ref{as:dense}, the total arrival rate of disconnected ports is bounded by $\tilde{d}_2 N$. As a result,
\begin{equation}
(\ref{eq:V3-arrival}) \leq \tilde{d}_2,
\end{equation}
showing that $GV_3(\bold{q}) \leq -\frac{B_3\mu_M}{b}+\tilde{d}_2$ when $\bold{q} \in \set{E}_K$. 

When $\bold{q} \not \in \set{E}_K$, it holds that $(\ref{eq:V3-arrival}) \leq \lambda \leq \mu_1$, and $(\ref{eq:V3-departure}) \geq 0$. Therefore, $GV_3(\bold{q}) \leq \mu_1$.
\end{proof}

\subsection{Proof of Lemma \ref{lemma:blocking}}
\renewcommand{\thmnamerestated}{Lemma~\ref{lemma:serverM-negativeDrift}[Restated]}
\begin{thmrestatedfilling}
Under Assumption \ref{as:buffer} and Assumption \ref{as:dense}, the probability $p_{\set{B}}$ that an arrival of job is blocked is bounded as 
	\begin{equation}
	p_{\set{B}} \leq \frac{\tilde{d_2}}{\lambda} + \frac{52\tau_{1K}b^2}{\epsilon N}. \tag{\ref{eq:finite-block}}
	\end{equation}
\end{thmrestatedfilling}
\begin{proof}
Denote $B_{\ell}(\bold{q}) = \mathbbm{1}\{\forall r \in N_L(\ell),~q_r = b\}$. That is, whether all neighbors of port $\ell$ are full. Then by definition,
\[
\begin{aligned}
p_{\set{B}} &= \frac{1}{\lambda_{\Sigma}}\sum_{\ell=1}^L \lambda_{\ell}\expect{B_{\ell}(\bar{\bold{Q}})} \\
&= \frac{1}{\lambda_{\Sigma}}\sum_{\ell=1}^L \lambda_{\ell}\expect{B_{\ell}(\bar{\bold{Q}})\middle| \sum_{m=1}^K C_m(\bar{\bold{Q}}) \leq 3}\Pr\left\{\sum_{m=1}^K C_m(\bar{\bold{Q}}) \leq 3\right\} \\
&\mspace{20mu} + \frac{1}{\lambda_{\Sigma}}\sum_{\ell=1}^L \lambda_{\ell}\expect{B_{\ell}(\bar{\bold{Q}})\middle| \sum_{m=1}^K C_m(\bar{\bold{Q}}) > 3}\Pr\left\{\sum_{m=1}^K C_m(\bar{\bold{Q}}) > 3\right\} \\
&\leq \frac{1}{\lambda_{\Sigma}}\sum_{\ell=1}^L \lambda_{\ell}\expect{B_{\ell}(\bar{\bold{Q}})\middle| \sum_{m=1}^K C_m(\bar{\bold{Q}}) \leq 3} + \Pr\left\{\sum_{m=1}^K C_m(\bar{\bold{Q}}) > 3\right\}.
\end{aligned}
\]
To bound $\Pr\left\{\sum_{m=1}^K C_m(\bar{\bold{Q}}) > 3\right\}$, notice that $C^* \leq 1$, so 
\[
\Pr\left\{\sum_{m=1}^K C_m(\bar{\bold{Q}}) > 3\right\} \leq \Pr\left\{\sum_{m=1}^K C_m(\bar{\bold{Q}}) > C^* + 2\right\} \leq \frac{52\tau_{1K}b^2}{\epsilon N}
\]
by Lemma \ref{lemma:prob-serverK-explode}.

Then for the case $\sum_{m=1}^K C_m(\bold{q}) \leq 3$, it holds that $\sum_{m=1}^K s_{m,b}(\bold{q}) \leq \frac{3}{b}$. Let $\set{I}$ be the set of servers of types no greater than $K$ with queue length less than $b$. Then we know $|\set{I}| \geq (1 - \frac{3}{b})N \geq \frac{\hat{\beta}}{2}N$ since $b \geq 6$. By Assumption \ref{as:dense}, the total arrival rate of ports not connected with $\set{I}$ is thus upper bounded by $N\tilde{d_2}$. As a result,
\[
p_{\set{B}} \leq \frac{1}{\lambda_{\Sigma}}\sum_{\ell=1}^L \lambda_{\ell}\expect{B_{\ell}(\bar{\bold{Q}})\middle| \sum_{m=1}^K C_m(\bar{\bold{Q}}) \leq 3} + \Pr\left\{\sum_{m=1}^K C_m(\bar{\bold{Q}}) > 3\right\} \leq \frac{\tilde{d}_2}{\lambda} + \frac{52\tau_{1K}b^2}{\epsilon N}.
\]
\end{proof}

\subsection{Proof of Corollary \ref{cor:asymptotic}}
\renewcommand{\thmnamerestated}{Corollary~\ref{cor:asymptotic}[Restated]}
\begin{thmrestatedfilling}
Suppose that $\epsilon_N$ is both $o(1)$ and $\omega(N^{-0.5}\ln(N))$, and that both Assumptions \ref{as:buffer} and \ref{as:dense} hold for $G_N$ when $N$ is sufficiently large. Then as $N \to \infty$, both JFSQ and JFIQ are asymptotically optimal, and the expected queueing delay converges to zero for both policies.
\end{thmrestatedfilling}
\begin{proof}
First since $\epsilon_N = \omega(\ln N N^{-0.5})$, there is always a $b_N$ satisfying Assumption \ref{as:buffer} when $N$ is sufficiently large. Let $\bar{\bold{Q}}_N$ be the queue-length random variable, and let $p^N_{\set{B}}$ be the blocking probability for the $N-$th system. Applying Theorem \ref{thm:finite-bound} gives 
\[
\expect{\sum_{m=1}^M C_m(\bar{\mathbf{Q}}_N)} \leq C^*+\left(1+\frac{\tau_{KM}}{2}\right)\epsilon_N+2\sqrt{\frac{5\tau_{1M}b_N\ln N}{ N}}+60b_N^2\sqrt{\frac{26\tau_{1K}\tau_{1M}}{\hat{\beta_N}\epsilon_N N}},
\]
and $p^N_{\set{B}} \leq \frac{\epsilon_N\mu_K}{2b_N\lambda}+\frac{52\tau_{1K}b_N^2}{\epsilon_N N}$ for $N$ large enough.

Since $\epsilon_N = o(1), \epsilon_N=\omega(N^{-0.5}\ln N), \hat{\beta_N} > \epsilon_N$ and $b_N$ satisfies Assumption \ref{as:buffer}, it holds that $\lim_{N \to \infty} \expect{\sum_{m=1}^M C_m(\bar{\mathbf{Q}}_N)} = C^*$. Then by Little's Law, the expected mean response time $\expect{T_N}$ of the $N-$th system is given by the mean number of jobs in the system divided by the effective arrival rate. Therefore, 
\[
\lim_{N \to \infty} \expect{T_N} = \lim_{N \to \infty} \frac{\expect{N\sum_{m=1}^M C_m(\bar{\bold{Q}}_N)}}{\lambda_{\Sigma}(1-p_{\set{B}}^N)} \leq \frac{C^*}{\lambda\left(1 - \lim_{N \to \infty} \frac{\epsilon_N\mu_K}{2b_N\lambda}+\frac{52\tau_{1K}b_N^2}{\epsilon_N N}\right)} = \frac{C^*}{\lambda},
\]
which matches the lower bound in Theorem \ref{thm:finite-bound}. Therefore, JFSQ and JFIQ are asymptotically optimal in mean response time. On the other hand, let $\expect{T_{\set{W}}^N}$ be the expected waiting time of jobs, and let $\expect{Z_N}$ be the expected service time in the $N-$th system. Then it holds $\expect{T_N} = \expect{T_{\set{W}}^N} + \expect{Z_N}$. Since $\expect{Z_N} \geq \frac{C^*}{\lambda}, \expect{T_{\set{W}}^N} 
\geq 0$, and $lim_{N \to \infty} \expect{T_N} = \frac{C^*}{\lambda}$, it holds $\lim_{N \to \infty} \expect{T_{\set{W}}^N} = 0$. As a result, JFSQ and JFIQ obtain asymptotic zero queueing delays. 
\end{proof}

\section{Proof of Random Graph Results}
Here we provide the missing proof of Theorem \ref{thm:random-graph-uniform}.
\subsection{Proof of Theorem \ref{thm:random-graph-uniform}}
\renewcommand{\thmnamerestated}{Theorem~\ref{thm:random-graph-uniform}[Restated]}
\begin{thmrestatedfilling}
Suppose that all ports share the same arrival rates, that is, $\lambda_{\ell}\equiv \bar{\lambda}$ for all $\ell \in \set{L}$. Then following the same construction of graph $G$ in Theorem \ref{thm:random-graph} but with $H_j = 6\left(-\ln{p_j}+\frac{\tilde{d_j}}{p_j\bar{\lambda}}\ln\frac{2\mu_1}{\tilde{d_j}}\right)$ for $j \in \{1,2\}$, it holds that $G$ satisfies Assumption \ref{as:dense} with probability at least $1 - 2\binom{N}{Np_1}^{-1}$. The total number of edges in $G_N$ scales as $O\left(\frac{(N+L)b^3}{\epsilon}\ln\frac{b}{\epsilon}\right)$.
\end{thmrestatedfilling}
\begin{proof}
The proof is similar to that of Theorem \ref{thm:random-graph}. Let us follow the same notation in the proof of Theorem \ref{thm:random-graph}. Fix $j \in \{1,2\}$. Similarly, let $\set{K}$ be any subset of $\set{L}$ satisfying $\sum_{\ell \in \set{K}} \lambda_{\ell} > N\tilde{d}_j$, and $\set{I}$ be any subset of $\set{R}^j$ satisfying $|\set{I}| \geq Np_j$. To bound $\Pr\{\set{D}_{\set{K},\set{I}}\}$, W.L.O.G., we can assume every port in $\set{K}$ has arrival rate less than $N\tilde{d_j}{H_j}$, otherwise $\Pr\{\set{D}_{\set{K},\set{I}}\} = 0$. Then following the same argument in the proof of Theorem \ref{thm:random-graph}, it holds
$
\Pr\{\set{D}_{\set{K},\set{I}}\} \leq \exp(-H_jNp_j).$ 

The key step is to obtain a bound on the number of pairs of feasible $\set{K},\set{I}$ so that we can use the union bound. Let $N_{\set{K}}^j, N_{\set{I}}^j$ be the amount of such sets, respectively. W.L.O.G., assume that $Np_j$ is an integer since $|\set{I}|$ must be an integer. Also, as all ports share the same arrival rate $\bar{\lambda}$, we can assume $N\tilde{d_j}/\bar{\lambda}$ is an integer since the size of $\set{K}$ must exceed this value. Then it holds that 

\begin{align}
N_{\set{K}}^j &= \binom{L}{N\tilde{d_j}/\bar{\lambda}} \leq \binom{\lceil N\mu_1 / \bar{\lambda} \rceil}{N\tilde{d_j}/\bar{\lambda}} \\
N_{\set{I}}^j &= \binom{N}{Np_j}.
\end{align}

We have the following lemma bounding a binomial number.
\begin{lemma}\label{lemma:bound-binomial}
Fix an integer $n$. For any $0 < \alpha < \frac{1}{2}$, if $\alpha n$ is an integer, then $\ln\left(\binom{n}{\alpha n}\right) \leq -3\alpha n\ln \alpha$.
\end{lemma}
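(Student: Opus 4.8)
\textbf{Proof proposal for Lemma~\ref{lemma:bound-binomial}.}
The plan is to bound $\binom{n}{\alpha n}$ by a simple exponential using the entropy estimate and then absorb the binary-entropy function into the stated form $-3\alpha n \ln\alpha$. First I would recall the standard bound $\binom{n}{k}\le \left(\frac{en}{k}\right)^k$, which follows from $\binom{n}{k}\le \frac{n^k}{k!}$ together with $k!\ge (k/e)^k$. Setting $k=\alpha n$ gives $\binom{n}{\alpha n}\le \left(\frac{e}{\alpha}\right)^{\alpha n}$, hence $\ln\binom{n}{\alpha n}\le \alpha n\bigl(1-\ln\alpha\bigr)$. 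It then remains to show $\alpha n(1-\ln\alpha)\le -3\alpha n\ln\alpha$, i.e. $1-\ln\alpha \le -3\ln\alpha$, i.e. $1 \le -2\ln\alpha$, i.e. $\ln\alpha \le -\tfrac12$, which holds precisely because $\alpha < \tfrac12 < e^{-1/2}$... wait, $\tfrac12 > e^{-1/2}\approx 0.6065$? No: $e^{-1/2}\approx 0.6065 > \tfrac12$, so $\alpha<\tfrac12$ does \emph{not} immediately give $\ln\alpha\le-\tfrac12$. I would therefore be slightly more careful: for $\alpha\in(0,\tfrac12)$ we have $-\ln\alpha > \ln 2 \approx 0.693 > \tfrac12$, so indeed $\ln\alpha < -\tfrac12$ and the inequality $1-\ln\alpha\le-3\ln\alpha$ follows after all. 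This is the only place where the hypothesis $\alpha<\tfrac12$ is used, and the constant $3$ is comfortably loose (any constant larger than $\frac{1-\ln\alpha}{-\ln\alpha}\to 1$ as $\alpha\to0$, and at most $\frac{1+\ln 2}{\ln 2}\approx 2.44$ on the whole range, would do).

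Concretely the steps in order are: (i) state and justify $k!\ge (k/e)^k$ (Stirling, or the elementary $e^k=\sum_j k^j/j! \ge k^k/k!$); (ii) deduce $\binom{n}{\alpha n}\le (e/\alpha)^{\alpha n}$ and take logarithms to get $\ln\binom{n}{\alpha n}\le \alpha n(1-\ln\alpha)$; (iii) observe $-\ln\alpha>\ln 2>\tfrac12$ for $0<\alpha<\tfrac12$, so $1-\ln\alpha\le -3\ln\alpha$; (iv) combine to obtain $\ln\binom{n}{\alpha n}\le -3\alpha n\ln\alpha$. Each step is a one-line computation.

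I do not expect a genuine obstacle here; the only subtlety is the numerical check in step (iii), where one must avoid the tempting but false shortcut "$\alpha<\tfrac12\Rightarrow-\ln\alpha>\tfrac12$" and instead invoke the correct bound $-\ln\alpha>\ln 2$. Once Lemma~\ref{lemma:bound-binomial} is in hand, it feeds directly into bounding $\ln N_{\set{K}}^j$ and $\ln N_{\set{I}}^j$ in the proof of Theorem~\ref{thm:random-graph-uniform}: one applies it with $n=\lceil N\mu_1/\bar\lambda\rceil$, $\alpha = \frac{N\tilde d_j/\bar\lambda}{n}$ for the port side and with $n=N$, $\alpha=p_j$ for the server side, and then checks that the choice $H_j = 6\bigl(-\ln p_j + \frac{\tilde d_j}{p_j\bar\lambda}\ln\frac{2\mu_1}{\tilde d_j}\bigr)$ makes $H_j N p_j$ exceed $\ln N_{\set{K}}^j + \ln N_{\set{I}}^j$ by the slack needed for the union bound $2\binom{N}{Np_1}^{-1}$, completing the argument exactly as in Theorem~\ref{thm:random-graph}.
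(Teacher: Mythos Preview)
Your proposal is correct and follows essentially the same route as the paper: bound $\binom{n}{k}\le n^k/k!$, use $e^k\ge k^k/k!$ to get $\binom{n}{\alpha n}\le (e/\alpha)^{\alpha n}$, and then absorb the factor $1-\ln\alpha$ into $-3\ln\alpha$ using $\alpha<\tfrac12$. Your numerical justification of the last step ($-\ln\alpha>\ln 2>\tfrac12$) is in fact more explicit than the paper's, which simply asserts the inequality ``because $\alpha<\tfrac12$.''
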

\begin{proof}
Let $k = \alpha n$. It holds that 
\[
\binom{n}{k} = \frac{n(n-1)\cdots (n-k+1)}{k!} \leq \frac{n^k}{k!}.
\]
We know that $e^k = \sum_{i \geq 0} \frac{k^i}{i!}$. Therefore, $\frac{k^k}{k!} \leq e^k$. It then implies that 
\[
\binom{n}{k} \leq \frac{n^k}{k!} \leq \frac{e^k n^k}{k^k} = \left(\frac{en}{k}\right)^k.
\]
As a result,
\[
\ln\left(\binom{n}{\alpha n}\right) \leq \alpha n(1 - \ln(\alpha)) \leq -3n\alpha \ln \alpha
\]
because $\alpha < \frac{1}{2}$.
\end{proof}
Now by the definition of $p_j$, $\tilde{d}_j$, it holds $p_j < \frac{1}{2}, \frac{N\tilde{d}_j/\bar{\lambda}}{\lceil N\mu_1/\bar{\lambda} \rceil} < \frac{1}{2}.$ Then by Lemma \ref{lemma:bound-binomial}, when $N$ is sufficiently large, 
\begin{equation}
\ln\left(N_{\set{K}}^j\right) \leq -3Np_j\ln p_j,~~\ln\left(N_{\set{I}}^j\right) \leq -3N\tilde{d}_j/\bar{\lambda}\ln\left(\frac{2\mu_1}{\tilde{d}_1}\right).
\end{equation}
Therefore, it holds that 
\begin{equation}
\Pr\{\set{C}_j\} \leq N_{\set{K}}^j N_{\set{I}}^j \exp(-H_j N p_j) \leq \exp\left(-Np_jH_j-3Np_j\ln p_j - 3Np_j\frac{\tilde{d}_j}{p_j\bar{\lambda}}\ln\left(\frac{2\mu_1}{\tilde{d}_j}\right)\right).
\end{equation}
By definition, $H_j = 6\left(-\ln p_j - \frac{\tilde{d}_j}{p_j\bar{\lambda}}\ln\left(\frac{2\mu_1}{\tilde{d_j}}\right)\right)$. Then we can see
\[
\Pr\{\set{C}_j\} \leq \exp(3Np_j\ln p_j) \leq \binom{N}{Np_j}^{-1}.
\]
By the union bound, it holds that
\[
\Pr\{\set{C}_1 \cup \set{C}_2\} \leq 2\binom{N}{Np_1}^{-1}.
\]
since $p_1 < p_2 < \frac{1}{2}$. Therefore, the probability that $G_N$ satisfies Assumption \ref{as:dense} is at least $1 - 2\binom{N}{Np_1}^{-1}$.

For the total number of edges used in $G_N$, consider the four types of connections on graph $G_N$ as per Theorem \ref{thm:random-graph} and Theorem \ref{thm:random-graph-uniform} where we use different $H_j$. we bound the number of edges for each type as follows. First, through some calculations, $H_j = O\left(\left(1+\frac{1}{b\bar{\lambda}}\right)\ln\left(\frac{b}{\epsilon}\right)\right)$, and $\frac{H_j}{\tilde{d}_j} = O\left(\frac{b^3\bar{\lambda}+b^2}{\epsilon \bar{\lambda}}\ln\frac{b}{\epsilon}\right)$.

Then the number of ports with $\lambda_{\ell} \geq N\frac{\tilde{d}_1}{H_1}$ is bounded by $\frac{L\bar{\lambda}H_1}{N\tilde{d}_1} = O\left(\frac{(N+L)b^3}{N\epsilon}\ln\frac{b}{\epsilon}\right)$ because $\lambda_{\Sigma} = L\bar{\lambda}$. Therefore, the number of connections from them is bounded by $O\left(\frac{(N+L)b^3}{\epsilon}\ln\frac{b}{\epsilon}\right)$ since there are $N$ servers. The same result holds for ports with $\lambda_{\ell} \geq N\frac{\tilde{d}_2}{H_2}$. Now for the remaining ports, the expected number of edges is upper bounded by 
\[
2\sum_{\ell \in \set{L}}\frac{\lambda_{\ell}}{N}\left(\frac{H_1}{\tilde{d}_1}+\frac{H_2}{\tilde{d}_2}\right)N = O\left(\frac{(N+L)b^3}{\epsilon}\ln\frac{b}{\epsilon}\right).
\]
Then to sum up, the expected number of edges in $G_N$ scales as $O\left(\frac{(N+L)b^3}{\epsilon}\ln\frac{b}{\epsilon}\right)$.
\end{proof}

\section{Additional Simulation Results}
In this section, we provide missing details in the main text and give additional simulation results.
\subsection{Description of JSQ-(2,2)}
In JSQ-(2,2)\cite{gardner2020scalable}, there are two parameters $p_F,p_S$. Then for each arrival of jobs, we find a server as follows:
\begin{enumerate}
\item sample $2$ fast servers and $2$ slow servers;
\item if there is an idle fast server, route the job to this server;
\item if there is an idle slow server, route the job to this server with probability $p_S$, and route the job to the fast server with shorter queue with probability $1 - p_S$;
\item otherwise, route the job to the fast server with shorter queue with probability $p_F$; and route the job to the slow server with shorter queue with probability $p_S$.
\end{enumerate}
We set $p_S,p_F$ to be the optimal values from Table 1 in \cite{gardner2020scalable}.
\subsection{Convergence of Blocking Probability}
Fig. \ref{fig:block-random} provides the convergence of the blocking probability following the same setting as in Section \ref{sec:random-manyserver}.
\begin{figure}
	\centering
	\scalebox{0.45}{\input{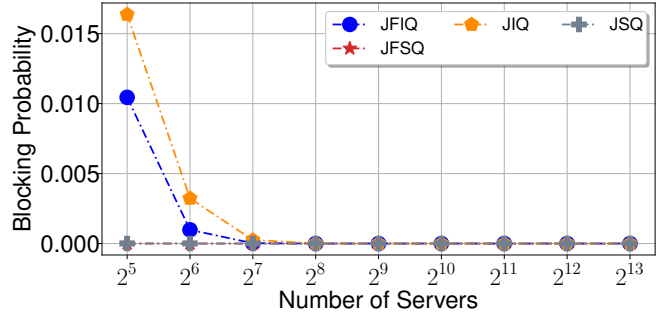}}
	\caption{The Blocking Probability of Different Routing Policies on Increasing-Sized Random Bipartite Graphs}
	\label{fig:block-random}
\end{figure}
Unlike JSQ which is shown to be throughput optimal \cite{cruise2020stability} (so is JFSQ), JIQ and JFIQ could lose the capacity of the system. As in Fig. \ref{fig:block-random}, when we set the buffer size to be $5$, the blocking probability of JIQ is around 1.5 percent, and that of JFIQ is around 1 percent. Interestingly, JFIQ seems to be more stable. Nevertheless, the blocking probability of both algorithms decreases swiftly as $N$ increases.
\subsection{Exploring More General Service Time Distribution}
We present a preliminary study here that extends results proved in this paper. Roughly speaking, we consider the same setting as in Section \ref{sec:random-manyserver}. However, we allow the service time distribution to be hyper-exponential.  

Still, suppose there are $N$ servers in the system where $N$ can scale up. Servers can be classified into four types with different service speed. Each type consists of the same amount of servers. Then let $X$ be a hyper-exponential distribution such that $X\sim \mathrm{Exp}(0.01)$ with probability $0.01$, and $X \sim \mathrm{Exp}(1)$ with probability $0.99$. The coefficient of variation of $X$ is around $7.071$, which is higher than that of an exponential distribution. Then for a type $i$ servers with $i \in \{1,2,3,4\}$, we assume that the service time of a job at this server is independently and identically distributed as $2^{i-1}X.$ Similarly, we can define the service rate of type-$i$ servers as $\mu_i = \frac{1}{2^{i-1}\expect{X}}$. Then the system load is defined as $\frac{4\lambda_{\Sigma}}{\sum_{i=1}^4 N\mu_i}$ where $\lambda_{\Sigma}$ is the total arrival rate. We can also obtain the lower bound of the mean response time as in Proposition \ref{thm:lower-bound}.

The buffer size is set as $b = 5$.  Following the same setting of ports and construction of the random graph, we obtain Fig. \ref{fig:hyperexp-random} for the mean response time of different policies, and the blocking probability is shown in Fig.\ref{fig:hyperexp-block}.
\begin{figure}
	\centering
	\scalebox{0.45}{\input{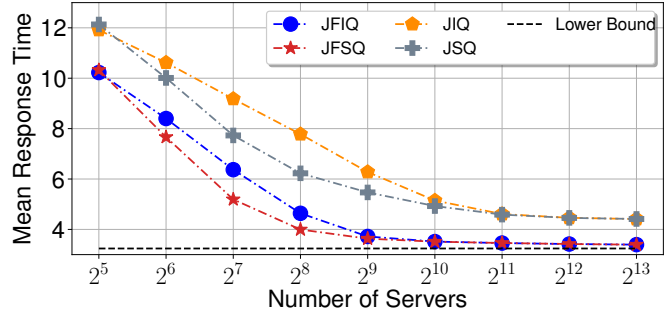}}
	\caption{The Mean Response Time of Different Routing Policies when Service Time is Hyper-Exponential}
	\label{fig:hyperexp-random}
\end{figure}
\begin{figure}
	\centering
	\scalebox{0.45}{\input{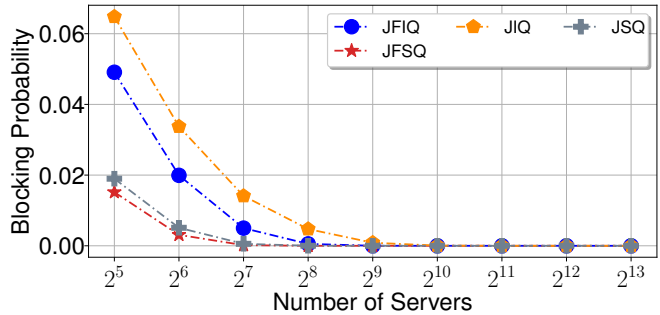}}
	\caption{The Blocking Probability of Different Routing Policies when Service Time is Hyper-Exponential}
	\label{fig:hyperexp-block}
\end{figure}
Notice that the performance of each policy degrades a lot for small systems compared with Fig. \ref{fig:random}. But when the system size scales up, both JFSQ and JFIQ have favorable mean response time, which is very close to the lower bound. It suggests that our theoretical results may hold for general distributions, which we leave for future studies.
\end{document}